\documentclass[aps,prx,superscriptaddress,nofootinbib,amsfonts,amsmath,amssymb,noeprint,reprint,longbibliography]{revtex4-2}
\usepackage{tikz}
\usetikzlibrary{decorations.pathreplacing,calligraphy,decorations.markings}
\usetikzlibrary {arrows.meta}

\definecolor{tensorcolor}{rgb}{0.65,0.77,0.95}
\definecolor{btensorcolor}{rgb}{0.65,0.50,0.69}
\definecolor{whitetensorcolor}{rgb}{0.93,0.93,0.93}
\definecolor{gtensorcolor}{rgb}{0.6,0.8,0.5}
\definecolor{operatorcolor}{rgb}{1.0,1.0,1.0}

\newcommand\singledx{1.8}

\newcommand{\GTensor}[5]{
	\begin{scope}[shift={(#1)}]
    \ifnum#5=0
		\draw[very thick, draw=red] (-#2,0) -- (#2,0);
            \draw[very thick] (0,#2) -- (0,0);
    \fi
    \ifnum#5=-1
		\draw[very thick] (0,0) -- (#2,0);
            \draw[very thick] (0,#2) -- (0,0);
    \fi
    \ifnum#5=1
		\draw[very thick] (-#2,0) -- (0,0);
            \draw[very thick] (0,#2) -- (0,0);
    \fi

    \ifnum#5=2
		\draw[very thick,draw=red] (-#2,0) -- (#2,0);
    \fi
    \ifnum#5=3
		\draw[very thick] (0,-#2) -- (0,#2);
    \fi
    \ifnum#5=4
		\draw[very thick] (-#2,0) -- (#2,0);
    \fi
    \ifnum#5=5
		\draw[very thick, draw=red] (-#2,0) -- (#2,0);
		\draw[very thick] (0,#2) -- (0,-#2);
    \fi
    \ifnum#5=6
		\draw[very thick] (-#2,0) -- (#2,0);
            \draw[very thick] (0,#2) -- (0,0);
    \fi
    \ifnum#5=7
		\draw[very thick, draw=red] (-#2,0) -- (#2,0);
            \draw[very thick] (0,-#2) -- (0,0);
    \fi
    \ifnum#5=8
		\draw[very thick] (-#2,0) -- (#2,0);
    \fi
    \ifnum#5=9
		\draw[very thick] (-#2,0) -- (#2,0);
            \draw[very thick] (0,-#2) -- (0,0);
    \fi
        \draw[ thick, fill=tensorcolor, rounded corners=2pt] (-#3,-#3) rectangle (#3,#3);
		\draw (0,0) node {\scriptsize #4};
	\end{scope}
}

\newcommand{\GFTensor}[5]{
	\begin{scope}[shift={(#1)}]
    \ifnum#5=0
		\draw[very thick, draw=red] (-#2,0) -- (\singledx+#2,0);
            \draw[very thick, draw=black] (0,#2) -- (0,0);
            \draw[very thick, draw=black] (\singledx,#2) -- (\singledx,0);
    \fi
    \ifnum#5=1
		\draw[very thick, draw=red] (-#2,0) -- (\singledx+#2,0);
            \draw[very thick, draw=black] (0,-#2) -- (0,0);
            \draw[very thick, draw=black] (\singledx,-#2) -- (\singledx,0);
    \fi
    \ifnum#5=2
		\draw[very thick, draw=red] (0,-#2) -- (0,#2);
            \draw[very thick, draw=red] (\singledx,-#2) -- (\singledx,#2);
    \fi
    \ifnum#5=3
		\draw[very thick] (0,-#2) -- (0,#2);
    \fi
    \ifnum#5=4
		\draw[very thick] (-#2,0) -- (#2,0);
    \fi
        \draw[ thick, fill=tensorcolor, rounded corners=2pt] (-#3,-#3) rectangle (#3+\singledx,#3);
		\draw (0+0.5*\singledx,0) node {\scriptsize #4};
	\end{scope}
}

\newcommand{\UFTensor}[5]{
	\begin{scope}[shift={(#1)}]
    \ifnum#5=0
            \draw[very thick] (0,#2) -- (0,-#2);
            \draw[very thick] (\singledx,#2) -- (\singledx,-#2);
    \fi
    \ifnum#5=1
		\draw[very thick, draw=red] (0,-#2) -- (0,0);
            \draw[very thick, draw=red] (\singledx,-#2) -- (\singledx,0);
            \draw[very thick] (0.5*\singledx,#2) -- (0.5*\singledx,0);
    \fi

    \ifnum#5=2
		\draw[very thick, draw=red] (0,#2) -- (0,0);
            \draw[very thick, draw=red] (\singledx,#2) -- (\singledx,0);
            \draw[very thick] (0.5*\singledx,-#2) -- (0.5*\singledx,0);
    \fi
    \ifnum#5=3
		\draw[very thick, draw=red] (0,#2) -- (0,-#2);
            \draw[very thick, draw=red] (\singledx,#2) -- (\singledx,-#2);
    \fi
    \ifnum#5=4
		\draw[very thick] (-#2,0) -- (#2,0);
    \fi
        \draw[ thick, fill=whitetensorcolor, rounded corners=2pt] (-#3,-#3) rectangle (#3+\singledx,#3);
		\draw (0+0.5*\singledx,0) node {\scriptsize #4};
	\end{scope}
}

\newcommand{\GPTensor}[5]{
	\begin{scope}[shift={(#1)}]
    \ifnum#5=0
		\draw[very thick, draw=red] (-#2,0) -- (#2,0);
            \draw[very thick, draw=red] (0,-#2) -- (0,#2);
    \fi
    \ifnum#5=1
		\draw[very thick, draw=red] (-#2,0.2) -- (#2,0.2);
            \draw[very thick, draw=red] (-#2,-0.2) -- (#2,-0.2);
            \draw[very thick, draw=red] (0.2,-#2) -- (0.2,#2);
            \draw[very thick, draw=red] (-0.2,-#2) -- (-0.2,#2);
    \fi

    \ifnum#5=2
		\draw[very thick, draw=red] (-#2,0) -- (#2,0);
            \draw[very thick, draw=red] (0,-#2) -- (0,#2);
    \fi

    \ifnum#5=3
		\draw[very thick] (0,-#2) -- (0,#2);
    \fi
        \draw[ thick, fill=tensorcolor, rounded corners=2pt] (-#3,-#3) rectangle (#3,#3);
	\draw (0,0) node {\scriptsize #4};
    \ifnum#5=0
		\draw[very thick] (#3/2,#3/2) -- (#2,#2);
    \fi
	\end{scope}
}

\newcommand{\UPTensor}[5]{
	\begin{scope}[shift={(#1)}]
    \ifnum#5=0
		\draw[very thick, draw=red] (-#2,0) -- (#2,0);
            \draw[very thick, draw=red] (0,-#2) -- (0,#2);
    \fi
    \ifnum#5=1
		\draw[very thick, draw=red] (-#2,0.2) -- (#2,0.2);
            \draw[very thick, draw=red] (-#2,-0.2) -- (#2,-0.2);
            \draw[very thick, draw=red] (0.2,-#2) -- (0.2,#2);
            \draw[very thick, draw=red] (-0.2,-#2) -- (-0.2,#2);
    \fi

    \ifnum#5=2
		\draw[very thick] (-#2,0) -- (#2,0);
            \draw[very thick] (0,-#2) -- (0,#2);
    \fi

    \ifnum#5=3
		\draw[very thick, draw=red] (-#2,0) -- (#2,0);
            \draw[very thick, draw=red] (0,-#2) -- (0,#2);
    \fi
        \draw[ thick, fill=whitetensorcolor, rounded corners=2pt] (-#3,-#3) rectangle (#3,#3);
	\draw (0,0) node {\scriptsize #4};
    \ifnum#5=0
		\draw[very thick] (#3/2,#3/2) -- (#2,#2);
    \fi
    \ifnum#5=2
		\draw[very thick] (#3/2,#3/2) -- (0.75*#2,0.75*#2);
    \fi
    \ifnum#5=3
		\draw[very thick] (#3/2,#3/2) -- (0.75*#2,0.75*#2);
    \fi
	\end{scope}
}

\newcommand{\EPTensor}[5]{
	\begin{scope}[shift={(#1)}]
    \ifnum#5=0
		\draw[very thick, draw=red] (-#2,0.2) -- (#2,0.2);
            \draw[very thick, draw=red] (-#2,-0.2) -- (#2,-0.2);
            \draw[very thick, draw=red] (0.2,-#2) -- (0.2,#2);
            \draw[very thick, draw=red] (-0.2,-#2) -- (-0.2,#2);
    \fi
    \ifnum#5=1
		\draw[very thick, draw=red] (-#2,0.2) -- (#2,0.2);
            \draw[very thick, draw=red] (-#2,-0.2) -- (#2,-0.2);
            \draw[very thick, draw=red] (0.2,-#2) -- (0.2,#2);
            \draw[very thick, draw=red] (-0.2,-#2) -- (-0.2,#2);
    \fi

    \ifnum#5=2
		\draw[very thick,draw=red] (-#2,0) -- (#2,0);
    \fi

    \ifnum#5=3
		\draw[very thick] (0,-#2) -- (0,#2);
    \fi
        \draw[ thick, fill=tensorcolor, rounded corners=2pt] (-#3,-#3) rectangle (#3,#3);
	\draw (0,0) node {\scriptsize #4};
	\end{scope}
}

\newcommand{\Unitary}[5]{
	\begin{scope}[shift={(#1)}]
    \ifnum#5=0
		\draw[very thick, draw=red] (-#2,0) -- (#2,0);
            \draw[very thick] (-#2,2*#3) -- (#2,2*#3);
    \fi
    \ifnum#5=-1
		\draw[very thick] (0,0) -- (#2,0);
            \draw[very thick] (0,#2) -- (0,0);
    \fi
    \ifnum#5=1
		\draw[very thick, draw=red] (-#2,0) -- (#2,0);
            \draw[very thick] (0,2*#3) -- (#2,2*#3);
    \fi

    \ifnum#5=2
		\draw[very thick, draw=red] (-#2,0) -- (#2,0);
            \draw[very thick, draw=red] (-#2,2*#3) -- (#2,2*#3);
    \fi

    \ifnum#5=3
		\draw[very thick] (0,-#2) -- (0,#2);
    \fi
        \draw[ thick, fill=tensorcolor, rounded corners=2pt] (-#3,-#3) rectangle (#3,3*#3);
		\draw (0,#3) node {\scriptsize #4};
	\end{scope}
}

\newcommand{\FUnitary}[5]{
	\begin{scope}[shift={(#1)}]
    \ifnum#5=0
		\draw[very thick, draw=red] (-#2,0) -- (#2,0);
            \draw[very thick, draw=red] (-#2,2*#3) -- (#2,2*#3);
            \draw[very thick, draw=red] (-#2,4*#3) -- (#2,4*#3);
    \fi
    \ifnum#5=1
		\draw[very thick, draw=red] (-#2,0) -- (#2,0);
            \draw[very thick, draw=red] (0,2*#3) -- (#2,2*#3);
            \draw[very thick, draw=red] (0,4*#3) -- (#2,4*#3);
    \fi
    \ifnum#5=2
		\draw[very thick,draw=red] (-#2,0) -- (#2,0);
    \fi

    \ifnum#5=3
		\draw[very thick] (0,-#2) -- (0,#2);
    \fi
        \draw[ thick, fill=tensorcolor, rounded corners=2pt] (-#3,-#3) rectangle (#3,5*#3);
		\draw (0,2*#3) node {\scriptsize #4};
	\end{scope}
}

\newcommand{\PTensor}[5]{
	\begin{scope}[shift={(#1)}]
    \ifnum#5=0
		\draw[very thick, draw = red] (0,-#2) -- (0,0);
            \draw[very thick, draw = red] (0,0) -- (0,#2);
    \fi
    \ifnum#5=1
		\draw[very thick, draw = red] (-#2,0) -- (0,0);
            \draw[very thick, draw = red] (0,0) -- (#2,0);
    \fi
    \ifnum#5=2
		\draw[very thick, draw = red] (0,-#2) -- (0,0);
            \draw[very thick, draw = red] (0,0) -- (0,#2);
    \fi
    \ifnum#5=3
		\draw[very thick, draw = red] (-#2,0) -- (0,0);
            \draw[very thick, draw = red] (0,0) -- (#2,0);
    \fi
        \draw[ thick, fill=tensorcolor, rounded corners=2pt] (-#3,-#3) rectangle (#3,#3);
		\draw (0,0) node {\scriptsize #4};
	\end{scope}
}

\newcommand{\BTensor}[5]{
	\begin{scope}[shift={(#1)}]
    \ifnum#5=0
		\draw[very thick, draw=red] (-#2,0) -- (#2,0);
    \fi
    \ifnum#5=1
		\draw[very thick,draw=red] (0,#2) -- (0,-#2);
    \fi
    \ifnum#5=2
		\draw[very thick] (-#2,0) -- (#2,0);
    \fi
    \ifnum#5=3
		\draw[very thick] (0,#2) -- (0,-#2);
    \fi

        \draw[ thick, fill=tensorcolor, rounded corners=2pt] (-#3,-#3) rectangle (#3,#3);
		\draw (0,0) node {\scriptsize #4};
	\end{scope}
}

\newcommand{\DTensor}[5]{
	\begin{scope}[shift={(#1)}]
    \ifnum#5=0
		\draw[very thick, draw=red] (-#2,0) -- (#2,0);
    \fi
    \ifnum#5=1
		\draw[very thick,draw=red] (0,#2) -- (0,-#2);
    \fi
    \ifnum#5=2
		\draw[very thick] (-#2,0) -- (#2,0);
    \fi
    \ifnum#5=3
		\draw[very thick] (0,#2) -- (0,-#2);
    \fi

        \draw[ thick, fill=whitetensorcolor, rounded corners=2pt] (-#3,-#3) rectangle (#3,#3);
		\draw (0,0) node {\scriptsize #4};
	\end{scope}
}

\newcommand{\UTensor}[5]{
	\begin{scope}[shift={(#1)}]
    \ifnum#5=0
		\draw[very thick, draw=red] (-#2,0) -- (#2,0);
    \fi
    \ifnum#5=1
		\draw[very thick,draw=red] (0,#2) -- (0,-#2);
    \fi
    \ifnum#5=2
		\draw[very thick] (-#2,0) -- (#2,0);
    \fi
    \ifnum#5=3
		\draw[very thick] (0,#2) -- (0,-#2);
    \fi
        \draw[ thick, fill=operatorcolor, rounded corners=2pt] (-#3,-#3) rectangle (#3,#3);
		\draw (0,0) node {\scriptsize #4};
	\end{scope}
}

\newcommand{\RTensor}[5]{
	\begin{scope}[shift={(#1)}]
    \ifnum#5=0
		\draw[very thick, draw=red] (-#2,0) -- (#2,0);
    \fi
    \ifnum#5=1
		\draw[very thick,draw=black] (0,#2) -- (0,-#2);
    \fi
    \ifnum#5=2
		\draw[very thick] (-#2,0) -- (#2,0);
    \fi
    \ifnum#5=3
		\draw[very thick,draw=red] (-#2,0) -- (#2,0);
    \fi
        \draw[ thick, fill=white, rounded corners=2pt] (-#3,-#3) rectangle (#3,#3);
		\draw (0,0) node {\scriptsize #4};
	\end{scope}
}

\newcommand{\GDTensor}[5]{
	\begin{scope}[shift={(#1)}]
    \ifnum#5=0
		\draw[very thick] (-#2,0) -- (#2,0);
		\draw[very thick] (0,#2) -- (0,-#2);
    \fi
    \ifnum#5=-1
		\draw[very thick] (0,0) -- (#2,0);
		\draw[very thick] (0,#2) -- (0,-#2);
    \fi
    \ifnum#5=1
		\draw[very thick] (-#2,0) -- (0,0);
		\draw[very thick] (0,#2) -- (0,-#2);
    \fi
        \draw[ thick, fill=tensorcolor, rounded corners=2pt] (-#3,-#3) rectangle (#3,#3);
    \def\dx{#3/3};
	\draw [thick]  (-#3+\dx, \dx) -- (- \dx,#3-\dx);
	\draw [thick] (-#3+1.5*\dx,-#3+1.5*\dx) -- (#3-1.5*\dx,#3-1.5*\dx);
	\draw [thick]  ( \dx, -#3 + \dx) -- (#3 - \dx,-\dx);
	\draw (0,0) node {\scriptsize #4};
	\end{scope}
}

\newcommand{\VecCirc}[5]{%
  \begin{scope}[shift={(#1)}]
    \ifnum#4=0 \draw[very thick] (#3,0) -- (#2,0); \fi
    \ifnum#4=1 \draw[very thick] (0,#3) -- (0,#2); \fi
    \ifnum#4=2 \draw[very thick] (-#3,0) -- (-#2,0); \fi
    \ifnum#4=3 \draw[very thick] (0,-#3) -- (0,-#2); \fi
    \draw[very thick, fill=white] (0,0) circle (#3);
    \node at (0,0) {\scriptsize #5};
  \end{scope}%
}

\newcommand\subsetsim{\mathrel{%
  \ooalign{\raise0.2ex\hbox{$\subset$}\cr\hidewidth\raise-0.8ex\hbox{\scalebox{0.9}{$\sim$}}\hidewidth\cr}}}

\newcommand{\PlainTensorFour}[4]{%
  \begin{scope}[shift={(#1)}]
    \draw[very thick] (-#2,0) -- (-#3,0);
    \draw[very thick] ( #3,0) -- ( #2,0);
    \draw[very thick] (0,-#2) -- (0,-#3);
    \draw[very thick] (0, #3) -- (0, #2);
    \draw[thick, fill=tensorcolor, rounded corners=2pt] (-#3,-#3) rectangle (#3,#3);
    \node at (0,0) {\scriptsize #4};
  \end{scope}%
}

\newcommand{\EndVecCirc}[5]{%
  \begin{scope}[shift={(#1)}]
    \ifnum#4=0 \draw[very thick] (-#3,0) -- (-#2,0); \fi 
    \ifnum#4=1 \draw[very thick] (0,-#3) -- (0,-#2); \fi 
    \ifnum#4=2 \draw[very thick] (#3,0) -- (#2,0); \fi 
    \ifnum#4=3 \draw[very thick] (0,#3) -- (0,#2); \fi 
    \draw[very thick, fill=white] (0,0) circle (#3);
    \node at (0,0) {\scriptsize #5};
  \end{scope}%
}

\usetikzlibrary{calc,arrows.meta} 
\usepackage{pgffor} 

\newcommand{\BPEdge}[4]{%
  \pgfmathsetmacro{\rr}{#3}
  \pgfmathsetmacro{\gg}{#4}
  \pgfmathsetmacro{\dd}{0.5*\gg + \rr} 

  \coordinate (BPm)  at ($(#1)!0.5!(#2)$);
  \coordinate (BPcL) at ($(BPm)!\dd!(#1)$);
  \coordinate (BPcR) at ($(BPm)!\dd!(#2)$);

  \coordinate (BPpL) at ($(BPcL)!\rr!(#1)$);
  \coordinate (BPpR) at ($(BPcR)!\rr!(#2)$);

  \draw[very thick] (#1) -- (BPpL);
  \draw[very thick] (#2) -- (BPpR);
  \draw[very thick, fill=white] (BPcL) circle (\rr);
  \draw[very thick, fill=white] (BPcR) circle (\rr);
}

\definecolor{opmaroon}{RGB}{120,30,45}

\usepackage{physics}
\usepackage{amsfonts}
\usepackage{amssymb}
\usepackage{amsthm}
\usepackage{amsmath}
\usepackage{hyperref}
\usepackage{placeins}

\newtheoremstyle{bracketthm}%
  {6pt}
  {6pt}
  {\itshape}
  {}
  {\bfseries}
  {.}
  {0.5em}
  {\thmname{#1}\thmnumber{ #2}\thmnote{ {\normalfont[#3]}}}

\theoremstyle{bracketthm}
\usepackage[ruled,vlined, linesnumbered]{algorithm2e}
\newtheorem{theorem}{Theorem}[section]
\newtheorem{lemma}{Lemma}[section]
\newtheorem{corollary}{Corollary}[section]
\newtheorem{prop}{Proposition}[section]
\newtheorem{defn}{Definition}[section]

\usepackage{cleveref}
\usepackage{mathtools}
\usepackage{bbm}
\usepackage{comment}
\usepackage[dvipsnames]{xcolor}
\definecolor{DarkPastelGreen}{RGB}{72,115,82}
\usepackage{bm}

\newcommand{\llangle}{\langle\!\langle}
\newcommand{\rrangle}{\rangle\!\rangle}
\usepackage[caption=false]{subfig}
\usepackage{graphicx}
\graphicspath{{figures/}}

\def\equationautorefname~#1\null{Eq. (#1)\null}

\newcommand{\C}{\mathbb{C}}

\newcommand{\N}{\mathbb{N}}

\newcommand{\W}{\mathbf{W}}

\newcommand{\loopZ}[1]{Z_{{#1},\lambda}^A}
\newcommand{\zbpl}{Z_{BP,\lambda}^A}
\newcommand{\nloopZ}[1]{\tilde{Z}_{{#1},\lambda}^A}

\newcommand{\tloopZ}[1]{Z_{{#1},\bm{\lambda}}^{\mathbf{A}}}

\newcommand{\NN}{\mathcal{N}}

\newcommand{\FF}{\mathcal{F}}

\newcommand{\LL}{\mathcal{L}}

\newcommand{\TT}{\mathcal{T}}
\newcommand{\ZZ}{\mathcal{Z}}
\newcommand{\MM}{\mathcal{M}}

\newcommand{\BB}{\mathcal{B}}

\newcommand{\supp}{\text{supp}}

\DeclarePairedDelimiterX{\inner}[2]{\langle}{\rangle}{#1|#2}
\DeclarePairedDelimiterX{\expect}[3]{\langle}{\rangle}{#1|#2|#3}

\definecolor{myblue}{RGB}{20, 60, 180}

\definecolor{col}{RGB}{250, 64, 47}
\hypersetup{colorlinks=true, linkcolor=col, citecolor=col, urlcolor=col}

\hypersetup{
    colorlinks=true,
    linkcolor=magenta,
    citecolor=magenta,
    urlcolor=magenta
}

\makeatletter

\newcommand{\l@smsection}{\@dottedtocline{1}{1.5em}{2.3em}}
\newcommand{\l@smsubsection}{\@dottedtocline{2}{3.8em}{3.2em}}

\newcommand{\listofsmentries}{%
  \section*{Contents}%
  \@starttoc{smtoc}%
}

\newcommand{\smsection}[1]{%
  \section{#1}%
  \addcontentsline{smtoc}{smsection}{\protect\numberline{\thesection}#1}%
}

\newcommand{\smsubsection}[1]{%
  \subsection{#1}%
  \addcontentsline{smtoc}{smsubsection}{\protect\numberline{\thesubsection}#1}%
}

\makeatother

\begin{document}
\title{Belief Propagation and Tensor Network Expansions for Many-Body Quantum Systems: Rigorous Results and Fundamental Limits}

\author{Siddhant Midha}
\altaffiliation{Equal contribution}
\email{siddhantm@princeton.edu}
\affiliation{Princeton Quantum Initiative, Princeton University, Princeton, NJ 08544}

\author{Grace M. Sommers}
\thanks{Equal contribution}
\email{gsommers@princeton.edu}
\affiliation{Department of Physics, Princeton University, Princeton, NJ 08544}
\affiliation{Center for Computational Quantum Physics, Flatiron Institute, New York, NY, 10010}

\author{Joseph Tindall}
\affiliation{Center for Computational Quantum Physics, Flatiron Institute, New York, NY, 10010}

\author{Dmitry A. Abanin}
\affiliation{Princeton Quantum Initiative, Princeton University, Princeton, NJ 08544}
\affiliation{Department of Physics, Princeton University, Princeton, NJ 08544}

\begin{abstract}
Belief propagation (BP) provides a scalable heuristic for contracting tensor networks on loopy graphs, but its success in quantum many-body settings has largely rested on empirical evidence. Developing upon a recently introduced cluster-expansion framework for tensor networks, we rigorously study the applicability of BP to many-body quantum systems. For a state represented as a PEPS satisfying a ``loop-decay" condition, we prove that BP supplemented by cluster corrections approximates local observables with exponentially small relative error, and we give explicit formulas expressing local expectation values as BP predictions dressed by connected clusters intersecting the observable region. This representation establishes a direct link between cluster corrections and physical correlation functions. As a result, we show that ``loop-decay" \emph{necessarily implies} exponential decay of connected correlations, yielding sharp, rigorous 
criteria for when BP can and cannot succeed, and ruling out its validity at critical points. Numerical simulations of the two- and three-dimensional transverse field Ising model at zero and finite temperature confirm our analytical predictions, demonstrating quantitative accuracy deep in gapped phases and systematic failure near criticality.
\end{abstract}
\maketitle
\newpage 
\section{Introduction}
Tensor networks have become a cornerstone tool for understanding quantum many-body systems \cite{white1992density,white1993density,schollwock2011density,orus2014practical,cirac2021matrix,verstraete2004matrix,vidal2007entanglement}, naturally encoding gapped ground states and enabling efficient descriptions of short-time dynamics. By encoding correlations through local ``virtual bonds," tensor networks efficiently capture area-law states and enable controlled approximations of their static and dynamical properties.

Tensor networks on a tree admit exact and efficient contraction algorithms. The absence of loops allows contractions to be organized recursively, with all intermediate tensors having dimensions bounded by the bond dimension. Consequently, contraction scales polynomially in system size and bond dimension \cite{Fannes1992,PerezGarcia2007,Shi2006}. The situation is fundamentally different in Euclidean dimensions beyond 1D, where loops are unavoidable. Even when a state obeys an area law and thus admits a tensor-network representation with bounded bond dimension, contraction in the presence of loops can be computationally hard \cite{Schuch2007PEPS,HaferkampPEPS,HarleyPEPS} and the complexity can grow exponentially with system size. 

Practical,~\emph{approximate} contraction algorithms in higher dimensions, such as corner transfer matrix renormalization group (CTMRG)~\cite{nishino1996corner,orus2012exploring,orus2009simulation}, boundary methods 
~\cite{verstraete2004renormalization, lubasch2014unifying, murg2007variational}, or tensor network renormalization based methods~\cite{levin2007tensor,evenbly2015tensor} therefore exploit additional structural properties of the tensor network (e.g., translational invariance) 
{in order to introduce controlled truncations of the intermediate tensors and avoid an exponential scaling. Such truncations of intermediate tensors are typically justified when correlations in the state decay sufficiently fast and can be quantified via the spectrum of an associated transfer operator.} 

A particularly promising approach to contraction is \emph{belief propagation} (BP), a message-passing algorithm originally developed for probabilistic inference on graphical models~\cite{Gallager1962,pearl1988probabilistic,mceliece1998turbo,yedidia2003,Yedidia2005}, with even earlier roots in statistical mechanics~\cite{bethe1935statistical,Peierls1936,Thouless1987,Mezard1986cavity,Klein1979,Katsura1979,Nakanishi1981}. When applied to tensor networks~\cite{robeva2016duality, BPsimpleupdate,sahu2022efficient, BP_gauging}, BP provides a {tree-like} approximation that can be computed at polynomial cost. BP is exact on trees; on loopy graphs, it becomes an approximation whose accuracy depends on how strongly loops in the network affect the contraction. While there exist rigorous results on loopy BP in the classical setting~\cite{Weiss2000,Tatikonda2002,NIPS2002,Heskes2004,Mooij2005,chertkov2008exactness,Sudderth2008}, the theoretical foundation for when and why BP works in the quantum many-body setting has remained largely heuristic, despite widespread empirical success~\cite{liao2023simulation,  tindall2024efficient, Begusic2024, tindall2025dynamics,rudolph2025simulatingsamplingquantumcircuits, park2025}.

Recent advances in cluster expansion techniques for tensor networks \cite{evenbly2024loopseriesexpansionstensor,park2025,midha2025beyond,gray2025,evenbly2025partitioned} provide the missing theoretical framework. Building on these developments, we establish rigorous criteria for the applicability of BP to quantum states represented as projected entangled pair states (PEPS) on arbitrary geometries. We derive exact formulas expressing local expectation values as the BP prediction dressed by all connected clusters intersecting the observable region (see \autoref{prop:localexpexpansion} and \autoref{prop:localexpexpansion-derivative}). When loop corrections decay exponentially, truncating this expansion at finite cluster order yields a relative-error approximation with an error decaying exponentially fast with the cluster order (see \autoref{thm:localobsalgorithm}). 
This transforms BP from a heuristic into a systematically improvable algorithm with rigorous performance guarantees.

The performance of BP-based tensor network methods is intimately tied to the magnitude of loop corrections. This naturally raises a structural question: what is the physical meaning of these corrections, and what do they reveal about the underlying quantum state? We prove that loop tensors act as the ``carriers" of connected correlation functions (see \autoref{prop:correlatorexpansion} and \autoref{prop:conn-correlator}). 
This can be thought of as a generalization of the MPS ``transfer-matrix'' picture to arbitrary graphs. By applying a cluster expansion to connected correlation functions, we show decay of loopy excitations\footnote{The excitations appear as loops, as well as strings allowed to terminate on the region of observable insertion.} \emph{necessarily implies} exponential decay of connected correlations (see \autoref{thm:correlatorbound}). 
Loop decay is therefore not merely an algorithmic criterion but a physical statement about the state itself. Conversely, at critical points or in gapless phases, where correlations decay sub-exponentially, loop corrections must remain parametrically large. This establishes loop decay as a sharp diagnostic for when BP-based expansions can succeed and delineates the fundamental limits of such methods.

Our results also provide concrete operational criteria for practitioners: measuring loop decay in the tensor network immediately indicates (i) whether a given fixed-point
can be trusted, (ii) the order of cluster corrections needed for a target accuracy, and (iii) loop decay criteria to ensure clustering of correlations.
{Conversely, failure of loop decay signals either criticality or a mismatch between the chosen BP fixed point and the physical state. In such situations, the correct description may require expanding around a distinct, potentially ``unstable," fixed point. We provide explicit examples where loop decay is violated at a stable fixed point, yet restored when the expansion is centered on the appropriate unstable one.}

This emphasizes a crucial caveat in  BP based algorithms: the expansions must assume a \emph{good} BP fixed point has been found. A convergent series cannot be built atop a poor mean field. While good fixed points may exist, they may be unattainable from standard iterative message passing, particularly near criticality. This highlights a crucial algorithmic challenge beyond convergence of the cluster expansion: the fixed-point problem. While there exist strategies for circumventing this issue, such as partitioned network expansions~\cite{evenbly2025partitioned} and generalized belief propagation~\cite{Kikuchi1951,Yedidia2001,NIPS2001,Yedidia2005}, which we develop upon in upcoming work~\cite{tindallgbp}, our focus here will be on the performance of ``vanilla'' BP. In upcoming work~\cite{midha26upcoming}, we provide a class of PEPS states for which the fixed-point problem can be rigorously solved.

We validate our analytical predictions through extensive numerical simulations of the two- and three-dimensional transverse field Ising model at zero and finite temperature, represented as a translationally-invariant PEPS. Deep in gapped phases, cluster corrections converge rapidly and achieve high quantitative accuracy. Approaching the critical point, loop decay weakens and corrections grow, demonstrating the systematic failure of the method at criticality as predicted by the theory.

The rest of the paper is organized as follows. In \autoref{sec:TNexpansions} we first review the formalisms of tensor networks, belief propagation, and various expansions with convergence guarantees. We study local expectation values in \autoref{sec:local_observables}. The connection between loop corrections and correlators is then outlined in \autoref{sec:correlation_fns}. We then demonstrate numerically the cluster-corrected BP algorithm on the transverse field Ising model in \autoref{sec:numerics}. We conclude with a discussion of open questions and future work in \autoref{sec:discussions}. 
\section{Tensor network expansions}\label{sec:TNexpansions}

This section reviews the technical machinery underlying our analysis: systematic series expansions for tensor network contractions on loopy graphs. While these methods were originally developed for classical statistical mechanics and factor graphs~\cite{ruelle1969statistical,Kotecky1986,Chertkov2006,Chertkov2006a,Gomez2007,chertkov2009approximate,welling2012}, recent work \cite{Evenbly2024,park2025,midha2025beyond,gray2025} has adapted them to the tensor network setting, providing the foundation for a rigorous error analysis of belief propagation.

\subsection{Setup and notation}
We consider \emph{closed} tensor networks (i.e., no ``open legs'') defined on a graph $G = (V,E)$ with $N$ vertices $V$ and edges $E$. For each vertex $v \in V$, we denote its neighbors in $G$ as $\NN(v)$. The degree of vertex $v$ is $d(v) := |\NN(v)|$, and the maximum degree of the graph is $\Delta := \max_{v\in V}d(v)$. 

Each edge $(v,w)\in E$ is associated with a \textit{bond Hilbert space} $\BB_{vw}$ of dimension $D$ (the \textit{bond dimension}, taken uniform for simplicity). Each vertex $v\in V$ is equipped with a tensor $T_v \in \bigotimes_{n \in \NN(v)}\BB_{nv}$ making up the tensor network.

Contracting all tensors in the network yields a scalar:
\begin{equation}
    \ZZ= \star_{v\in V}T_v,
\end{equation}
where $\star$ denotes contraction of tensor indices. Following the language of statistical mechanics, we call $\ZZ$ the \emph{partition function} of the tensor network. Note that $\ZZ$ is more general than in statistical mechanics: tensors $T_v$ can be complex-valued, whereas classical partition functions only sum non-negative quantities. We define the \emph{free energy} as
\begin{equation}
    \FF = -\log{\ZZ},
\end{equation}
where all logarithms in this work are base $e$.
Throughout, we choose normalizations such that $\ZZ$ is positive, storing phase information separately. This ensures the uniqueness of $\FF$ in the regime where cluster expansions are well-defined.

\subsection{Belief propagation fixed points}
The starting point for all of our expansions is the \emph{belief propagation (BP) approximation}, which approximates the tensor network by a rank-one factorization. For each edge $e=(v,w)$ in the network, we introduce \textit{message tensors} $\mu_{v \to w} \in \BB_{vw}$ (and $\mu_{w \to v}$) that satisfy a \textit{self-consistency} condition. 

Specifically, a fixed-point set $\MM = \{\mu_{v\to w}\}_{(v,w)\in E}$ requires that the contraction of all but one incoming message at any vertex $v\in V$ reproduces the outgoing message on the excluded edge. Mathematically, for each $v\in V$ and each neighbor $n_j \in \NN(v)$:
\begin{equation}
        \left(\bigotimes_{n_i \in \NN(v)\setminus\{n_j\}}\mu_{n_i \to v} \right)\star T_v \propto \mu_{v\to n_j}
\end{equation}
Schematically:
\[
\begin{tikzpicture}[scale=0.9, baseline={([yshift=-0.65ex] current bounding box.center)}]

  \def\L{1.15}   
  \def\a{0.45}   
  \def\stub{0.45}
  \def\r{0.14}

  \PlainTensorFour{0,0}{\L}{\a}{\small $T_v$}

  \EndVecCirc{-\L,0}{\stub}{\r}{2}{}
  \EndVecCirc{0,\L}{\stub}{\r}{1}{}
  \EndVecCirc{0,-\L}{\stub}{\r}{3}{}

  \node at (2.0,0) {\Large $\propto$};

\EndVecCirc{3.2,0}{1.0}{\r}{2}{}

\end{tikzpicture}
\]

where rank-one tensors $\tikz[baseline=-0.6ex]{
  \draw (0,0) circle (0.6ex);
  \draw (0.6ex,0) -- (2ex,0);
}$ 
are messages, and $T_v$ is the local tensor at vertex $v$. The self-consistency condition ensures that this rank-one subspace is invariant under contraction with surrounding tensors, making it a stationary point of a variational optimization problem. Physically, the fixed point singles out a ``mean-field" subspace that captures the dominant correlations, while fluctuations orthogonal to this subspace represent corrections.

The BP approximation amounts to replacing exact tensor contractions with contractions over these rank-one messages:

\[
\begin{tikzpicture}[scale=0.95, baseline={([yshift=-0.65ex] current bounding box.center)}]
  \def\dx{1.8}
  \def\dy{1.6}
  \def\a{0.33}

  \newcommand{\SqTN}[3]{%
    \coordinate (#1) at (#2,#3);
    \draw[thick, fill=tensorcolor, rounded corners=2pt]
      ($(#1)+(-\a,-\a)$) rectangle ($(#1)+(\a,\a)$);
    \node at (#1) {\scriptsize $T$};
  }

  \SqTN{A}{0}{0}
  \SqTN{B}{\dx}{0}
  \SqTN{C}{0}{-\dy}
  \SqTN{D}{\dx}{-\dy}
  \SqTN{E}{0}{-2*\dy}
  \SqTN{F}{\dx}{-2*\dy}

  \coordinate (AE) at ($(A)+(\a,0)$);   \coordinate (BW) at ($(B)+(-\a,0)$);
  \coordinate (CE) at ($(C)+(\a,0)$);   \coordinate (DW) at ($(D)+(-\a,0)$);
  \coordinate (EE) at ($(E)+(\a,0)$);   \coordinate (FW) at ($(F)+(-\a,0)$);

  \coordinate (AS) at ($(A)+(0,-\a)$);  \coordinate (CN) at ($(C)+(0,\a)$);
  \coordinate (BS) at ($(B)+(0,-\a)$);  \coordinate (DN) at ($(D)+(0,\a)$);
  \coordinate (CS) at ($(C)+(0,-\a)$);  \coordinate (EN) at ($(E)+(0,\a)$);
  \coordinate (DS) at ($(D)+(0,-\a)$);  \coordinate (FN) at ($(F)+(0,\a)$);

  \draw[very thick] (AE) -- (BW);
  \draw[very thick] (CE) -- (DW);
  \draw[very thick] (EE) -- (FW);

  \draw[very thick] (AS) -- (CN);
  \draw[very thick] (BS) -- (DN);
  \draw[very thick] (CS) -- (EN);
  \draw[very thick] (DS) -- (FN);
\end{tikzpicture}
\;\approx\;
\begin{tikzpicture}[scale=0.95, baseline={([yshift=-0.65ex] current bounding box.center)}]
  \def\dx{1.8}
  \def\dy{1.6}
  \def\a{0.33}
  \def\r{0.1}
  \def\gap{0.4}

  \newcommand{\SqTN}[3]{%
    \coordinate (#1) at (#2,#3);
    \draw[thick, fill=tensorcolor, rounded corners=2pt]
      ($(#1)+(-\a,-\a)$) rectangle ($(#1)+(\a,\a)$);
    \node at (#1) {\scriptsize $T$};
  }

  \SqTN{A}{0}{0}
  \SqTN{B}{\dx}{0}
  \SqTN{C}{0}{-\dy}
  \SqTN{D}{\dx}{-\dy}
  \SqTN{E}{0}{-2*\dy}
  \SqTN{F}{\dx}{-2*\dy}

  \coordinate (AE) at ($(A)+(\a,0)$);   \coordinate (BW) at ($(B)+(-\a,0)$);
  \coordinate (CE) at ($(C)+(\a,0)$);   \coordinate (DW) at ($(D)+(-\a,0)$);
  \coordinate (EE) at ($(E)+(\a,0)$);   \coordinate (FW) at ($(F)+(-\a,0)$);

  \coordinate (AS) at ($(A)+(0,-\a)$);  \coordinate (CN) at ($(C)+(0,\a)$);
  \coordinate (BS) at ($(B)+(0,-\a)$);  \coordinate (DN) at ($(D)+(0,\a)$);
  \coordinate (CS) at ($(C)+(0,-\a)$);  \coordinate (EN) at ($(E)+(0,\a)$);
  \coordinate (DS) at ($(D)+(0,-\a)$);  \coordinate (FN) at ($(F)+(0,\a)$);

  \BPEdge{AE}{BW}{\r}{\gap}
  \BPEdge{CE}{DW}{\r}{\gap}
  \BPEdge{EE}{FW}{\r}{\gap}

  \BPEdge{AS}{CN}{\r}{\gap}
  \BPEdge{BS}{DN}{\r}{\gap}
  \BPEdge{CS}{EN}{\r}{\gap}
  \BPEdge{DS}{FN}{\r}{\gap}
\end{tikzpicture}\,,
\]
where the rank-one subspaces are suitably normalized, as we discuss in the following section [see \autoref{eq:identityresolution}].
The circles indicate where exact bond contractions are replaced by products of BP messages. The errors in this approximation---the content of the ``fluctuations"---are systematically captured by the series expansions.  

\subsection{Loop expansion}
The first systematic approach to quantifying BP errors is the \emph{loop series expansion} \cite{Chertkov2006,Chertkov2006a,Gomez2007,chertkov2009approximate}, applied to tensor networks in Ref.~\cite{Evenbly2024}. The key idea is to decompose each bond Hilbert space into the BP subspace (spanned by the messages) plus its orthogonal complement, capturing ``excitations" away from the mean field ``vacuum.''

For each edge $e = (v,w)\in E$, we expand the identity on the bond space as,
\begin{eqnarray} \label{eq:identityresolution}
    \mathbbm{1}_{vw} = \underbrace{\frac{\mu_{v\to w} \otimes \mu_{w\to v}}{\mu_{v\to w} \star \mu_{w\to v}}}_{\text{BP subspace}} + \underbrace{\mathcal{P}_{vw}^\perp}_{\text{excitation subspace}},
\end{eqnarray}
The normalization ensures that these subspaces are orthogonal: $\mathcal{P}_{vw}^\perp\star \mu_{v\to w} = 0$ and $\mu_{w\to v}\star\mathcal{P}_{vw}^\perp = 0$, hence $\mathcal{P}_{vw}^\perp$ carries contributions orthogonal to the BP vacuum. For any subset of edges $F \subseteq E$, contraction with the projector leads to the following excitation, 
\begin{equation}
  \tilde{Z}_F = \left( \prod_{(w,v)\in F} \mathcal{P}_{vw}^\perp \right) \star \left( \prod_{(v,w) \notin F} \mu_{wv} \otimes \mu_{vw} \right) \star \left( \prod_{v} T_v \right).
\end{equation}
The self-consistency condition ensures that the only non-zero excitations that appear have no dangling edges -- they are leafless edge-induced subgraphs of $G$ -- leading to the following loop-series expansion. 

\begin{equation}\label{eq:loopseriesnaive}
  \ZZ = Z_{\text{BP}} + \sum_{\text{all loops}} \tilde{Z}_l
\end{equation}
where $\tilde{Z}_l$ is the (unnormalized) contribution of a loop excitation $l$, and $Z_{\text{BP}} = \tilde{Z}_{\emptyset}$ is the contraction with all BP messages. For a simple single-loop TN, this is schematically, 

\[
\begin{tikzpicture}[
  scale=0.6,
  baseline={([yshift=-0.65ex] current bounding box.center)},
  tensor/.style={draw, line width=0.9pt, fill=tensorcolor, rounded corners=2pt, minimum size=8mm},
  bond/.style={line width=1.6pt},
  redbond/.style={line width=1.6pt, draw=red}
]
  \def\dx{2.5}
  \def\dy{2.5}

  \node[tensor] (A) at (0,0) {};
  \node[tensor] (B) at (\dx,0) {};
  \node[tensor] (C) at (\dx,-\dy) {};
  \node[tensor] (D) at (0,-\dy) {};

  \draw[bond] (A.east) -- (B.west);
  \draw[bond] (B.south) -- (C.north);
  \draw[bond] (D.east) -- (C.west);
  \draw[bond] (A.south) -- (D.north);
\end{tikzpicture}
\;=\;
\begin{tikzpicture}[
  scale=0.6,
  baseline={([yshift=-0.65ex] current bounding box.center)},
  tensor/.style={draw, line width=0.9pt, fill=tensorcolor, rounded corners=2pt, minimum size=8mm},
  bond/.style={line width=1.6pt}
]
  \def\dx{2.5}
  \def\dy{2.5}

  \def\r{0.11}    
  \def\gap{0.4}  

  \node[tensor] (A) at (0,0) {};
  \node[tensor] (B) at (\dx,0) {};
  \node[tensor] (C) at (\dx,-\dy) {};
  \node[tensor] (D) at (0,-\dy) {};

  \BPEdge{A.east}{B.west}{\r}{\gap}
  \BPEdge{B.south}{C.north}{\r}{\gap}
  \BPEdge{D.east}{C.west}{\r}{\gap}
  \BPEdge{A.south}{D.north}{\r}{\gap}
\end{tikzpicture}
\;+\;
\begin{tikzpicture}[
  scale=0.6,
  baseline={([yshift=-0.65ex] current bounding box.center)},
  tensor/.style={draw, line width=0.9pt, fill=tensorcolor, rounded corners=2pt, minimum size=8mm},
  redbond/.style={line width=1.6pt, draw=red}
]
  \def\dx{2.5}
  \def\dy{2.5}

  \node[tensor] (A) at (0,0) {};
  \node[tensor] (B) at (\dx,0) {};
  \node[tensor] (C) at (\dx,-\dy) {};
  \node[tensor] (D) at (0,-\dy) {};

  \draw[redbond] (A.east) -- (B.west);
  \draw[redbond] (B.south) -- (C.north);
  \draw[redbond] (D.east) -- (C.west);
  \draw[redbond] (A.south) -- (D.north);
\end{tikzpicture}
\]
Crucially, \autoref{eq:loopseriesnaive} sums over all loops, including non-simple cycles and even disconnected configurations. To simplify this, we normalize the tensor network by the BP value $Z_{\text{BP}}$ and define $\LL$ as the set of all \textit{connected} edge-induced subgraphs with minimum degree two---the set of connected ``generalized loops" on $G$. Loop excitations evaluated on the normalized tensor network are denoted $Z_l$. Two loops $l,l' \in \LL$ are \emph{compatible} if they share no edges or vertices, that is, they are disconnected. A subset $\Gamma\subset \LL$ is called compatible if all distinct loops in the subset are compatible with each other. This gives us the loop series expansion for a tensor network. 

\begin{prop}[Loop Expansion (Informal) \cite{Evenbly2024,midha2025beyond}]\label{prop:loopseries}
    The tensor network partition function admits the following series expansion, 
    \begin{equation}\label{eq:loopseries}
  \ZZ = Z_{\emph{BP}}\left[ 1 + \sum_{\substack{\Gamma\subseteq\mathcal{L}\\\Gamma\text{ \emph{finite, compatible}}}}
  \prod_{l\in\Gamma} Z_l\right]
\end{equation}
\end{prop}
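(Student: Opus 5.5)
Insert the resolution of the identity \autoref{eq:identityresolution} on every edge and expand multilinearly, then reorganize the surviving terms by connected components.

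\textbf{Step 1 (expansion over edge subsets).} Contracting $\star_{v}T_v$ with $\mathbbm{1}_{vw}$ written as BP projector plus $\mathcal{P}^\perp_{vw}$ on each edge and distributing the product gives
\[
\ZZ=\sum_{F\subseteq E}\tilde Z_F ,
\]
with $\tilde Z_F$ as defined above. This step is exact, since the graph is finite and the sum is over $2^{|E|}$ terms.

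\textbf{Step 2 (vanishing of terms with leaves).} Suppose $F$ has a vertex $v$ incident to exactly one edge $(v,n_j)\in F$. Then $T_v$ is contracted with incoming messages $\mu_{n_i\to v}$ on all other edges. By the fixed-point equation, this contraction is proportional to $\mu_{v\to n_j}$. Contracting it with $\mathcal{P}^\perp_{vn_j}$ gives zero by the orthogonality $\mathcal{P}^\perp\star\mu=0$. Hence only leafless $F$ survive, where every vertex touched by $F$ has $F$-degree at least two. Also $\tilde Z_\emptyset=Z_{\rm BP}$. This yields \autoref{eq:loopseriesnaive}.

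\textbf{Step 3 (factorization over connected components).} Normalize each vertex tensor so that its contraction with all its incoming messages equals $1$. Normalize each edge so that $\mu_{v\to w}\star\mu_{w\to v}=1$. Then $Z_{\rm BP}=1$ after normalization, which is the same as dividing by $Z_{\rm BP}$. Assuming $Z_{\rm BP}\neq 0$, write a leafless $F$ as a disjoint union of connected components $l_1,\dots,l_k$. Each component is connected and has minimum degree $\geq 2$, so it lies in $\LL$. The components are pairwise vertex-disjoint, hence compatible.

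Now take a vertex outside $V(F)$. It sees only BP messages on its edges and therefore contributes the factor $1$. A vertex in a component $l_i$ sees $\mathcal{P}^\perp$ on its $l_i$-edges and messages on all other edges. Its factor therefore depends only on $l_i$. Edges outside $F$ contribute only normalized message pairs. So the network factorizes as $\tilde Z_F/Z_{\rm BP}=\prod_i Z_{l_i}$.

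Conversely, every finite compatible $\Gamma\subseteq\LL$ arises from exactly one leafless $F=\bigcup_{l\in\Gamma}l$. The reason is that compatibility, meaning vertex-disjointness, makes the elements of $\Gamma$ precisely the connected components of their union. This bijection gives \autoref{eq:loopseries}.

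\textbf{Main obstacle.} The delicate point is Step 3, specifically the bookkeeping of normalizations. The proportionality constants in the fixed-point equation must be absorbed consistently into the vertex tensors and messages, so that the factorization holds exactly with $Z_{\rm BP}$ pulled out. I would handle this by defining the vertex normalization $\tilde T_v = T_v/(\star_{n}\mu_{n\to v}\star T_v)$ and checking that $Z_{\rm BP}$ is exactly the product of these vertex constants divided by the product of the edge overlaps. The remaining care needed is to verify that the bijection between leafless subgraphs and compatible families uses vertex-disjointness rather than mere edge-disjointness. That is why compatibility is defined as sharing neither edges nor vertices.
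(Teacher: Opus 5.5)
Your proposal is correct and follows essentially the same route the paper sketches: resolve the identity on every edge, use self-consistency together with $\mathcal{P}^\perp_{vw}\star\mu_{v\to w}=0$ to kill every edge subset with a dangling edge, then normalize by $Z_{\rm BP}$ and factorize the surviving leafless subgraphs over their vertex-disjoint connected components. The paper defers these details to the cited references; your normalization bookkeeping and your check that compatibility must mean vertex-disjointness (not just edge-disjointness) for the bijection are exactly the points those details require.
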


The sum-product structure in \autoref{eq:loopseries} includes all configurations of mutually compatible loops, including disconnected collections. While this expansion is exact, its combinatorics are unfavorable: the proliferation of disconnected loops prevents straightforward convergence analysis. This motivates the cluster expansion.

\subsection{Cluster expansion}
To remedy the unfavorable combinatorics, we shift from expanding the partition function $\ZZ$ to expanding the free energy $\FF = -\log \ZZ$. Taking the logarithm converts products into sums and naturally isolates \emph{connected} contributions, dramatically improving convergence properties.

A \emph{cluster} is a collection of loops with multiplicities: $$\mathbf{W} = \{(l_1, \eta_1), (l_2, \eta_2), \ldots \}$$
where each $l_i\in \LL$ is a loop and $\eta_i \geq 1$ is its multiplicity in the cluster. The corresponding cluster evaluation is, 
\begin{equation}
  Z_{\mathbf{W}} = \prod_{i} Z_{l_i}^{\eta_i}.
\end{equation}
Moreover, the order of a given cluster $\mathbf{W}$ is the sum of all loop orders, including multiplicity,
\begin{equation}
    |\mathbf W| := \sum_i \eta_i |l_i|
\end{equation}
where $|l|$ for a loop $l\in\LL$ denotes the number of edges in $l$. We can now reorganize the loop expansion into a cluster expansion.
\begin{prop}[Cluster Expansion (Informal) \cite{midha2025beyond}] \label{prop:clusterexp}
    The free-energy of the tensor network admits the following series expansion, 
    \begin{equation}
        \FF = \FF_0 - \sum_{\emph{conn. }\mathbf W} \phi_{\mathbf W} Z_{\mathbf W}
    \end{equation}
    where $\phi_{\mathbf W}$ is the Ursell function, depending solely on the cluster geometry.
\end{prop}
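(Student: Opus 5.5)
Starting from \autoref{prop:loopseries}, the plan is to write $\ZZ = Z_{\text{BP}}\,\Xi$. Here
\[
\Xi = 1+\sum_{\Gamma \text{ compatible}}\prod_{l\in\Gamma}Z_l
\]
is the partition function of an abstract polymer gas. The polymers are the connected generalized loops $l\in\LL$, carrying (complex) activities $Z_l$, and the hard-core interaction is incompatibility: two loops are incompatible when they share a vertex or an edge. Then
\[
\FF = -\log Z_{\text{BP}} - \log \Xi,
\]
and we set $\FF_0:=-\log Z_{\text{BP}}$. The whole task reduces to the standard Mayer/cluster expansion of $\log\Xi$ for a polymer system.

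The first step treats $\Xi$ as a formal power series in the activities $\{Z_l\}$, regarded as independent variables. I would write $\Xi=\sum_{\Gamma}\prod_{l\in\Gamma}Z_l\prod_{l\neq l'\in\Gamma}(1+f(l,l'))$ with $f(l,l')=-1$ if $l,l'$ are incompatible and $0$ otherwise. I would then sum over ordered tuples $(l_1,\dots,l_n)$ with a $1/n!$ and expand $\prod_{i<j}(1+f_{ij})$ over graphs on $\{1,\dots,n\}$. The exponential formula (log of a sum over all graphs equals the sum over connected graphs) gives
\[
\log\Xi=\sum_{n\ge1}\frac{1}{n!}\sum_{l_1,\dots,l_n}\Bigl(\sum_{\substack{H\subseteq K_n\\ H\text{ connected}}}\prod_{(i,j)\in H}f(l_i,l_j)\Bigr)\prod_i Z_{l_i}.
\]
Next I group tuples by the multiset they define, i.e.\ by the cluster $\W=\{(l_i,\eta_i)\}$. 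The prefactor becomes
\[
\phi_\W=\frac{1}{\prod_i\eta_i!}\sum_{H\text{ conn.}}\prod_{(i,j)\in H}f,
\]
the Ursell function, and the tuple weights collapse to $Z_\W$. This coefficient depends only on the incompatibility graph of $\W$, hence only on the cluster geometry. It vanishes unless that graph is connected, so only connected clusters survive. This gives $\FF=\FF_0-\sum_{\text{conn.}\,\W}\phi_\W Z_\W$ as an identity of formal series.

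The main obstacle is upgrading the formal identity to an actual equality, i.e.\ justifying absolute convergence and the choice of the branch of the logarithm. Since the statement is informal, I would impose a Kotecký--Preiss-type condition. For this I would require that there exist $a>0$ (or a function $a(l)=a|l|$) with $\sum_{l'\not\sim l}|Z_{l'}|e^{a|l'|}\le a|l|$, which follows from a loop-decay assumption $|Z_l|\le e^{-c|l|}$ once $c$ is large compared with $\log$ of the loop-counting growth rate, itself bounded in terms of $\Delta$. Under this condition the Kotecký--Preiss theorem gives $\sum_{\W\ni l}|\phi_\W Z_\W|\le |Z_l|e^{a|l|}$. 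Hence the cluster series converges absolutely on finite graphs, $\Xi\neq0$, and the series equals the principal logarithm continuously connected to $\log1=0$ along $Z_l\to tZ_l$, $t\in[0,1]$. This matches the normalization convention that $\ZZ>0$ fixes $\FF$ uniquely. For a finite network one could alternatively note that both sides are analytic in a polydisk around zero and agree as power series, so they agree wherever the cluster series converges.
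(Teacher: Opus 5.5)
Your proposal is correct and follows essentially the argument the paper relies on. The paper does not reprove this statement but imports it from \cite{midha2025beyond}; its formal version in the Supplementary Material is exactly the hard-core polymer-gas cluster expansion of the loop series of \autoref{prop:loopseries} that you describe. There $\phi(\mathbf W)$ is $1/\mathbf W!$ times the signed sum over connected spanning subgraphs of the interaction graph $G_{\mathbf W}$, and your convention $f(l,l)=-1$ matches the paper's rule that identical copies are adjacent in $G_{\mathbf W}$. Convergence is likewise controlled by loop decay, with the explicit threshold $c>\log(2(\Delta-1))+\tfrac{1}{2}$ in the role of your unspecified Koteck\'y--Preiss constant.
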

The formal statement along with the definition of the Ursell function can be found in the Supplementary Material~\cite{SM}.
The key advantage is that only \emph{connected} clusters appear. A cluster is said to be connected if every pair of loops in the cluster share an edge or vertex. This eliminates the combinatorial explosion of disconnected configurations and leads directly to convergence guarantees.

We state a condition on the tensor network under which convergence can be proven.

\begin{defn}[$c-$decay of loops]
  If all the loop weights for a given TN satisfy the following decay condition: 
    \begin{equation}
        |Z_l| \leq \order{e^{-c|l|}}
    \end{equation}
    for some $c > 0$, {where $|l|$ is the number of edges in loop $l$,} then we say that the TN satisfies $c-$decay of loops. 
\end{defn}
The following result from Ref.~\cite{midha2025beyond} establishes that $c-$decay above a constant threshold $c_0$, depending on the graph geometry, renders the cluster expansion an efficient algorithm for TN contraction. 

\begin{theorem}[Convergence of cluster expansion \cite{midha2025beyond} (Informal)]
    \label{thm:clusterconv}
    Let $F_m$ be the approximation obtained by truncating the cluster expansion at order $m$. Then, $c-$decay of loops with $c > c_0 = \order{\log\Delta}$, 
    {where $\Delta$ is the maximal degree of the graph,} ensures 
    \begin{equation}
        |F_m - \FF| \leq \order{N e^{-(c-c_0)(m+1)}}
    \end{equation}
    with $N$ being the number of vertices in the network.
\end{theorem}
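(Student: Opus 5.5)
The approach is the standard polymer-model argument: the Kotecký--Preiss criterion gives absolute convergence of the cluster expansion in \autoref{prop:clusterexp}, and a geometric tail bound then controls what is discarded at order $m$. I treat the loops $l\in\LL$ as polymers with activities $Z_l$, and two loops as incompatible when they share a vertex. \autoref{prop:loopseries} then says $\ZZ/Z_{\text{BP}}$ is exactly the partition function of a hard-core polymer gas. Hence $\FF-\FF_0=-\log(\ZZ/Z_{\text{BP}})=-\sum_{\text{conn. }\mathbf W}\phi_{\mathbf W}Z_{\mathbf W}$, with $\FF_0=-\log Z_{\text{BP}}$, whenever the series converges absolutely. $F_m$ keeps the clusters with $|\mathbf W|\le m$, so $|F_m-\FF|\le\sum_{|\mathbf W|>m}|\phi_{\mathbf W}||Z_{\mathbf W}|$, and the task reduces to bounding this tail.

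\textbf{Step 1 (counting loops).} I bound the number of connected edge sets of size $k$ containing a fixed vertex $v$. Each such set is a connected subgraph of $G$ with $k$ edges, and a spanning-walk (or Galton--Watson tree) encoding gives at most $(e\Delta)^k$ of them. Equivalently, $\#\{l\ni v: |l|=k\}\le e^{c_1 k}$ with $c_1=\log(e\Delta)=\order{\log\Delta}$. Since $|V(l)|\le |l|$ for a connected loop of minimum degree two, vertex counts can always be traded for edge counts.

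\textbf{Step 2 (Kotecký--Preiss with an edge-weighted tilt).} I apply the criterion with weight $a(l)=b|l|$ and a slack $\delta>0$. Fix a loop $l$ with $c$-decay $|Z_l|\le Ke^{-c|l|}$. Any incompatible loop $l'$ must contain a vertex of $l$, so
\begin{equation}
\sum_{l'\not\sim l}|Z_{l'}|e^{(b+\delta)|l'|}\le |l|\sum_{k\ge 2}K e^{-(c-c_1-b-\delta)k}.
\end{equation}
Once $c>c_1+b+\delta+\order{1}$, the right-hand side is at most $b|l|$. The criterion then yields the weighted tree-graph bound
\begin{equation}
\sum_{\mathbf W\ni v}|\phi_{\mathbf W}||Z_{\mathbf W}|e^{\delta|\mathbf W|}\le b,
\end{equation}
where the sum runs over connected clusters containing a given vertex $v$. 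This step requires the weighted (``$e^{\delta|\mathbf W|}$'') variant of Kotecký--Preiss, obtained by replacing each activity $Z_l$ with $Z_le^{\delta|l|}$; the activity bound above already includes that factor. I fix $b$ to be an $\order{1}$ constant and set $c_0=c_1+b+\order{1}=\order{\log\Delta}$.

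\textbf{Step 3 (tail bound).} Every connected cluster contains at least one vertex, so summing the Step 2 bound over the $N$ vertices gives
\begin{equation}
\sum_{|\mathbf W|>m}|\phi_{\mathbf W}||Z_{\mathbf W}|\le e^{-\delta(m+1)}\sum_{v}\sum_{\mathbf W\ni v}|\phi_{\mathbf W}||Z_{\mathbf W}|e^{\delta|\mathbf W|}\le Nb\,e^{-\delta(m+1)}.
\end{equation}
The slack can be taken as any $\delta<c-c_0$, and adjusting the constants turns this into $\order{Ne^{-(c-c_0)(m+1)}}$.

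\textbf{Main obstacle.} The hardest step is Step 2, where the exact decay rate $c-c_0$ has to come out of the Kotecký--Preiss argument rather than a degraded one. Specifically, the $|l|$ factor from the vertices of $l$ and the entropy $e^{c_1k}$ must both be absorbed into the exponential while keeping $\delta$ arbitrarily close to $c-c_0$. I would first try to absorb the polynomial prefactors into the definition of $c_0$. If that is not clean enough, I would use the fact that $k$ edges already force the decay $e^{-(c-c_1)k}$ to carry the loss. A secondary point is that the $Z_l$ are complex, so the criterion has to be used in its complex-activity form. That form is standard, since it depends only on $|Z_l|$, and it also guarantees that $\log$ is single-valued, consistent with the paper's convention that $\ZZ$ is positive.
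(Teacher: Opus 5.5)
Your proposal is correct and follows essentially the same route as the paper, whose formal version (Theorem~\ref{thm:clusterconvformal}) is imported from Ref.~\cite{midha2025beyond}: a Kotecky--Preiss-type per-site tail estimate on connected clusters with a weight linear in $|l|$ (Lemma~\ref{lem:cluster_tail}), followed by a union bound over the $N$ vertices, as the paper also does in its cluster-cumulant convergence lemma. The only differences are in the explicit constants (your $c_0=\log(e\Delta)+\order{1}$ versus the paper's $\log(2(\Delta-1))+\tfrac{1}{2}$), which do not matter for the informal $\order{\log\Delta}$ statement; and your worry about degrading the rate is unfounded, since with $c_0$ defined as you do you can take $\delta=c-c_0$ exactly.
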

From here onwards, we refer to $c-$decay of loops with any $c > c_0$ simply as ``loop decay." We note that the threshold $c_0$ is a worst-case bound. In many tensor network instances (e.g., the 2D classical Ising model at high temperatures) one finds that certain loop configurations (e.g., loops containing vertices of odd degree in the Ising model) vanish identically owing to the symmetries of the problem. In practice, one can obtain convergence for weaker decay constants $c_0$ (e.g., when all odd-degree loops are vanishing). \autoref{thm:clusterconv} ensures convergence in the worst case, where loops of all weights decay as $\sim e^{-c|l|}$ for some $c > c_0$. This ensures that the BP approximation to the free-energy \emph{density} $f = \FF/N$ satisfies
\begin{equation} \label{eq:bperror}
  |f_{\text{BP}} - f| \leq e^{-(c-c_0)}.
\end{equation}
{This yields an explicit bound on the BP error controlled by loop decay. We repeat the caveat that these expansions must be performed around a ``good'' BP fixed point. 
By ``good,'' we mean one for which there exists a uniform constant $c > 0$ such that the corresponding loop corrections satisfy $c$-decay.} 
If the fixed point is poor (e.g., near criticality), even the zeroth-order approximation fails, and higher-order corrections cannot rescue it. 
We return to this point in~\autoref{sect:confusion} below.

\subsection{Cluster-cumulant expansion}
While the cluster expansion converges rapidly under loop decay, it still has redundancy: each individual loop $l$ appears in infinitely many clusters (at all multiplicities $\eta = 1, 2, 3, \ldots$). Even for a single-loop network, the cluster expansion is an infinite series. The \emph{cumulant method} resolves this by summing all orders of a given loop configuration at once, yielding a finite expansion for finite loop sets~\cite{welling2012,park2025,gray2025}.

To sum up all orders of contributions from loops within a region, we define for any finite, connected subset $\Gamma \subseteq \LL$, a \emph{cumulant} $\mathfrak{K}: \Gamma \to \C$ as follows,

\begin{equation}
    \mathfrak K(\Gamma) := \sum_{\substack{\mathbf W\ \mathrm{connected}\\ \text{loop-set}(\mathbf W)=\Gamma}}
    \phi_{\mathbf W}\,Z_{\mathbf W}
\end{equation}
where the loop-set of a cluster is the set of loops with non-zero multiplicity in the cluster. With this rearrangement of the cluster contributions, we can restructure the expansion as follows, 

\begin{equation}
    \sum_{\text{connected }\mathbf W}
  =\sum_{\Gamma\subseteq\LL}\ \sum_{\substack{\mathbf W\ \text{connected}\\ \text{loop-set}(\mathbf W)=\Gamma}}.
\end{equation}

Crucially, the sum over clusters satisfying $\text{loop-set}(\mathbf W) = \Gamma$ can be done efficiently in a recursive fashion. This makes the rearrangement efficient. To do so, we define the \emph{restricted partition function} of a subset $B$ of the excitations $\LL$ as, 
\begin{equation}
\Xi(B)=1+\!\!\sum_{\substack{\Gamma'\subseteq B\\\Gamma'\ \mathrm{compatible}}}\!
   \prod_{l\in\Gamma'} Z_l.
\end{equation}
The cluster expansion from \autoref{prop:clusterexp} implies that,
\begin{equation}
    \log\Xi(B) = \sum_{\Gamma \subseteq B} \mathfrak{K}(\Gamma),
\end{equation}
Mathematically, this expression has the structure of a \emph{cumulative} function $\log\Xi$ defined on the ordered set of subsets, where the ordering is defined by set containment. By a M\"{o}bius transformation [see Supplementary~\cite{SM} for details], we obtain
\begin{equation} \label{eq:inclusionexclusion}
  \mathfrak{K}(\Gamma)
    = \sum_{B \subseteq \Gamma} (-1)^{|\Gamma|-|B|}\, \log\Xi(B),
\end{equation}
where $|X|$ denotes the number of loops in the set $X$.
This equation can equivalently be viewed as the inclusion-exclusion principle. With this, we arrive at the cluster-cumulant expansion on a tensor network. 
\begin{prop}[Cluster-cumulant expansion (Informal)] \label{prop:clustercumulantexp}
    The free energy can be written as the sum of cumulants of all connected subgraphs $\Gamma$ of the loop set $\LL$. 
    \begin{equation}\label{eq:cc-expansion}
  {\;
 \FF 
  = \FF_0
  -\sum_{\emph{conn.} \Gamma\subseteq\LL}
   \mathfrak K(\Gamma).
  \;}
\end{equation}
\end{prop}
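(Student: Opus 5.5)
The plan is to start from the cluster expansion of \autoref{prop:clusterexp}, $\FF = \FF_0 - \sum_{\text{conn. }\W}\phi_{\W}Z_{\W}$, and regroup its terms by loop-set. Every connected cluster $\W$ has a well-defined loop-set $\text{loop-set}(\W)=\Gamma\subseteq\LL$, namely the loops with nonzero multiplicity. This $\Gamma$ is finite.

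The first step is to check that $\Gamma$ is connected in the sense of the polymer (incompatibility) graph. Two loops are incompatible if they share an edge or vertex. The Ursell function $\phi_{\W}$ is a sum over connected spanning subgraphs of the incompatibility graph on the multiset $\W$, so it vanishes unless that multiset is connected. Collapsing repeated copies preserves connectivity, so the loop-set $\Gamma$ is a connected subset of $\LL$. Conversely, every connected finite $\Gamma$ arises as the loop-set of some connected cluster. Hence the map $\W\mapsto\text{loop-set}(\W)$ partitions the connected clusters into fibres indexed by connected finite $\Gamma$.

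Summing over each fibre gives exactly $\mathfrak K(\Gamma)$ by definition, and \autoref{eq:cc-expansion} follows formally:
\begin{equation}
\sum_{\text{conn. }\W}\phi_{\W}Z_{\W}=\sum_{\text{conn. }\Gamma}\ \sum_{\substack{\W\ \text{conn.}\\ \text{loop-set}(\W)=\Gamma}}\phi_{\W}Z_{\W}=\sum_{\text{conn. }\Gamma}\mathfrak K(\Gamma).
\end{equation}
For the statement to be meaningful, each $\mathfrak K(\Gamma)$ should also be computable as the finite alternating sum in \autoref{eq:inclusionexclusion}. For finite $B$, $\Xi(B)$ is a finite polynomial in the loop weights. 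Under the convergence hypotheses of \autoref{thm:clusterconv}, its logarithm equals the restricted cluster expansion $\log\Xi(B)=\sum_{\text{conn. }\W\subseteq B}\phi_{\W}Z_{\W}$. Grouping this by loop-set gives $\log\Xi(B)=\sum_{\Gamma\subseteq B}\mathfrak K(\Gamma)$, where $\mathfrak K(\Gamma)$ vanishes for disconnected $\Gamma$. M\"obius inversion on the Boolean lattice of subsets of $\Gamma$ then yields \autoref{eq:inclusionexclusion}.

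The main obstacle is justifying the rearrangement. Reordering an infinite series requires absolute convergence, $\sum_{\text{conn. }\W}|\phi_{\W}||Z_{\W}|<\infty$, and the same bound is needed for the restricted sums defining $\log\Xi(B)$. I would take this from the Koteck\'y--Preiss-type estimate underlying \autoref{thm:clusterconv}. Under $c$-decay with $c>c_0$, one has the bound $\sum_{\W\ni e,\ |\W|\ge m}|\phi_{\W}Z_{\W}|\le \order{e^{-(c-c_0)m}}$ per edge. Summing over edges gives absolute convergence of the full sum, with total size $\order{N}$. Absolute convergence licenses Fubini-style regrouping into fibres, makes each $\mathfrak K(\Gamma)$ well-defined, and gives the absolute summability of $\sum_\Gamma\mathfrak K(\Gamma)$. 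The same bound lets us choose the branch of $\log\Xi(B)$ consistently with the cluster series. The remaining care is purely combinatorial: confirming that $\phi_{\W}$ depends only on the incompatibility structure, so that the fibre decomposition and the vanishing on disconnected loop-sets both hold.
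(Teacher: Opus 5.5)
Your proposal is correct and follows essentially the paper's argument. The paper regroups the cluster expansion by loop-set, with $\mathfrak K(\Gamma)$ vanishing on disconnected sets because $\phi_{\W}$ does, and then obtains the inclusion--exclusion formula by M\"obius inversion of $\log\Xi(B)=\sum_{\Gamma\subseteq B}\mathfrak K(\Gamma)$ on the subset lattice. Your explicit use of absolute convergence under $c>c_0$ to justify the regrouping is a step the paper leaves implicit, so it strengthens the argument rather than changing it.
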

For example, a single-loop network has $\Gamma=\{l\}$, giving $\mathfrak{K}(\Gamma) = \log{(1+Z_l)}$---a single term that exactly captures all orders of that loop.

Computationally, evaluating the cumulant of a set $\Gamma$ with $m$ loops costs $\sim 2^m$ (exponential in $m$), the same scaling as an order-$m$ cluster correction. However, the cumulant representation is often more compact, requiring fewer terms for the same accuracy. Crucially, convergence is inherited from the cluster expansion: exponential loop decay guarantees exponential convergence in the cumulant series as well.

Before proceeding, let us remark upon an equivalent formulation of the cluster-cumulant expansion that comes from reorganizing the sum in~\autoref{eq:cc-expansion}:
\begin{align}\label{eq:cc-top-down}
\sum_{\mathrm{conn.} \Gamma \subseteq\LL} \mathfrak{K}(\Gamma) &= \sum_{\mathrm{conn.} \Gamma\subseteq \LL} \sum_{B \subseteq \Gamma} (-1)^{|\Gamma| - |B|} \log \Xi(B) \notag \\
&:= \sum_{\mathrm{conn.} B\subseteq \LL} b_\LL(B) \log \Xi(B)
\end{align}
where we have defined the ``counting number''
\begin{equation}
b_{\LL}(B) = 1 - \sum_{\mathrm{conn.}\, \Gamma \supset B} b_\LL(\Gamma).
\end{equation}
with $b_\LL(A) = 1$ for all parentless subsets at that order.~\autoref{eq:cc-top-down} is a ``top-down''  formulation of the inclusion-exclusion principle, with the counting numbers computed recursively starting from the largest subsets. 

From \autoref{eq:cc-top-down}, we can further make the connection to prior discussions of the cluster cumulant expansion that appear in the literature~\cite{welling2012,gray2025}. To do so, we shift perspective from subsets of generalized loops to vertex-induced connected subgraphs of $G$ (``regions'')~\cite{welling2012,gray2025}. After fixing a maximum region size $k$ and identifying parentless connected regions up to that size, the remaining  regions are obtained via recursive intersection. Further details on the construction of the regions and the associated expansion in $k$ are provided in the Supplemental Material~\cite{SM}. 

\subsection{The Fixed Point Problem}\label{sect:confusion}
When the message-passing equations possess a unique fixed point, checking loop decay is straightforward. In generic tensor networks, however, the fixed point may be non-unique, and the fixed point which minimizes the Bethe free energy $\FF_0$ may not be the appropriate starting point for a cluster or cluster-cumulant expansion. Even if there exists \textit{some} fixed point for which the expansion converges, it may be an unstable fixed point of the message passing algorithm, and thus will not be found unless one initializes precisely at that point. In frustrated models, message passing may fail to converge to any fixed point, unless one resorts to more sophisticated update schemes~\cite{NIPS2002,heskes2002}.

As an example, consider a model with a phase boundary between a paramagnetic and ferromagnetic phase. Because mean-field approximations neglect certain correlations and thus overestimate the tendency to order, one always encounters what we dub a ``confusion regime'' on the paramagnetic side of the transition. That is, the message-passing algorithm will converge towards a stable, symmetry-broken fixed point before the true phase boundary is reached. This stable fixed point is in the wrong phase and thus not a good basis for a cluster expansion. Fundamentally, this manifests as a consequence of approximating a lattice as a tree. In some cases, we can rescue the expansion by using an unstable ``paramagnetic'' fixed point, as in the classical 2D Ising model between the BP transition $\beta_{\rm BP} = \log(2)/2$ and the exact transition $\beta_c = \log(1+\sqrt{2})/2$. 
Finding unstable fixed points is a delicate matter, however: while one can explicitly solve for all translation-invariant fixed points of the BP equations at small constant bond dimensions, at moderate bond dimensions one must typically resort to iterative message passing, which generically misses the unstable solutions. One potential remedy, left to future work, is to impose the symmetry at the level of the tensors, and initialize message passing from symmetric messages.
\section{Local observables}
\label{sec:local_observables}
Having established the cluster expansion framework for partition functions, we now turn to the central question of this work: computing local expectation values in quantum states represented as PEPS. This is where the practical power of the expansion becomes apparent.

\subsection{Setup: PEPS and observable insertion}
Consider a quantum state $\ket{\Psi}$ represented as a PEPS on a graph $G$:

\[
\ket{\Psi}\;\equiv\;
\begin{tikzpicture}[
  scale=0.8,
  baseline={([yshift=-0.65ex] current bounding box.center)},
  tensor/.style={draw, thick, fill=tensorcolor, rounded corners=2pt, minimum size=8mm},
  bond/.style={very thick},
  phys/.style={very thick}
]

\def\dx{1.55}     
\def\dy{1.40}
\def\plen{0.60}   
\def\shear{0.0}  

\begin{scope}[xslant=\shear, yscale=0.98]

  \foreach \r in {0,1,2,3,4}{
    \foreach \c in {0,1,2,3,4}{
      \node[tensor] (T\c\r) at (\c*\dx, -\r*\dy) {};
      \coordinate (C\c\r) at (T\c\r.center);
    }
  }

  \foreach \r in {0,1,2,3,4}{
    \foreach \c in {0,1,2,3}{
      \draw[bond] (T\c\r.east) -- (T\the\numexpr\c+1\relax\r.west);
    }
  }

  \foreach \c in {0,1,2,3,4}{
    \foreach \r in {0,1,2,3}{
      \draw[bond] (T\c\r.south) -- (T\c\the\numexpr\r+1\relax.north);
    }
  }

\end{scope}

\foreach \r in {0,1,2,3,4}{
  \foreach \c in {0,1,2,3,4}{
    \draw[phys] (C\c\r) -- ++(-\plen,\plen);
  }
}

\end{tikzpicture}
\]

The \emph{norm network} $\ZZ = \braket{\psi}{\psi}$ is obtained by contracting the bra and ket:

\[  
\langle\Psi|\Psi\rangle \;\equiv\;
\begin{tikzpicture}[
  scale=0.8,
  baseline={([yshift=-0.65ex] current bounding box.center)},
  tensor/.style={
    draw,
    line width=0.9pt,
    fill=tensorcolor!80!black, 
    rounded corners=2pt,
    minimum size=8mm
  },
  bond/.style={line width=2.0pt} 
]

\def\dx{1.55}
\def\dy{1.40}
\def\shear{0.0}

\begin{scope}[xslant=\shear, yscale=0.98]

  \foreach \r in {0,1,2,3,4}{
    \foreach \c in {0,1,2,3,4}{
      \node[tensor] (T-\c-\r) at (\c*\dx, -\r*\dy) {};
    }
  }

  \foreach \r in {0,1,2,3,4}{
    \foreach \c in {0,1,2,3}{
      \pgfmathtruncatemacro{\cp}{\c+1}
      \draw[bond] (T-\c-\r.east) -- (T-\cp-\r.west);
    }
  }

  \foreach \c in {0,1,2,3,4}{
    \foreach \r in {0,1,2,3}{
      \pgfmathtruncatemacro{\rp}{\r+1}
      \draw[bond] (T-\c-\r.south) -- (T-\c-\rp.north);
    }
  }

\end{scope}
\end{tikzpicture}
\]

As before, we denote the set of connected generalized loops on $G$ as $\LL$. We can apply the BP approach along with the associated cluster expansion machinery to this network, denoting a fixed point by $\MM = \{\mu_{v\to w}\}$.

Now consider a local observable $O_A$ supported on a finite region $A \subset V$ with $|A| = \order{1}$. Inserting this observable creates a modified tensor network $\ZZ^A = \expect{\psi}{O_A}{\psi}$, which differs from $\ZZ$ only within region $A$:

\[
\langle\Psi|\,O_A\,|\Psi\rangle \;\equiv\;
\begin{tikzpicture}[
  scale=0.8,
  baseline={([yshift=-0.65ex] current bounding box.center)},
  tensor/.style={
    draw,
    line width=0.9pt,
    fill=tensorcolor!80!black,
    rounded corners=2pt,
    minimum size=8mm
  },
  optensor/.style={
    draw,
    line width=0.9pt,
    fill=opmaroon,
    rounded corners=2pt,
    minimum size=8mm
  },
  bond/.style={line width=2.0pt}
]

\def\dx{1.55}
\def\dy{1.40}
\def\shear{0.}

\def\cA{2}
\def\rA{2}

\begin{scope}[xslant=\shear, yscale=0.98]

  \foreach \r in {0,1,2,3,4}{
    \foreach \c in {0,1,2,3,4}{
      \ifnum\c=\cA
        \ifnum\r=\rA
          \node[optensor] (T-\c-\r) at (\c*\dx, -\r*\dy) {};
        \else
          \node[tensor]   (T-\c-\r) at (\c*\dx, -\r*\dy) {};
        \fi
      \else
        \node[tensor]     (T-\c-\r) at (\c*\dx, -\r*\dy) {};
      \fi
    }
  }

  \foreach \r in {0,1,2,3,4}{
    \foreach \c in {0,1,2,3}{
      \pgfmathtruncatemacro{\cp}{\c+1}
      \draw[bond] (T-\c-\r.east) -- (T-\cp-\r.west);
    }
  }

  \foreach \c in {0,1,2,3,4}{
    \foreach \r in {0,1,2,3}{
      \pgfmathtruncatemacro{\rp}{\r+1}
      \draw[bond] (T-\c-\r.south) -- (T-\c-\rp.north);
    }
  }

\end{scope}
\end{tikzpicture}
\]
The expectation value is then a ratio of partition functions:
\begin{equation}\label{eq:ratio}
    \expval{O_A}_\psi = \frac{\expect{\psi}{O_A}{\psi}}{\braket{\psi}{\psi}} \equiv \frac{\ZZ^A}{\ZZ}
\end{equation}

Alternatively, we can express the partition function in terms of a derivative of the free energy:
\begin{equation}\label{eq:derivative}
    \expval{O_A}_\psi = -\frac{\partial}{\partial \lambda}\mathcal{F}_\lambda(O_A)\Big|_{\lambda=0}
\end{equation}
where we have defined the perturbed free energy $\mathcal{F}_\lambda( O_A) = -\log \expect{\psi}{\exp[\lambda O_A]}{\psi}$. As we will see below, these two viewpoints lead to two different expansions for expectation values.

\subsection{Cluster expansion for local observables}
We now use the technology of \autoref{prop:clusterexp} to write down cluster expansions for local expectation values. This is achieved in two ways, either by expanding both the numerator and denominator as in \autoref{eq:ratio} or through the derivative method of \autoref{eq:derivative}.

We start by expanding the numerator and denominator in \autoref{eq:ratio} using the same BP fixed point $\MM = \{\mu_{v\to w}\}_{(v,w)\in E}$ obtained from the norm network $\ZZ = \inner{\psi}{\psi}$. However, since $\ZZ^A=\expect{\psi}{O_A}{\psi}$ differs from $\ZZ$ within region $A$, the BP fixed point is no longer perfectly consistent there---the operator insertion breaks the self-consistency condition locally.  

Mathematically, we handle this by extending the loop set to include \emph{strings} which can terminate on $A.$ We substitute $\LL \to \LL_A$, consisting of connected ``stringy" subgraphs which are closed in the bulk and allowed to terminate in $A$. This captures the fact that the operator modifies the local environment. Clusters $\mathbf{W}$ can be defined similarly on the set of strings, and the notion of connectedness of a cluster holds as before. We denote evaluation of clusters on $\ZZ$ and $\ZZ^A$ as $Z_{\mathbf{W}}$ and $Z^A_{\mathbf{W}}$, respectively.

The cluster expansion then gives us an exact formula for the expectation value in terms of only those clusters that overlap with the observable region:

\begin{prop}[Local observable expansion, ratio version (Informal)] \label{prop:localexpexpansion}
    Local expectation values admit the following expansion in terms of clusters over $\LL_A$:
     \begin{equation} \label{eq:expvalexpansion}
    \expval{O_A} = \expval{O_A}_{\emph{BP}} \, \exp{\left(\sum_{\substack{\emph{conn} \, \mathbf{W} \leftarrow \LL_A \\ 
   \emph{supp}(\mathbf{W}) \cap A \neq \emptyset }} \phi_{\mathbf{W}} (Z^A_{\mathbf{W}} - Z_{\mathbf{W}})\right)}
\end{equation}
where the sum includes only connected clusters whose support intersects the region $A$. 
\end{prop}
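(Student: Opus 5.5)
We will apply the cluster expansion of \autoref{prop:clusterexp} separately to the numerator $\ZZ^A$ and the denominator $\ZZ$ of \autoref{eq:ratio}, using the same BP fixed point $\MM$ (the one obtained from the norm network) for both. Only the ratio then has to be controlled.

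\textbf{Step 1: BP prefactors.} Resolve the identity as in \autoref{eq:identityresolution} on every edge of both networks. The all-vacuum terms give $Z_{\text{BP}}$ and $Z^A_{\text{BP}}$, and their ratio is by definition $\expval{O_A}_{\text{BP}}$. This ratio equals the local contraction of $O_A$ (sandwiched between bra and ket tensors on $A$) against the incoming messages on $\partial A$, normalized by the same contraction with the identity.

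\textbf{Step 2: loop series with strings.} Expand both networks over excitation configurations. In $\ZZ$, self-consistency at every vertex kills dangling edges, so only generalized loops in $\LL$ survive. In $\ZZ^A$, self-consistency fails only at vertices of $A$, so dangling ends are allowed precisely there; this is the set $\LL_A$. We regard $\LL\subseteq\LL_A$ and use the convention $Z_{\mathbf W}=0$ whenever $\mathbf W$ contains a string with an endpoint in $A$, i.e.\ an element of $\LL_A\setminus\LL$. Both normalized partition functions then become polymer gases on the same polymer set $\LL_A$ with the same compatibility relation (vertex/edge disjointness). The two gases differ only in the weights of polymers touching $A$: for a polymer $\gamma$ with $\supp(\gamma)\cap A=\emptyset$, all tensors it visits and all messages it sees are identical in the two networks. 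One must check that the normalization by $Z^A_{\text{BP}}$ is local as well, i.e.\ that the vertex factors on $A$ are divided consistently, so that $Z^A_\gamma=Z_\gamma$ whenever $\gamma$ avoids $A$.

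\textbf{Step 3: cluster expansion.} Apply \autoref{prop:clusterexp} to each gas:
\[
\log\frac{\ZZ^A}{Z^A_{\text{BP}}}-\log\frac{\ZZ}{Z_{\text{BP}}}=\sum_{\text{conn. }\mathbf W\leftarrow\LL_A}\phi_{\mathbf W}\bigl(Z^A_{\mathbf W}-Z_{\mathbf W}\bigr).
\]
Because the Ursell function depends only on the cluster geometry, which is common to both gases, every cluster whose support avoids $A$ contributes zero to this difference. Exponentiating and combining with Step 1 yields \autoref{eq:expvalexpansion}.

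\textbf{Main obstacle.} The main technical issue is justifying the termwise subtraction. The expansions must be absolutely convergent for both weight systems, and this requires loop decay for the strings in $\ZZ^A$ as well as for the loops in $\ZZ$. Before subtracting, I would verify that a Koteck\'y--Preiss criterion holds uniformly for both gases under the hypothesis of \autoref{thm:clusterconv}. A subsidiary subtlety is the choice of branch of the logarithm for the complex-valued $\ZZ^A$. I would fix the branch by continuity, interpolating the insertion from the identity to $O_A$ via $(1-t)\mathbbm 1+tO_A$ and staying inside the convergence region, which also produces the relevant sign and phase of $\expval{O_A}_{\text{BP}}$.
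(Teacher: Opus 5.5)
Your proposal is correct and follows essentially the paper's own proof. Both expand $\ZZ^A$ and $\ZZ$ around the same norm-network fixed point and regard $\LL\subseteq\LL_A$, with zero weight on open strings. Clusters avoiding $A$ then have identical weights and Ursell coefficients in both expansions, so they cancel, and one exponentiates. Your insistence on absolute convergence for both weight systems matches a hypothesis the paper only imposes later, in its error theorem.

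The one superfluous piece is the branch-fixing interpolation $(1-t)\mathbbm 1+tO_A$. It could even break down: the local BP factor $(1-t)z_A+t\,z_A(O)$ vanishes for some $t\in(0,1)$ whenever $z_A(O)/z_A$ is a negative real number. Under absolute convergence the cluster expansion gives $\ZZ^A/Z^A_{\text{BP}}=\exp\bigl(\sum_{\mathbf W}\phi_{\mathbf W}Z^A_{\mathbf W}\bigr)$ directly, so dividing the two exponential identities yields the result without ever choosing a branch of the logarithm.
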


Alternatively, we can leverage \autoref{eq:derivative} to obtain a cluster expansion in which clusters contribute \textit{additively}. Since the insertion of $\exp[\lambda O_A]$ only affects the tensors on $A$, it is manifest that only clusters which intersect $A$ will contribute to the derivative. Again, in the expansion for the tensor network $\ZZ^A_\lambda = \expect{\psi}{\exp[\lambda O_A]}{\psi}$, we use the $\lambda=0$ BP fixed point, denoting the corresponding cluster evaluations as ${Z_\lambda^A}_{\mathbf{W}}$. With this, we arrive at the additive expansion of local expectation values.
\begin{prop}[Local observable expansion, derivative version (Informal)] \label{prop:localexpexpansion-derivative}
Local expectation values admit the following expansion in terms of clusters over $\LL_A$:
\begin{equation}\label{eq:localobs-deriv}
    \expval{O_A} = \expval{O_A}_{\emph{BP}} + \sum_{\substack{\emph{conn} \, \mathbf{W} \leftarrow \LL_A \\ 
   \emph{supp}(\mathbf{W}) \cap A \neq \emptyset }} \phi_{\mathbf{W}} \partial_\lambda  \loopZ{\mathbf{W}}\Big|_{\lambda=0}.
\end{equation} 
\end{prop}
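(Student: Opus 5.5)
We start from \autoref{eq:derivative}, writing $\expval{O_A} = -\partial_\lambda \FF_\lambda(O_A)|_{\lambda=0}$ with $\FF_\lambda(O_A) = -\log \ZZ^A_\lambda$ and $\ZZ^A_\lambda = \expect{\psi}{\exp[\lambda O_A]}{\psi}$. We expand $\ZZ^A_\lambda$ around the \emph{fixed} $\lambda=0$ BP messages $\MM$ of the norm network. For $\lambda \neq 0$ the tensors on $A$ (bra, $e^{\lambda O_A}$, ket) are modified, so the self-consistency condition fails on edges incident to $A$. Inserting the resolution \autoref{eq:identityresolution} on every edge, the nonvanishing excitations are therefore exactly the elements of $\LL_A$: connected subgraphs that are leafless in the bulk but may have degree-one vertices in $A$.

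The analogue of \autoref{prop:loopseries} then reads
\begin{equation*}
\ZZ^A_\lambda = Z^A_{\mathrm{BP},\lambda}\Bigl[1+\sum_{\Gamma\subseteq\LL_A \text{ compatible}}\prod_{\gamma\in\Gamma}\loopZ{\gamma}\Bigr],
\end{equation*}
where $Z^A_{\mathrm{BP},\lambda}$ is the BP contraction with the modified tensors on $A$, and each weight $\loopZ{\gamma}$ is normalized by the BP vertex contractions. This normalization makes every weight local: it depends only on tensors on its own support, divided by the local BP factors there. Applying the cluster expansion of \autoref{prop:clusterexp} to this polymer gas, with compatibility meaning vertex-disjointness as before, gives
\begin{equation*}
\FF_\lambda(O_A) = -\log Z^A_{\mathrm{BP},\lambda} - \sum_{\mathrm{conn}\,\mathbf W\leftarrow\LL_A}\phi_{\mathbf W}\,\loopZ{\mathbf W}.
\end{equation*}
Here $\phi_{\mathbf W}$ depends only on the incompatibility graph, which does not depend on $\lambda$.

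Next we differentiate at $\lambda=0$. The BP term factorizes over vertices, and only the factors at $A$ depend on $\lambda$. So $-\partial_\lambda(-\log Z^A_{\mathrm{BP},\lambda})|_0$ equals the local environment contraction of $O_A$ against the messages, normalized by the same contraction with $O_A$ replaced by the identity. This is $\expval{O_A}_{\mathrm{BP}}$.

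For the clusters, the cluster evaluation is $\loopZ{\mathbf W}=\prod_i (\loopZ{l_i})^{\eta_i}$. If the support of $\mathbf W$ misses $A$, every factor is $\lambda$-independent and the term drops out under $\partial_\lambda$. A string with an endpoint in $A$ necessarily intersects $A$. A string with no endpoint in $A$ is a genuine loop, so it belongs to $\LL$. Hence the surviving terms are exactly the connected clusters with $\supp(\mathbf W)\cap A\neq\emptyset$, which is \autoref{eq:localobs-deriv}.

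The main obstacle is justifying the interchange of $\partial_\lambda$ with the infinite cluster sum. Our plan is to show that the series converges absolutely and uniformly for complex $\lambda$ in a small disk $|\lambda|\le r$, via the Kotecký–Preiss criterion underlying \autoref{thm:clusterconv}. For bulk loops the weights are $\lambda$-independent and satisfy $c$-decay. For strings touching $A$ we would need the analogous bound $|\loopZ{\gamma}|\le C e^{-c|\gamma|}$ uniformly on the disk. We would assume this as part of the loop-decay hypothesis for $\LL_A$, which is plausible because $|A|=\order{1}$ perturbs only boundedly many tensors, by a factor at most $e^{|\lambda|\|O_A\|}$.

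Two further points need care. First, the connected clusters touching the finite set $A$ have a summable total weight: by Kotecký–Preiss the sum over clusters incompatible with a fixed region is bounded by a multiple of $|A|$. Second, we need $Z^A_{\mathrm{BP},\lambda}\neq 0$ near $\lambda=0$, which holds because $\ZZ>0$. Given these, each term is holomorphic in $\lambda$, and the locally uniform limit of holomorphic functions is holomorphic (Weierstrass). This makes termwise differentiation legitimate and also ensures the logarithm is well defined.
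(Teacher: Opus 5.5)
Your route is the paper's route. You cluster-expand $\log\ZZ^A_\lambda$ around the $\lambda=0$ fixed point with excitation set $\LL_A$, then differentiate termwise: the BP factor at $A$ gives $\expval{O_A}_{\mathrm{BP}}$, and clusters missing $A$ are $\lambda$-independent. The paper goes one step further and evaluates $\partial_\lambda \loopZ{\mathbf W}|_{\lambda=0}$ explicitly. It writes each normalized weight as an unnormalized one divided by local BP factors $z_{v,\lambda}$, obtaining $\phi_{\mathbf W}Z_{\mathbf W}\sum_{l\in\mathbf W_A}\eta_l\,[Z^A_l/Z_l-\expval{O_A}_{\mathrm{BP}}]$; the statement itself does not need this.

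\textbf{Step that needs repair: merge $A$ into a single supervertex.} The paper's proof does this explicitly. As written, you keep the sites of $A$ as separate vertices, take compatibility to be vertex-disjointness, and assert that each weight depends only on tensors on its own support. But $e^{\lambda O_A}$ couples the sites of $A$; already $1+\lambda O_A$ does, even for a product $O_A$. So for two vertex-disjoint strings ending on different sites of $A$, the weight of their union is not the product of their weights. The polymer-gas form of the loop expansion therefore fails for $\lambda\neq 0$, which is exactly where you need it in order to differentiate. After merging, every excitation touching $A$ contains the vertex $A$, any two such excitations are incompatible, factorization holds, and the rest of your argument goes through unchanged.

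\textbf{Interchange of $\partial_\lambda$ and the cluster sum.} Your justification (uniform convergence on a complex disk plus Weierstrass) is something the paper does not supply; it differentiates formally. This is a genuine strengthening, with two caveats.

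First, your heuristic that the perturbation changes weights by ``a factor at most $e^{|\lambda|\|O_A\|}$'' is wrong for open strings in $\LL_A\setminus\LL$. Their weight vanishes identically at $\lambda=0$ by the fixed-point condition, yet need not decay with string length for $\lambda\neq0$. This is precisely the caveat in the paper's note on loop vs.\ string decay, so uniform string decay must be assumed, not derived this way. A cheaper route is to replace $e^{\lambda O_A}$ by $1+\lambda O_A$, which has the same first logarithmic derivative at $\lambda=0$; this is effectively what the paper's $z_{A,\lambda}=z_A+\lambda z_A(O)$ encodes. The unnormalized weights then become affine in $\lambda$. Uniform decay on a small disk then follows from decay on $\ZZ$ and $\ZZ^A$, which are the hypotheses the paper actually imposes.

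Second, $Z_{\mathrm{BP}}\neq 0$ (i.e.\ nonvanishing local factors $z_v$) is a standing assumption of the normalized expansion, not a consequence of $\ZZ>0$.
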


For a local region of $2\times 2$ sites, we illustrate a few strings involved in the expansion of \autoref{prop:localexpexpansion}:
\[
\begin{tikzpicture}[
  baseline={([yshift=-0.65ex] current bounding box.center)},
  tensor/.style={
    draw,
    line width=1pt,
    fill=tensorcolor!80!black,
    rounded corners=2pt,
    minimum size=8mm 
  },
  optensor/.style={
    draw,
    line width=1.2pt,
    fill=opmaroon,
    rounded corners=2pt,
    minimum size=8mm
  },
  bond/.style={line width=1.6pt, draw=black},
  redbond/.style={line width=2.4pt, draw=red} 
]

\def\dx{1.3}
\def\dy{1.3}
\def\shear{0.0}

\begin{scope}[xslant=\shear, yscale=0.98]

  \foreach \r in {0,...,5}{
    \foreach \c in {0,...,5}{
      \node[tensor] (T-\c-\r) at (\c*\dx, -\r*\dy) {};
    }
  }

  \node[optensor] (T-2-2) at (2*\dx, -2*\dy) {};
  \node[optensor] (T-3-2) at (3*\dx, -2*\dy) {};
  \node[optensor] (T-2-3) at (2*\dx, -3*\dy) {};
  \node[optensor] (T-3-3) at (3*\dx, -3*\dy) {};

  \foreach \r in {0,...,5}{
    \foreach \c in {0,...,4}{
      \pgfmathtruncatemacro{\cp}{\c+1}
      \draw[bond] (T-\c-\r.east) -- (T-\cp-\r.west);
    }
  }

  \foreach \c in {0,...,5}{
    \foreach \r in {0,...,4}{
      \pgfmathtruncatemacro{\rp}{\r+1}
      \draw[bond] (T-\c-\r.south) -- (T-\c-\rp.north);
    }
  }


  \draw[redbond] (T-2-2.north) -- (T-2-1.south);
  \draw[redbond] (T-2-1.west)  -- (T-1-1.east);
  \draw[redbond] (T-1-1.north) -- (T-1-0.south);
  \draw[redbond] (T-1-0.east)  -- (T-2-0.west);
  \draw[redbond] (T-2-0.south) -- (T-2-1.north);

  \draw[redbond] (T-2-3.south) -- (T-2-4.north);
  \draw[redbond] (T-2-4.west)  -- (T-1-4.east);
  \draw[redbond] (T-1-4.north) -- (T-1-3.south);
  \draw[redbond] (T-1-3.east)  -- (T-2-3.west);
  \draw[redbond] (T-1-4.south) -- (T-1-5.north);
  \draw[redbond] (T-1-5.west)  -- (T-0-5.east);
  \draw[redbond] (T-0-5.north) -- (T-0-4.south);
  \draw[redbond] (T-0-4.east)  -- (T-1-4.west);

  \draw[redbond] (T-3-2.east)  -- (T-4-2.west);
  \draw[redbond] (T-4-2.east)  -- (T-5-2.west);
  \draw[redbond] (T-5-2.south) -- (T-5-3.north);
  \draw[redbond] (T-5-3.west)  -- (T-4-3.east);
  \draw[redbond] (T-4-3.west)  -- (T-3-3.east);

\end{scope}
\end{tikzpicture}
\]
The lowest-weight clusters are exactly the set of such single strings, with non-linearity in the expansion \autoref{prop:localexpexpansion} formed by higher-weight connected combinations.
This result says that the exact expectation value equals the BP prediction, dressed by the cluster corrections that intersect the observable. Clusters not intersecting $A$ do not contribute---they cancel exactly between numerator and denominator, or vanish under the derivative. This locality enables the computational efficiency {of the expansion}.

\subsection{Convergence and algorithmic implications}

When do these expansions converge? The answer is controlled by the same decay condition that governed the free-energy expansion, on the set of excitations $\LL_A$. If loops (or strings terminating on $A$) decay exponentially, then clusters that touch $A$ but extend to distance $r$ are suppressed as $\sim e^{-cr}$, ensuring rapid convergence. The decay of strings in $\LL_A$ is parametrically consistent with the decay of loops in $\LL$, since decay is a bulk property, unchanged by a changing tensors in a local region $A$.\footnote{Technically, we can ensure convergence by demanding decay of strings in $\LL_A$ on both $\ZZ$ and $\ZZ^A$. While typically consistent with the decay of loops in $\LL$ on $\ZZ$, it is not strictly guaranteed; see~\cite{SM} for details.}

Truncating the series at finite order yields an efficient algorithm with rigorous error bounds:

\begin{theorem}[Estimating local observables (Informal)]\label{thm:localobsalgorithm}
    Given decay of strings on $\LL_A$, truncating the cluster expansion for a local observable $O$ {with $\|O\|=1$} at order $m$ leads to:
    \begin{enumerate}
        \item[(i)] Relative error in the ratio expansion \autoref{eq:expvalexpansion},
        \begin{equation}
            \left|\frac{\expval{O} - \expval{O}_m}{\expval{O}}\right|  \leq  \order{|A| e^{-d(m+1)}}
        \end{equation}
        \item[(ii)] Additive error in the derivative expansion \autoref{eq:localobs-deriv},
            \begin{equation}
        |\expval{O}-\expval{O}_m| \leq \order{|A| m e^{-d(m+1)}} 
    \end{equation}
    \end{enumerate}

    where $d=c-c_0=\order{1}$ is a constant.
\end{theorem}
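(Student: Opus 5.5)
We prove the two parts separately, but both reduce to one ingredient: a bound on the tail of the cluster series restricted to clusters pinned to $A$. By the (formal) cluster expansion underlying \autoref{prop:clusterexp}, and under decay of strings in $\LL_A$ with constant $c>c_0$ on both $\ZZ$ and $\ZZ^A$ (or on $\ZZ^A_\lambda$ for $\lambda$ in a small complex disc), we have the standard Koteck\'y--Preiss type estimate. For every vertex $v$,
\begin{equation}
\sum_{\substack{\text{conn } \mathbf W,\ \supp(\mathbf W)\ni v\\ |\mathbf W|\ge m+1}} |\phi_{\mathbf W}|\,|Z_{\mathbf W}| \le C\, e^{-(c-c_0)(m+1)} .
\end{equation}
This is the per-vertex version of \autoref{thm:clusterconv}; the factor $N$ there is exactly the count of possible anchoring vertices. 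I will first isolate this pinned-tail lemma from the proof of \autoref{thm:clusterconv}. The set of connected clusters rooted at $v$ of order $k$ has size at most $e^{c_0 k}$ (up to polynomial factors absorbed in $c_0$), and the Ursell function $\phi_{\mathbf W}$ is controlled by tree-graph bounds. Applying a union bound over $v\in A$ then gives the factor $|A|$.

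For part (i), write $\expval{O}=\expval{O}_{\rm BP}\exp(S)$, where $S$ is the sum in \autoref{eq:expvalexpansion}, and let $\expval{O}_m=\expval{O}_{\rm BP}\exp(S_m)$ with $S_m$ its truncation to $|\mathbf W|\le m$. The relative error is then
\begin{equation}
\left|1-e^{S_m-S}\right| \le |S-S_m|\,e^{|S-S_m|}.
\end{equation}
Moreover $|S-S_m|$ is at most the pinned tail for $Z^A$ plus the pinned tail for $Z$, summed over $v\in A$, which is $\order{|A|e^{-d(m+1)}}$. In the regime where this quantity is at most one, this gives (i). Note that the ratio form makes the bound relative automatically; $\expval{O}_{\rm BP}$ cancels and need not be bounded away from zero.

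For part (ii), the terms are $\partial_\lambda Z^A_{\mathbf W,\lambda}|_{\lambda=0}$. Here I will use the Cauchy integral formula on a disc $|\lambda|\le r$ with $r=\order{1}$. Since $\|O\|=1$, we have $\|e^{\lambda O_A}\|\le e^{r}$, so the string decay persists uniformly on the disc, with constant $c-\order{r}$. The tail of the expansion of $\log \ZZ^A_\lambda$, pinned to $A$, is therefore analytic and bounded by $\order{|A|e^{-d(m+1)}}$ on the disc, and Cauchy converts this into a bound on the $\lambda$-derivative at $0$ with an extra $1/r$. Alternatively, I can differentiate each cluster termwise: $\partial_\lambda$ of a product of $|\mathbf W|$ string weights produces at most $|\mathbf W|$ terms, each carrying the same decay. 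Summing $k\,e^{-(c-c_0)k}$ over $k\ge m+1$ yields the stated $m\,e^{-d(m+1)}$ factor, which is the form in the theorem.

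The main obstacle is justifying uniform string decay for the perturbed networks $\ZZ^A$ and $\ZZ^A_\lambda$ when these are expanded around the $\lambda=0$ fixed point. Their local tensors on $A$ are no longer BP-consistent, so $\LL_A$ contains open strings ending in $A$, and their weights must be bounded uniformly in $\lambda$. I will handle this by taking string decay on $\LL_A$ as the hypothesis, as the statement does. I will then check that multiplying at most $|A|$ vertex tensors by $e^{\lambda O}$ changes each string weight by at most a factor $e^{\order{|A|}}$, which is an order-one constant that can be absorbed into the $\order{\cdot}$.
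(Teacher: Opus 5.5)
Your part (i) is the paper's argument essentially verbatim. The relative error is $|e^{S_m-S}-1|\le|S-S_m|\,e^{|S-S_m|}$. The tail $S-S_m$ splits into its $Z^A$ and $Z$ pieces over clusters touching $A$, and each piece is bounded by the pinned tail estimate (the paper imports this as Lemma~\ref{lem:cluster_tail} rather than re-deriving it).

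For part (ii), the paper uses exactly your ``alternative'' termwise route. Proposition~\ref{prop:localobsfreeformal} writes the derivative of each cluster as $\phi_{\mathbf W}Z_{\mathbf W}\sum_{l\in\mathbf W_A}\eta_l\,[Z^A_l/Z_l-\expval{O_A}_{\text{BP}}]$. The $-\expval{O_A}_{\text{BP}}$ term comes from differentiating the BP normalization $z_{A,\lambda}$, which your sketch should make explicit. The paper then uses $|\expval{O_A}_{\text{BP}}|=\order{1}$, $\sum_{l\in\mathbf W_A}\eta_l\le|\mathbf W|$, the tail bound at each fixed order $k$, and $\sum_{k>m}k\,e^{-dk}=\order{m\,e^{-d(m+1)}}$. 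This needs decay only on $\ZZ$ and $\ZZ^A$, which is exactly the stated hypothesis.

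Your primary Cauchy-integral route for (ii) is genuinely different, but as written it does not follow from that hypothesis. Write $\tilde Z_l[X]$ for the unnormalized weight with $X$ inserted on $A$. The unnormalized weights on $\ZZ^A_\lambda$ are then $\sum_{n\ge0}\frac{\lambda^n}{n!}\tilde Z_l[O^n]$. Decay with the insertions $\mathbbm{1}$ and $O$ says nothing about $O^n$ for $n\ge2$ unless, e.g., $O^2=\mathbbm{1}$.

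Your proposed patch, that multiplying the tensors on $A$ by $e^{\lambda O}$ changes each string weight by at most a factor $e^{\order{|A|}}$, is false. An insertion produces a different contraction, not a rescaled one. For open strings in $\LL_A\setminus\LL$ one has $Z_l=0$ while $Z^A_{l,\lambda}\neq0$, which is precisely the caveat in the paper's note on loop vs.\ string decay.

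To make the Cauchy route rigorous you would need extra assumptions:
\begin{enumerate}
\item Assume decay of the full operator-valued string tensor on $A$ (e.g.\ in trace norm, or equivalently for a basis of operators on $A$). Linearity then gives $|\tilde Z_l[e^{\lambda O}]|\le e^{r}\times(\text{decay})$.
\item Require $\expval{e^{\lambda O}}_{\text{BP}}$ to stay bounded away from zero on the disc, so the normalized weights remain controlled.
\end{enumerate}

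Under that stronger hypothesis, Cauchy gives $\order{|A|e^{-(c'-c_0)(m+1)}/r}$, with $c'$ the decay rate on the disc. This is a clean alternative. With your $c-\order{r}$ degradation, choosing $r\sim1/m$ just reproduces the paper's $m\,e^{-d(m+1)}$. For the theorem as stated, you should rely on the termwise argument.
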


The error decays \emph{exponentially} in the cluster order $m$. This means that in gapped phases with fast loop decay, even low-order corrections (small $m$) can achieve high accuracy. Conversely, near criticality where loop decay weakens, higher-order corrections become necessary---or the expansion may fail to converge altogether, signaling the breakdown of the BP approximation.

\section{Correlation functions}
\label{sec:correlation_fns}

We now turn to the question of the physical interpretation of the ``loop tensors." What do they tell us about the quantum state itself? The key insight is that loop corrections are intimately related to the spatial decay of connected correlations in the state. The decay of loops (and strings on $\LL_{AB}$, see below) condition is sufficient to show exponential decay of connected correlations with a finite correlation length.

\subsection{Intuition: From MPS to PEPS}
To understand this connection, it is helpful to first recall how correlations arise in matrix product states. Consider a translation-invariant MPS:

\begin{align}
   \ket{\Psi} =  \dotso \quad
\begin{tikzpicture}[scale=0.5,baseline={([yshift=-0.65ex] current bounding box.center) }]
        \GTensor{0,0}{1.2}{.6}{\small $V$}{9};
      \GTensor{1*\singledx,0}{1.2}{.6}{\small $V$}{9};
      \GTensor{2*\singledx,0}{1.2}{.6}{\small $V$}{9};
      \GTensor{3*\singledx,0}{1.2}{.6}{\small $V$}{9};
      \GTensor{4*\singledx,0}{1.2}{.6}{\small $V$}{9};
\end{tikzpicture}
    \quad \dotso
\end{align}
where each $V$ is a $D \times D \times d$ tensor, with $d$ the on-site physical Hilbert space dimension. To this MPS one associates the ``transfer matrix'' $T_V$:
\begin{align}
  T_V := \quad  
\begin{tikzpicture}[scale=0.5,baseline={([yshift=-0.65ex] current bounding box.center) }]
      \GTensor{0,0}{1.2}{.6}{\small $V$}{9};
      \GTensor{0,-2}{1.2}{.6}{\small $\bar{V}$}{6};
    \end{tikzpicture}
    \quad  
\end{align} 

In one dimension, correlations between two distant regions $A$ and $B$ are mediated by a \emph{single path} through the intervening sites. Normalizing the leading eigenvalue of $T_V$ to unity, connected correlations on this MPS $\expval{O_AO_B}_c = \expval{O_AO_B} - \expval{O_A}\expval{O_B}$ decay as $\sim \lambda^{d(A,B)}$, where $d(A,B)$ is the distance between $A$ and $B$, and $\lambda < 1$ is the sub-leading eigenvalue of $T_V$, corresponding to a spectral gap of $(1-\lambda)$. Thus, a gapped transfer matrix in 1D is sufficient to guarantee exponentially decaying correlations with a finite correlation length.

\subsection{The challenge on general graphs}
For a PEPS on a general graph $G= (V,E)$, the situation is fundamentally different. Now there are \emph{multiple paths} connecting any two regions $A$ and $B$, owing to the existence of loops in the graph. Each path contributes to the correlation, and these contributions must be summed coherently. To perform this sum, the set of loop excitations is similarly promoted to $\LL \to \LL_{AB}$ where strings in $\LL_{AB}$ are closed in the bulk but may terminate in $A$ and $B$

The excitation tensors $Z_l$ for $l\in \LL_{AB}$ precisely quantify these multi-path contributions. A cluster $\mathbf{W}$ connecting $A$ and $B$ represents correlations mediated through its support. If the excitations decay sufficiently fast---meaning the effective coupling along alternative paths is strongly suppressed---then distant correlations must also decay. 
Conversely, if loop corrections remain large, it signals that multiple interfering paths contribute significantly, potentially leading to long-range correlations or even critical behavior. 

This intuition can be made rigorous using the cluster expansion framework. Applying \autoref{prop:localexpexpansion} to the individual terms in the connected correlator, we obtain an exact expression that makes the role of loops explicit.
\begin{prop}[Correlator Cluster Expansion, Ratio Version (Informal)]\label{prop:correlatorexpansion}
    Connected correlation functions of the form $\expval{O_AO_B}_c = \expval{O_AO_B} - \expval{O_A}\expval{O_B}$ admit the following cluster expansion, 
    \begin{equation}\label{eq:correlatorexpansion}
\expval{O_AO_B}_c = 
\expval{O_A} \expval{O_B}
\left[
\exp\!\Biggl\{
\sum_{\substack{\emph{conn.}\, \mathbf{W} \leftarrow \LL_{AB} \\
   \emph{supp}(\mathbf{W}) \cap A \neq \emptyset
    \\
   \emph{supp}(\mathbf{W}) \cap B \neq \emptyset} }
\phi_\mathbf{W} Z_\mathbf{W} ^{\emph{conn}}
\Biggr\}
- 1 \right]
\end{equation}
where 
\begin{equation}
   Z_\mathbf{W}^{\emph{conn}} =  Z^{AB}_\mathbf{W} + Z_\mathbf{W} - Z^{A}_\mathbf{W} - Z^{B}_\mathbf{W}.
\end{equation}
\end{prop}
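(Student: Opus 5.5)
The plan is to write all four partition functions $\ZZ$, $\ZZ^A$, $\ZZ^B$, $\ZZ^{AB}$ as cluster expansions over one common excitation set, $\LL_{AB}$. Each expansion is taken around the single BP fixed point $\MM$ of the norm network $\ZZ$, and the correlator is then assembled from ratios. Concretely, we have
\[
\expval{O_AO_B}_c=\frac{\ZZ^{AB}}{\ZZ}-\frac{\ZZ^A}{\ZZ}\frac{\ZZ^B}{\ZZ}=\frac{\ZZ^A\ZZ^B}{\ZZ^2}\left[\frac{\ZZ^{AB}\ZZ}{\ZZ^A\ZZ^B}-1\right].
\]
The prefactor is exactly $\expval{O_A}\expval{O_B}$. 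It therefore suffices to show that
\[
\log\frac{\ZZ^{AB}\ZZ}{\ZZ^A\ZZ^B}=\sum_{\mathbf W}\phi_{\mathbf W}Z^{\rm conn}_{\mathbf W},
\]
where the sum runs over connected clusters meeting both $A$ and $B$.

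First, the BP parts. Because all four networks use the same messages, the BP contraction factorizes into local vertex factors. The factor for $\ZZ^{AB}$ is $Z_{\rm BP}$ with the vertices in $A\cup B$ replaced by their operator-dressed versions, and similarly for the others. I assume $A\cap B=\emptyset$ (otherwise the statement must be read with $O_AO_B$ as a single operator on the overlap). Under this assumption $Z_{\rm BP}^{AB}Z_{\rm BP}=Z_{\rm BP}^AZ_{\rm BP}^B$, so the $\FF_0$ terms cancel in the double ratio.

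Second, the cluster sums. I apply \autoref{prop:clusterexp}, in the form already used for \autoref{prop:localexpexpansion}, to each network with the excitation set $\LL_{AB}$. In $\ZZ$ and $\ZZ^A$, strings terminating in $B$ carry zero weight, because the fixed point is self-consistent at $B$. So using the larger set $\LL_{AB}$ is harmless. Each $\log$ becomes $\log Z_{\rm BP}^{(\cdot)}+\sum_{\mathbf W}\phi_{\mathbf W}Z^{(\cdot)}_{\mathbf W}$, summed over all connected clusters, with the same Ursell functions throughout since $\phi_{\mathbf W}$ depends only on geometry. Combining the four sums gives $\sum_{\mathbf W}\phi_{\mathbf W}(Z^{AB}_{\mathbf W}+Z_{\mathbf W}-Z^A_{\mathbf W}-Z^B_{\mathbf W})$ over all connected $\mathbf W$.

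Third, the locality cancellation. A loop or string weight is built only from tensors on its support, plus messages on its boundary. If $\operatorname{supp}(\mathbf W)\cap B=\emptyset$, then $Z^{AB}_{\mathbf W}=Z^A_{\mathbf W}$ and $Z^B_{\mathbf W}=Z_{\mathbf W}$, so $Z^{\rm conn}_{\mathbf W}=0$. Symmetrically, the same holds if the support misses $A$. Only clusters touching both regions survive. Exponentiating and substituting into the first display gives \autoref{eq:correlatorexpansion}.

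The main obstacle is justification rather than algebra. Taking logarithms and rearranging four separate infinite series requires each cluster expansion to converge absolutely, so that the sums can be combined term by term and the branch of $\log$ is consistent. This needs decay of strings in $\LL_{AB}$ on all four networks, mirroring the footnoted caveat for $\LL_A$. It also requires $\expval{O_A},\expval{O_B}\neq 0$; otherwise the ratio form degenerates and one should instead use the derivative version. I would handle this by stating string decay on all four networks as the hypothesis and citing the absolute convergence established in \autoref{thm:clusterconv}. A secondary point to check is that the weights of strings ending at $B$ but not $A$ genuinely agree between $\ZZ^{AB}$ and $\ZZ^B$, which requires the normalization by $Z_{\rm BP}$ to be local, as it is by the factorization in the first step.
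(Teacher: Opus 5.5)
Your proposal is correct and is essentially the paper's argument. The paper applies the local ratio expansion to $\expval{O_A}$, $\expval{O_B}$ and $\expval{O_AO_B}$, splits each cluster sum according to whether it meets $A$ only, $B$ only, or both, and cancels the one-sided factors using $Z^{AB}_{\mathbf W}=Z^{A}_{\mathbf W}$ (resp.\ $Z^{AB}_{\mathbf W}=Z^{B}_{\mathbf W}$) for clusters missing $B$ (resp.\ $A$); your double ratio $\log\bigl(\ZZ^{AB}\ZZ/(\ZZ^A\ZZ^B)\bigr)$ performs the same cancellations in one step, and your explicit hypotheses (disjoint $A,B$, nonvanishing $\expval{O_A},\expval{O_B}$, convergence on all four networks) are ones the paper leaves implicit.
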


While the expectation values $\expval{O_{A(B)}}$ in \autoref{prop:correlatorexpansion} themselves involve an expansion of the form \autoref{prop:localexpexpansion}, we can in fact wrap everything into a single expansion by expressing the connected correlation function via derivatives of a perturbed free energy:
\begin{equation}\label{eq:derivative2}
    \expval{O_A O_B}_c = -\frac{\partial}{\partial \lambda_a}\frac{\partial}{\partial \lambda_b} \mathcal{F}_{\lambda_a,\lambda_b}(O_A, O_B)\Big|_{\lambda_a=\lambda_b=0},
\end{equation}
where the perturbed free energy is now $\mathcal{F}_{\lambda_a,\lambda_b}(O_A,O_B) = -\log \bra{\psi} \exp[\lambda_a O_A + \lambda_b O_B] \ket{\psi}$. Denoting $\bm{\lambda} = (\lambda_a, \lambda_b)$ and $\partial_{\bm \lambda} := \partial_{\lambda_a} \partial_{\lambda_b}$ leads to the following expansion on the network $\ZZ^{AB}_{\bm{\lambda}}$, with cluster evaluation over $\LL_{AB}$ denoted $Z^{AB}_{\mathbf{W},\mathbf{\lambda}}$.

\begin{prop}[Correlator Cluster Expansion, Derivative Version (Informal)]\label{prop:conn-correlator}
Connected correlation functions of the form $\expval{O_AO_B}_c$ admit the following additive cluster expansion, 
\begin{equation}
   \expval{O_AO_B}_c = \sum_{\substack{\emph{conn.}\, \mathbf{W} \leftarrow \LL_{AB} \\
   \emph{supp}(\mathbf{W}) \cap A \neq \emptyset \\ 
    \emph{supp}(\mathbf{W}) \cap B \neq \emptyset }} \phi_{\mathbf{W}} \partial_{\bm \lambda} Z^{AB}_{\mathbf{W},\mathbf{\lambda}}\Big|_{\bm{\lambda}=0}.
\end{equation}
\end{prop}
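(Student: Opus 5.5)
The plan is to apply the cluster expansion of \autoref{prop:clusterexp} directly to the two-parameter network $\ZZ^{AB}_{\bm\lambda} = \bra{\psi}\exp[\lambda_a O_A + \lambda_b O_B]\ket{\psi}$, and then differentiate term by term using \autoref{eq:derivative2}. We expand around the $\bm\lambda = 0$ BP fixed point $\MM$ of the norm network, as in \autoref{prop:localexpexpansion-derivative}. Since the inserted operators break self-consistency only on $A\cup B$, the excitations that survive are exactly the elements of $\LL_{AB}$: subgraphs closed in the bulk whose leaves, if any, lie in $A\cup B$. This uses the same identity-resolution argument (\autoref{eq:identityresolution}) that removes dangling edges in \autoref{prop:loopseries}, now applied only at vertices outside $A\cup B$. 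This gives
\begin{equation}
\FF_{\bm\lambda} = \FF_{0,\bm\lambda} - \sum_{\mathrm{conn.}\,\mathbf W \leftarrow \LL_{AB}} \phi_{\mathbf W} Z^{AB}_{\mathbf W,\bm\lambda},
\end{equation}
where $\FF_{0,\bm\lambda}$ is minus the log of the BP contraction of the perturbed network. The Ursell functions $\phi_{\mathbf W}$ depend only on cluster geometry, so they are $\bm\lambda$-independent.

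Next we apply $-\partial_{\lambda_a}\partial_{\lambda_b}$ at $\bm\lambda=0$ and sort the terms by support. First, the BP term $\FF_{0,\bm\lambda}$ factorizes as a product of local vertex contractions with fixed messages. Its logarithm is therefore a sum of a $\lambda_a$-only piece (vertices in $A$) and a $\lambda_b$-only piece (vertices in $B$), assuming $A\cap B=\emptyset$, so its mixed derivative vanishes. Second, each $Z^{AB}_{\mathbf W,\bm\lambda} = \prod_i (Z^{AB}_{l_i,\bm\lambda})^{\eta_i}$ depends on $\lambda_a$ only if some loop in $\mathbf W$ touches $A$, and likewise for $B$. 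Hence if $\mathrm{supp}(\mathbf W)\cap A = \emptyset$ or $\mathrm{supp}(\mathbf W)\cap B=\emptyset$, the mixed derivative kills the term. The remaining clusters, which intersect both $A$ and $B$, carry the two minus signs that combine to give the stated sign.

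The main obstacle is justifying the term-by-term differentiation. This needs absolute and locally uniform convergence of the cluster series in a complex neighborhood of $\bm\lambda=0$, which makes $\FF_{\bm\lambda}$ analytic so that Weierstrass-type arguments let derivatives pass through the sum. I would obtain this from the Kotecký–Preiss criterion underlying \autoref{thm:clusterconv}, assuming decay of strings in $\LL_{AB}$ uniformly for $|\lambda_a|,|\lambda_b|\le \epsilon$. This holds by continuity, since $Z^{AB}_{l,\bm\lambda}$ is a polynomial in finitely many local entries that are analytic in $\bm\lambda$, and the norm bound on $e^{\lambda O}$ perturbs the weights on $A\cup B$ by only a factor $1+\order{\epsilon}$. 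This is the same caveat flagged in the footnote preceding \autoref{thm:localobsalgorithm}. A secondary point is keeping the branch of the logarithm consistent, which the positivity normalization of $\ZZ$ handles for small $\bm\lambda$.
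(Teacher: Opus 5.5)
Your derivation of the identity follows the paper's proof. You read $-\FF_{\bm\lambda}$ as a cumulant generating function, cluster-expand $\ZZ^{AB}_{\bm\lambda}$ over $\LL_{AB}$ around the $\bm\lambda=0$ fixed point with $\bm\lambda$-independent Ursell functions, and let the mixed derivative kill every cluster that misses $A$ or $B$. Your explicit check that the BP term $\log z_{A,\lambda_a}+\log z_{B,\lambda_b}$ has vanishing mixed derivative is a detail the paper leaves implicit. The paper does not justify term-by-term differentiation at all, and the justification you offer contains a step that fails.

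Uniform string decay on $|\lambda_a|,|\lambda_b|\le\epsilon$ does not hold ``by continuity.'' Continuity of each individual $Z^{AB}_{l,\bm\lambda}$ gives an $\epsilon$ that depends on $l$, not a bound $e^{-c|l|}$ uniform over arbitrarily long strings. Nor does $e^{\lambda_a O_A}$ act as a multiplicative factor $1+\order{\epsilon}$. Write $\tilde Z_l[X]$ for the unnormalized weight with $X$ inserted at $A$. This weight is linear in $X$, so $\tilde Z_{l,\bm\lambda}=\tilde Z_l[\mathbbm{1}]+\tilde Z_l[e^{\lambda_aO_A}-\mathbbm{1}]$, and the perturbation is additive. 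For the open strings in $\LL_{AB}\setminus\LL$, the $\bm\lambda=0$ weight vanishes identically by the fixed-point condition. Their perturbed weight therefore consists entirely of the new term, and loop decay on $\ZZ$ says nothing about its decay. This is precisely the caveat the paper raises in its note on loop versus string decay.

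There are two ways to repair this.
\begin{itemize}
\item Take uniform decay on the perturbed networks as a hypothesis. Alternatively, derive it by linearity: suppose string weights with insertions from a finite operator basis on $A$ and on $B$ satisfy $c$-decay with $c>c_0$. Since $z_{A,\lambda_a}$ and $z_{B,\lambda_b}$ stay bounded away from zero, this gives $|Z^{AB}_{l,\bm\lambda}|\le Ce^{-c|l|}$ uniformly on a small polydisc, and your Koteck\'{y}--Preiss/Weierstrass argument then goes through.
\item Work on a fixed finite graph, where the polymer set is finite. The Koteck\'{y}--Preiss inequality, which holds with slack at $\bm\lambda=0$ when $c>c_0$, then survives on an $N$-dependent polydisc. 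That is enough to establish the identity itself.
\end{itemize}
The paper's own hypothesis (decay on $\ZZ^{A}$, $\ZZ^{B}$, $\ZZ^{AB}$ for the given $O_A,O_B$) is not sufficient for the first route. The expansion of $e^{\lambda_aO_A}-\mathbbm{1}$ also inserts $O_A^2,O_A^3,\dots$, so that hypothesis controls only the differentiated series at $\bm\lambda=0$.
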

This is straightforwardly generalized to the case of $p-$point connected correlation functions, as outlined in \cite{SM}.

Crucially, the expansions in \autoref{prop:correlatorexpansion} and \autoref{prop:conn-correlator} are over all connected clusters intersecting both $A$ and $B$, which requires that the clusters are of size at least $d(A,B)$. For a square lattice with $A$ and $B$ being local regions of $2\times 2$ sites, we illustrate a few strings connecting $A$ and $B$ which form the low-weight clusters in the expansion:

\[
\begin{tikzpicture}[
  baseline={([yshift=-0.65ex] current bounding box.center)},
  tensor/.style={
    draw,
    line width=1pt,
    fill=tensorcolor!80!black,
    rounded corners=2pt,
    minimum size=8mm 
  },
  optensor/.style={
    draw,
    line width=1.2pt,
    fill=opmaroon,
    rounded corners=2pt,
    minimum size=8mm
  },
  bond/.style={line width=1.6pt, draw=black},
  redbond/.style={line width=2.4pt, draw=red} 
]

\def\dx{1.3}
\def\dy{1.3}
\def\shear{0.0}

\begin{scope}[xslant=\shear, yscale=0.98]

  \foreach \r in {0,...,5}{
    \foreach \c in {0,...,5}{
      \node[tensor] (T-\c-\r) at (\c*\dx, -\r*\dy) {};
    }
  }

  \node[optensor] (T-0-0) at (0*\dx, -0*\dy) {};
  \node[optensor] (T-1-0) at (1*\dx, -0*\dy) {};
  \node[optensor] (T-0-1) at (0*\dx, -1*\dy) {};
  \node[optensor] (T-1-1) at (1*\dx, -1*\dy) {};

  \node[optensor] (T-4-4) at (4*\dx, -4*\dy) {};
  \node[optensor] (T-5-4) at (5*\dx, -4*\dy) {};
  \node[optensor] (T-4-5) at (4*\dx, -5*\dy) {};
  \node[optensor] (T-5-5) at (5*\dx, -5*\dy) {};

  \foreach \r in {0,...,5}{
    \foreach \c in {0,...,4}{
      \pgfmathtruncatemacro{\cp}{\c+1}
      \draw[bond] (T-\c-\r.east) -- (T-\cp-\r.west);
    }
  }

  \foreach \c in {0,...,5}{
    \foreach \r in {0,...,4}{
      \pgfmathtruncatemacro{\rp}{\r+1}
      \draw[bond] (T-\c-\r.south) -- (T-\c-\rp.north);
    }
  }


  \foreach \c in {1,...,4}{
      \pgfmathtruncatemacro{\cp}{\c+1}
      \draw[redbond] (T-\c-0.east) -- (T-\cp-0.west);
  }
  \foreach \r in {0,...,3}{
      \pgfmathtruncatemacro{\rp}{\r+1}
      \draw[redbond] (T-5-\r.south) -- (T-5-\rp.north);
  }

  \draw[redbond] (T-0-1.south) -- (T-0-2.north);
  \draw[redbond] (T-0-2.south) -- (T-0-3.north);
  \draw[redbond] (T-0-3.east)  -- (T-1-3.west);
  \draw[redbond] (T-1-3.east)  -- (T-2-3.west);

  \draw[redbond] (T-2-3.north) -- (T-2-2.south);
  \draw[redbond] (T-2-2.east)  -- (T-3-2.west);
  \draw[redbond] (T-3-2.south) -- (T-3-3.north);
  \draw[redbond] (T-3-3.west)  -- (T-2-3.east);

  \draw[redbond] (T-2-3.south) -- (T-2-4.north);
  \draw[redbond] (T-2-4.east)  -- (T-3-4.west);
  \draw[redbond] (T-3-4.east)  -- (T-4-4.west);

\end{scope}
\end{tikzpicture}
\]

One can also have clusters $\mathbf{W}$ intersecting both regions $A$ and $B$ where no single loop intersects both the regions. The simplest example is $\mathbf{W} = \{(l_A,1), (l_B,1)\}$ where $l_{A(B)}$ intersects $A(B)$ and the two loops intersect in the bulk. 

\subsection{From loop decay to clustering of correlations}

\begin{figure*}[t]
    \centering
    \includegraphics[width=\linewidth]{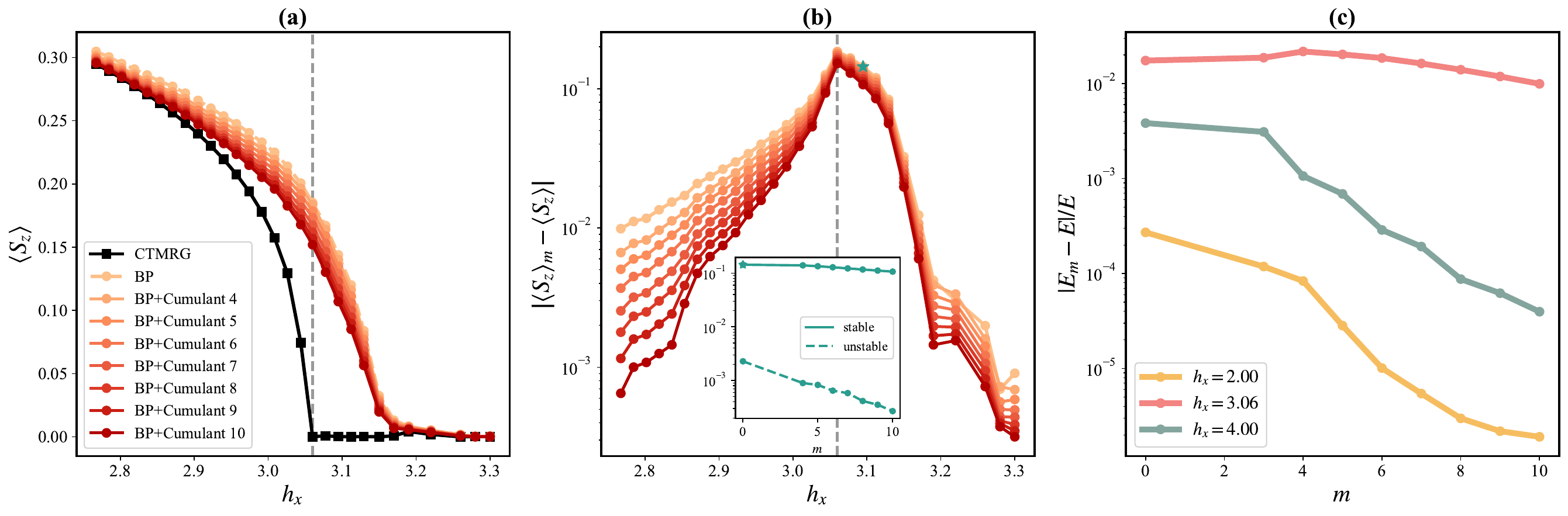}
    \caption{Tensor network belief propagation on the ground state of the 2D TFIM obtained via CTMRG-based optimization. (a) Longitudinal magnetization computed via CTMRG and BP with cumulant expansion up to cumulant order $m=10$ as a function of the transverse field $h_x$ around the critical region. The dashed line at $h_x \approx 3.06$ marks the point the CTMRG magnetization goes to zero. (b) Error in the $m^{\text{th}}$-order cumulant-corrected value (BP is $m=0$) as compared to the ground truth CTMRG magnetization. Inset: comparison of the error as a function of $m$ at $h_x=3.1$ for an expansion around a stable message-passing fixed point (solid curve) and unstable fixed point (dashed curve). The star at $m=0$ on the solid curve corresponds to the star in the main panel. (c) Relative error of the energy as a function of cumulant order $m$ deep in the ferromagnetic phase at $h_x=2.0$, near the critical point $h_x = 3.06$, and deep in the paramagnetic phase at $h_x=4.0$.}
    \label{fig:figTFIM_T=0}
\end{figure*}

\autoref{prop:correlatorexpansion} and~\autoref{prop:conn-correlator} reveal that connected correlations are carried entirely by clusters that connect the two regions $A$ and $B$. The key question is: what determines whether these correlations decay with distance?

In one dimension or on tree graphs, the answer is straightforward. A gapped transfer matrix with spectral gap $\delta<1$ implies exponential decay: correlators at graph distance $r$ scale as $\sim \delta^{\,r}$, yielding a finite correlation length
\[
\xi = (-\log \delta)^{-1}.
\]
This is because there is a \emph{unique path} connecting distant regions, and the gap controls the suppression along this path.

On general graphs, however, the situation is more subtle. The existence of a transfer-matrix gap alone does not suffice to guarantee decay of correlations. The obstruction is graph combinatorics: 
correlations between distant regions $A$ and $B$ are now mediated by an exponentially large family of connected clusters linking them. According to~\autoref{eq:correlatorexpansion}, each cluster $\mathbf{W}$ contributes with weight $Z_\mathbf{W}^{\text{conn}}$. While each individual cluster may be suppressed by the transfer-matrix gap, the \emph{number} of such clusters grows rapidly with distance, and this entropic proliferation can overwhelm the energetic suppression. 

To see this explicitly, consider clusters of size $|\mathbf{W}| = m$ connecting $A$ and $B$. If each such cluster contributes $\sim e^{-a m}$ (energetic suppression), but there are $\sim e^{b m}$ such clusters (entropic growth), the total contribution scales as $\sim e^{(b-a)m}$. Correlations will decay only if $b < a$---that is, only if the energetic suppression dominates the combinatorial growth. 

The loop-decay condition precisely ensures this dominance. This is why $c-$decay with an arbitrary $c>0$ is not enough to control correlations on the TN. Instead, to show decay of correlations, cluster contributions must decay faster than the entropy of connecting paths grows, 
{which is guaranteed when $c>c_0$}. 
This restores a finite correlation length despite the presence of loops. 
A comparison of the classical Ising model in 1D vs. 2D presents a familiar example of this phenomenon. At any temperature and in any dimension, BP has a ``paramagnetic'' fixed point, in which the $\mathbb{Z}_2$ symmetry is unbroken. In 1D, this fixed point is unique, and correlations decay according to the spectral gap $\delta = \tanh(\beta)$: the system is disordered at any finite temperature. This same gap appears in the cluster expansion around the paramagnetic fixed point in higher dimensions, with loops decaying as $(\tanh\beta)^{|l|}$. However, the expansion only converges at high temperatures; indeed, in 2D, this cluster expansion is equivalent to the standard high-temperature expansion, which converges down to $T_c$~\cite{Friedli_Velenik_2017}.

We thus arrive at the following theorem:
\begin{theorem}[Decay of correlations (Informal)] \label{thm:correlatorbound}
    The decay of loops condition on $\LL_{AB}$ ensures that all connected correlators cluster with a finite correlation length. That is given $O_{A(B)}$ with $\emph{supp}(O_{A(B)}) \subseteq A(B)$ and $\|O_{A(B)}\| = 1$, we have
    \begin{equation}
        \expval{O_AO_B}_c \leq \order{e^{-d(A,B) / \xi}}
    \end{equation}
    with a correlation length, 
    \begin{equation}
        \xi \leq \order{\frac{1}{c-c_0}},
    \end{equation}
    where $d(A,B)$ is the graph distance between disjoint local regions $A$ and $B$.
\end{theorem}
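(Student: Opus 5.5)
The plan is to start from the additive expansion of \autoref{prop:conn-correlator}, which writes $\expval{O_AO_B}_c$ as a sum over connected clusters $\mathbf W \leftarrow \LL_{AB}$ whose support meets both $A$ and $B$. Any such connected cluster has a connected support containing a path from $A$ to $B$, so $|\mathbf W| \geq d(A,B)$. The bound then follows from three ingredients: a bound on each term, a bound on the number of terms with their Ursell weights, and a summation over orders $m \geq d(A,B)$. Equivalently, this is the tail estimate behind \autoref{thm:clusterconv}, applied with the truncation order $m+1$ replaced by the forced minimal order $d(A,B)$.

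\emph{Step 1: bounding single terms.} I would not differentiate directly in $\bm\lambda$. Instead, I would use the Cauchy integral formula on a polydisc $|\lambda_a|,|\lambda_b|\leq r$ with $r=\order{1}$:
\[
\bigl|\partial_{\lambda_a}\partial_{\lambda_b} Z^{AB}_{\mathbf W,\bm\lambda}\bigr|_{\bm\lambda=0}\bigr| \leq r^{-2}\sup_{|\lambda_a|,|\lambda_b|= r}\bigl|Z^{AB}_{\mathbf W,\bm\lambda}\bigr| .
\]
The hypothesis ``decay of loops on $\LL_{AB}$'' is read as $c$-decay, $|Z_l|\leq K e^{-c|l|}$, holding uniformly for the perturbed network $\ZZ^{AB}_{\bm\lambda}$ on this small polydisc. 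This is justified because $\|O_{A(B)}\|=1$ and the perturbation $e^{\lambda O}$ only modifies the $\order{1}$ tensors in $A\cup B$, so it changes each string weight by at most a bounded multiplicative constant. This gives $|Z^{AB}_{\mathbf W,\bm\lambda}|\leq K^{n(\mathbf W)} e^{-c|\mathbf W|}$. The factor $K^{n}$ is absorbed by a shift of $c$, exactly as in the cited convergence theorem.

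\emph{Step 2: combinatorics.} I would invoke the Koteck\'y--Preiss-type estimate underlying \autoref{thm:clusterconv} from \cite{midha2025beyond}, namely
\[
\sum_{\substack{\text{conn. }\mathbf W\ni v\\ |\mathbf W|= m}}|\phi_{\mathbf W}|\,e^{-c|\mathbf W|}\leq C\,e^{-(c-c_0)m}
\]
for each vertex $v$, where $c_0=\order{\log\Delta}$ absorbs both the number of connected edge sets of size $m$ rooted at $v$ (at most $(e\Delta)^m$) and the Ursell-function sum. Rooting at each of the $|A|$ vertices of $A$ and requiring $m\geq d(A,B)$, the geometric series gives
\[
|\expval{O_AO_B}_c|\leq C r^{-2}|A|\sum_{m\geq d(A,B)} e^{-(c-c_0)m}=\order{e^{-(c-c_0)d(A,B)}},
\]
so that $\xi\leq (c-c_0)^{-1}$, absorbing the $\order{1}$ shifts of $c$.

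\emph{Main obstacle.} The delicate point is Step 1: verifying that string decay survives uniformly in the complex neighbourhood of $\bm\lambda=0$, and that the strings terminating on $A$ and $B$ obey the same counting as closed loops. This is the caveat raised in the footnote of \autoref{sec:local_observables}. I would make it an explicit part of the hypothesis, namely decay on $\LL_{AB}$ for all networks $\ZZ^{AB}_{\bm\lambda}$ with $|\lambda_{a,b}|\leq r$. I would also check that the Ursell/Koteck\'y--Preiss bound is stated per root vertex, not just per volume, so that only the $|A|$ prefactor (and no $N$) appears. Alternatively, one could run the same argument on the ratio form \autoref{prop:correlatorexpansion}, using $|e^{x}-1|\leq |x|e^{|x|}$ with $x$ exponentially small.
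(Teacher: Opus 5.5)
Your proof is correct in structure and follows a different route from the paper's. One justification in Step 1 is wrong, and it leaves you with a stronger hypothesis than the paper's.

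\textbf{Comparison of routes.} The paper does not use the additive expansion for this theorem. It starts from the ratio form, \autoref{prop:correlatorexpansion}, and bounds $|e^x-1|\le e^{|x|}-1$ and $|\expval{O_A}\expval{O_B}|\le 1$. It then splits $Z^{\text{conn}}_{\mathbf W}$ by the triangle inequality into the four weights $Z^X_{\mathbf W}$, $X\in\{\emptyset,A,B,AB\}$, and applies \autoref{lem:cluster_tail} on each network with cutoff $d(A,B)$. That is essentially the alternative you sketch in your last sentence. You instead control the mixed derivative by a Cauchy estimate, and then use the same rooted tail bound and the same observation $|\mathbf W|\ge d(A,B)$.

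The paper's route needs a weaker hypothesis: decay only on the four fixed networks $\ZZ^X$, with no complex-$\bm\lambda$ analyticity. Your route is more robust. You never normalize by $Z^A_{\text{BP}}$ or $Z^B_{\text{BP}}$, so you get an absolute bound even when $\expval{O_A}_{\text{BP}}$ or $\expval{O_B}_{\text{BP}}$ vanishes, for example order-parameter correlations in a symmetric phase. There the ratio form degenerates, since it carries the prefactor $\expval{O_A}\expval{O_B}$ and normalizes the weights on $\ZZ^A$ by $Z^A_{\text{BP}}$. Your argument also extends directly to $p$-point functions, at the cost of a factor $r^{-p}$.

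\textbf{The flaw in Step 1.} A local insertion does not change string weights by a bounded multiplicative factor. Open strings in $\LL_{AB}\setminus\LL$ have $Z_l=0$ at $\bm\lambda=0$ by the fixed-point condition, yet are generically nonzero for $\bm\lambda\neq0$; this is the paper's own caveat on loop versus string decay. Moreover, each string weight is linear in the operator inserted at $A$, so expanding $e^{\lambda_aO_A}$ brings in all powers $O_A^k$, which the paper's hypothesis does not control. As written, you therefore prove the theorem only under your strictly stronger polydisc assumption.

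\textbf{How to remove it.} Since $O_A$ and $O_B$ commute and only first-order terms in each variable enter $\partial_{\lambda_a}\partial_{\lambda_b}\log(\cdot)|_{\bm\lambda=0}$, you can replace $e^{\lambda_aO_A+\lambda_bO_B}$ by $(1+\lambda_aO_A)(1+\lambda_bO_B)$ without changing $\expval{O_AO_B}_c$. A string meeting both regions then has perturbed weight
$(Z_l+\lambda_aZ^A_l+\lambda_bZ^B_l+\lambda_a\lambda_bZ^{AB}_l)/[(1+\lambda_a\expval{O_A}_{\text{BP}})(1+\lambda_b\expval{O_B}_{\text{BP}})]$,
with $Z^X_l$ normalized by the unperturbed BP factors; strings meeting only one region are handled analogously. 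Using $|\expval{O_{A(B)}}_{\text{BP}}|\le 1$, as the paper does, this weight is bounded on $|\lambda_a|,|\lambda_b|\le r<1$ by $\bigl(\tfrac{1+r}{1-r}\bigr)^2\max_X|Z^X_l|$. With this substitution your Cauchy argument runs under exactly the paper's four-network hypothesis.
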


This result rules out the validity of belief-propagation–based expansions for states exhibiting sub-exponential decay of correlations, such as those at criticality. Notably, the argument is entirely graph-theoretic and does not rely on any lattice embedding. It establishes that, on tensor networks defined over arbitrary constant-degree graphs, exponential decay of correlations follows directly from string decay on $\LL_{AB}$. Back to the MPS example, therein the relevant set of excitations has a single string terminating at both ends, and we recover the usual transfer matrix picture.

\section{Numerical illustrations}
\label{sec:numerics}
In this section, we study numerically the applicability of the BP-based tensor network expansions. We use the corner-transfer matrix renormalization group (CTMRG) method~\cite{nishino1996corner,orus2012exploring} as the ``ground truth" to study systematically the error scaling of BP and subsequent cumulant corrections. For concreteness, we focus on the 2D transverse field Ising model, described by the Hamiltonian, 
\begin{equation}
    H = - \left(\sum_{\langle ij\rangle } S_z^i S_z^j + h_x \sum_i S_x^i\right)
\end{equation}
where $\langle ij\rangle $ denotes nearest-neighbors, and $S_\alpha^i$ are Pauli operators at site $i$. 
\subsection{The Ground State}
We begin by illustrating the performance of the BP and cumulant expansion methods on the ground state of the 2D TFIM. We start with an iPEPS representation of the ground state, obtained through CTMRG based variational optimization in the \texttt{peps-torch}~\cite{hasik2020peps} library with bond dimension $D=4$.

Given the iPEPS tensor for a given value of the transverse-field $T(h_x)$, we begin by finding the ``message-passing'' fixed point, $\mu_0(h_x)$ by repeated iteration of the fixed-point equation until convergence starting from the uniform vector.
This enables the computation of the BP approximation to local expectation values $\expval{O_A}_{\text{BP}}$. Then, we have the cumulant-corrected values at order $m$ as, 
\begin{equation}   
    \expval{O_A}_m =\expval{O_A}_{\text{BP}} \exp{\left(\sum_{\substack{\text{conn.} \Gamma \subseteq\LL_A \\ \mathrm{supp}(\Gamma) \cap A \neq \emptyset \\ |\Gamma| \leq m}} (\mathfrak K^A(\Gamma) - \mathfrak K(\Gamma))\right)}
\end{equation}
where we truncate to cumulants $\Gamma$ containing $A$ up to order $m$. We concern ourselves with $A$ being a single site (for longitudinal magnetization) and two sites (for energy). We compare BP and cumulant corrections to expectation values computed via CTMRG with a bond dimension of $\chi = 256$, which serves as our ``ground truth.''

The results using the ``message-passing'' fixed point $\mu_0(h_x)$ are shown in Fig.~\ref{fig:figTFIM_T=0}. In panel (a), we show the magnetization $\expval{S_z}$ computed through CTMRG, BP, and subsequent cumulant corrections. We note qualitative agreement between the different curves in either phase, for $h_x \lesssim 2.8$ and $h_x \gtrsim 3.2$ with a systematic disagreement in the vicinity of the critical point $h_c \approx 3.044$. In panel (b), we show the errors in different cluster orders against the CTMRG value $\Big|\expval{S_z}_m - \expval{S_z}\Big|$. This shows a clear peak near the critical point. Approaching from the ferromagnetic side, the slowing down of convergence is precisely a manifestation of critical correlations in the loops: the error between a finite cumulant correction and the ground truth encodes all loop corrections above that order, which shows parametrically slower decay leading to a maximum in the error. To reach the same conclusion when approaching from the paramagnetic side, one has to be more careful, since there exists an intermediate regime where the message-passing fixed point is not a good starting point for the expansion.  
We discuss this momentarily.


\autoref{thm:localobsalgorithm} bounds a relative error for local observables with BP based expansions, given decay of loops, which is concomitant with being in a gapped phase. To illustrate this, we show in Fig.~\ref{fig:figTFIM_T=0}(c) the relative error in approximating the ground state energy per site as a function of cumulant order $m$. Deep in the paramagnetic phase at $h_x = 4.0$, and the ferromagnetic phase at $h_x=2.0$, one notes exponential convergence in relative error. Near the critical point  the convergence of the curve is markedly slower, as expected from the fact that loop corrections encode correlation functions.

We now clarify a point regarding the accuracy of BP on the ground state iPEPS. In the range of $h_x \in (3.06, 3.2)$ we note that the CTMRG magnetization goes to zero, whereas the BP magnetization does not. Relying on the message-passing fixed point $\mu_0(h_x)$ leads to a poor accuracy in this part of paramagnetic phase. This is perhaps surprising since one might expect BP to work away from a critical point, especially in a short-ranged correlated phase. However, the landscape of fixed points is governed by the model on a tree, whose phase diagram (as discussed in \autoref{sect:confusion}) differs from that of the loopy model. Owing to this, decay of loops is indeed violated around the message-passing fixed point in this ``confusion regime," since there exists a constant $\order{1}$ contribution in the ``all-edges-excited" sector occupied by the other symmetry broken fixed point.

The remedy is to note that there exist \emph{more fixed points}, which indeed respect the symmetry and will provide vanishing magnetization. Of these fixed points, one is the paramagnetic fixed point; expanding around this fixed point in the interval $h_x \in (3.06, 3.2)$ indeed yields much better accuracies in BP and subsequent cumulant/cluster expansions. To highlight this, we show in the inset of \autoref{fig:figTFIM_T=0}(b) the error of the cumulant expansion as a function of cumulant order for both the stable and unstable fixed point at $h_x=3.1$. We indeed note that, the error of BP around the unstable fixed point is two orders of magnitude lower, and is further accompanied by a faster decay of the error.
\subsection{Finite Temperature: Gibbs States}\label{sect:finiteT}

\begin{figure}[t]
\centering
\includegraphics[width=0.9\linewidth]{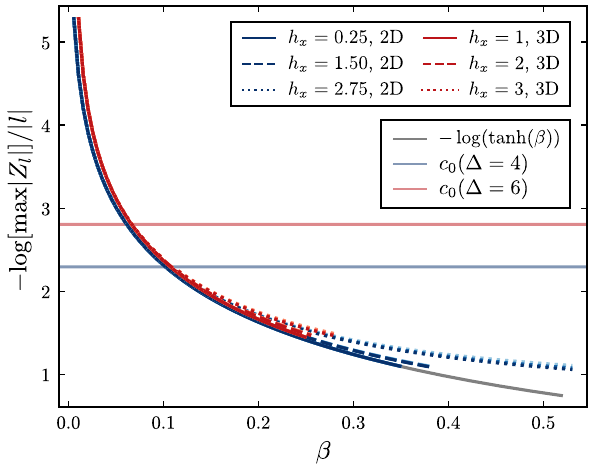}
\caption{$c$-decay of even-weight loops across a range of transverse fields in the 2D (blue) and 3D (red) TFIM {at finite temperature}. Data from increasing loop weights up to $|l|=12$ (2D) and $|l|=10$ (3D) (darker shading indicates higher weight) nearly coincide. Solid gray curve is the coefficient of classical loop decay. Dashed blue and red curves indicate the loose bound for $c_0 = \mathcal{O}(\log \Delta)$ for $\Delta=4$ (2D) and $\Delta = 6$ (3D) respectively. \label{fig:loops}} 
\end{figure}

\begin{figure*}[t]
\includegraphics[width=\linewidth]{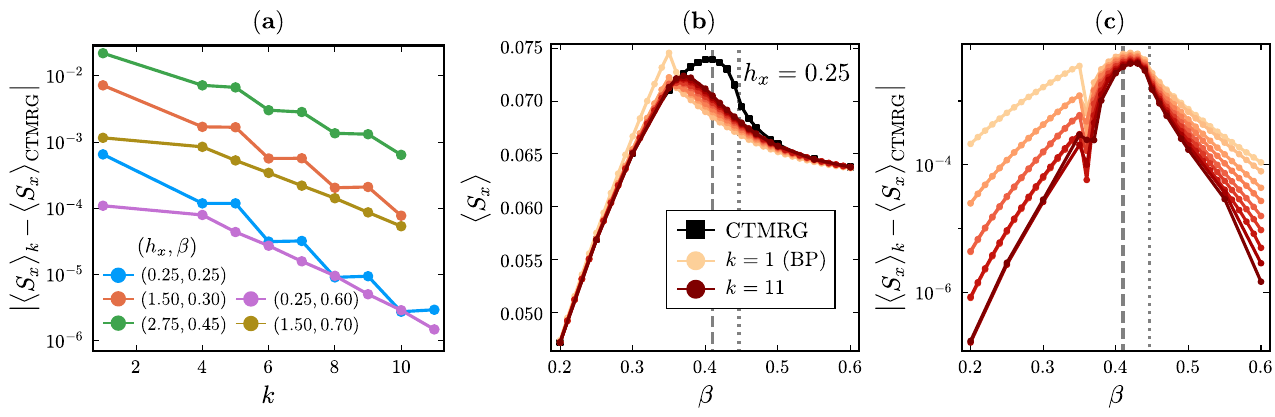}
\caption{(a) Convergence of region-based cumulant cluster expansion for the observable $S_x$, as a function of the maximum region size $k$, for five different pairs of $(h_x,\beta)$ in the 2D TFIM. Blue, orange, and green curves are above the critical temperature but below the temperature at which $c=c_0$; purple and gold curves are below the critical temperatures at their respective transverse fields.  (b) $\expval{S_x}$ as a function of $\beta$ at $h_x=0.25$, evaluated using CTMRG (black squares), vs. increasing orders of the region-based cumulant cluster expansion (ranging from $k=1$ in light orange, to $k=11$ in dark red). (c) Additive error in the cumulant cluster expansion for $\expval{S_x}$ at $h_x=0.25$, using CTMRG as the ``ground truth.'' In (b) and (c), the dashed gray line marks the inverse temperature at which $\expval{S_x}_{\rm CTMRG}$ has a peak, which is slightly below $\beta_c \approx 0.446$ (dotted line)~\cite{roughening}.  \label{fig:finite-temp}}
\end{figure*}

Now we consider the BP landscape for thermal Gibbs states, $\rho \propto \exp(-\beta H)$, which are known to be efficiently approximable by PEPS \cite{molnar2014approximating}. For translation-invariant Hamiltonians, the purified Gibbs state (or, the ``thermo-field double") can be represented as iPEPS. Explicitly, we obtain the mixed state $\sqrt{\rho}$ via imaginary time evolution starting from the maximally-mixed (infinite-temperature) state. Viewing $\sqrt{\rho}$ as a pure state $|\sqrt{\rho}\rrangle $ in a doubled Hilbert space, the norm of the associated tensor network is then $\Tr \rho$, and expectation values are evaluated as $
\expval{O} = {\llangle \sqrt{\rho} | O | \sqrt{\rho}\rrangle}/{\llangle \sqrt{\rho} | \sqrt{\rho}\rrangle}
$.

Let us first consider high temperatures. Prior work has shown that Gibbs states of local Hamiltonians at high temperatures are unentangled~\cite{bakshi2024high} and show decay of correlations~\cite{bluhm2022exponential,kuwahara2020clustering}, and thus we might hope that this manifests as a strong decay of loops. Note, however, that clustering of correlations does not necessarily imply loop decay; only the converse (\autoref{thm:correlatorbound}) is guaranteed to hold. The obstruction is two-fold: first, the loop decay property depends not just on the structure of the state but also on the chosen BP fixed point; second, even if a reasonable BP fixed point is found, a finite correlation length could result from the (e.g., fine-tuned) destructive interference of loops of opposite signs rather than uniform loop decay. 

Nevertheless, high-temperature Gibbs states present an example where we can show loop decay around the ``high-temperature'' BP fixed point. The ``proof'' is straightforward: there is a fixed point $\mu(\beta)$, smoothly connected to the infinite-temperature fixed point, satisfying $\|\mu(\beta) - \mu(0)\|_\infty \leq \mathcal{O}(\beta)$ where $\mu(0)$ is the infinite-temperature fixed point. Each insertion of a projector onto the excitation subspace thus carries a factor of $\mathcal{O}(\beta)$, ensuring loop decay with $c=\mathcal{O}(-\log(\beta))$. Hence, we can ensure $c > c_0$ by requiring $\beta < \order{e^{-c_0}}$, demonstrating that Gibbs states satisfy loop decay 
{around an appropriate fixed point} below a constant $\beta_0 = \order{e^{-c_0}}$. For the transverse field Ising model in both 2D and 3D, we can refine this estimate: we numerically observe that the classical ``gap" $\tanh\beta$ is an upper bound on the weight of loops in the presence of a transverse field (see \autoref{fig:loops}).

Concretely, we prepare the iPEPS with bond dimension $D=8$ via simple update with the \texttt{pepskit.jl} package (2D) and \texttt{TensorNetworkQuantumSimulator.jl} (3D). For each prepared state, we iterate the message-passing equations starting from uniform messages until approximate convergence. We then populate a finite periodic lattice of linear dimension $l_{\max}+1$ with the PEPS tensor, and evaluate all loops up to weight $l_{\max}$. In~\autoref{fig:loops}, we plot
\begin{equation}
    -\log[\max|Z_l|]/|l|
\end{equation}
for even $|l|$, where the maximum is taken over all loops containing $|l|$ edges.

Recall that in the cluster expansion around the paramagnetic fixed point of the \textit{classical} 2D Ising model, the weight of any loop containing a vertex of odd degree is identically zero; thus, the entropic contribution of the clusters is significantly reduced from the worst-case estimate. As a result, there is a wide gap between the rate of $c$-decay at the critical temperature, corresponding, $c=\sinh^{-1}(1) = 0.881...$, and the worst-case threshold $c_0 = \log(2(\Delta-1))+1/2 =2.29...$ for models on the square lattice. When we turn on a weak transverse field, odd-degree loops (including all those with odd $l$) acquire a nonzero weight, but remain parametrically suppressed compared to even-degree loops. We numerically observe a strong suppression even in the strongly quantum regime ($h_x=3$)~\cite{SM}. Consequently, we expect the cluster expansion to converge well beyond the regime guaranteed by our loose bound.

To verify this claim, we carry out the cumulant cluster expansion for single-site observables. For the sake of demonstration, we here use the derivative form of the top-down version of the cumulant cluster expansion, which in the ``regions-based'' formulation simplifies to~\cite{gray2025,SM}
\begin{equation}   
    \expval{O_A}_k =\sum_{R} b_k(R) \expval{O_A}_R,
\end{equation}
where $\expval{O_A}_R$ is the expectation value obtained from a contraction on region $R$ with BP messages inserted along its boundary, and $b_k(R)$ is the counting number of region $R$ in the expansion truncated at maximum region size $k$. As shown in~\autoref{fig:finite-temp}a, the additive error in $\expval{S_x}$ (using CTMRG with $\chi=128$ as the ground truth) rapidly decays in both phases, including at temperatures within the paramagnetic phase but beyond the regime guaranteed by the worst-case bound. At high temperatures, the disparity between odd and even loops  loops manifests in an odd-even effect in the additive error, whereas at low temperatures, the error decays smoothly as a function of $k$.      

Again, we caution that within the BP approximation, 
the paramagnetic fixed point becomes unstable at a temperature above the true critical temperature
, resulting in a ``confusion regime'' in which CTMRG correctly describes the paramagnetic phase whereas the message passing algorithm instead converges to a symmetry-broken fixed point.~\autoref{fig:finite-temp}b shows this phenomenon for the cumulant cluster expansion of $\expval{S_x}$ at $h_x=0.25$. Strictly speaking, an expansion around this symmetry-broken fixed point in the confusion regime should not converge, and in the interval $T_{c,BP} > T > T_{BP}$, better accuracy could be achieved by expanding around the unstable paramagnetic fixed point.  Still, as for the ground state phase diagram, we can draw two conclusions from~\autoref{fig:finite-temp}c: (1) the reduction of error with respect to CTMRG as a function of region size $k$  is significantly slower near the critical point than deep in the two phases, and (2) at a given $k$, the error is maximal in the vicinity of $T_c$ (\autoref{fig:finite-temp}c).

\section{Discussions} \label{sec:discussions}
This work examines the efficiency and fundamental limitations of tensor-network belief-propagation methods for many-body quantum systems. Building on recent advances in cluster expansion techniques for belief propagation, we provide rigorous criteria for when these methods can and cannot succeed. This is achieved through the ``loop-decay" condition introduced in Ref.~\cite{midha2025beyond}. Focusing on the task of approximating local expectation values, we provide exact formulas in terms of the BP prediction dressed by connected clusters intersecting the observable region. We further compare two distinct forms of the cluster expansion used in the literature, whose truncation leads to algorithms with rigorous guarantees on relative and additive error of the approximation. Using this methodology to estimate connected correlation functions, we prove that ``loop decay'' (with the ``loop'' set expanded to include strings terminating on the inserted observables) necessarily implies decay of correlations, leading to a sharp criteria on the validity of BP based methods. More generally, this provides new language to reason about tensor networks in higher dimensions and on arbitrary graphs, which may find applications elsewhere, e.g., in inference problems, and computational complexity theory. 

We validate our analytical predictions through extensive numerical simulations of the 2D transverse-field Ising model, both at finite temperature and in the ground state. We observe a pronounced slowdown in the convergence of the cluster expansion near criticality. In contrast, at high temperatures we find that loop weights exhibit a consistent decay consistent with a general ansatz, in both two and three dimensions, which we expect to hold broadly for Gibbs states of local Hamiltonians.

The fixed-point problem was also highlighted, wherein we note the crucial dependence of BP-based expansions on the chosen fixed-points, as also addressed in recent work~\cite{evenbly2025partitioned}. In general, the fixed-point problem remains open: it is not known for what classes of TNs fixed point exists, and whether they can be found efficiently (e.g., guarantees on convergence of message passing on TNs). We conjecture that for some hard to contract TN instances, the fixed points can exist and (i) either be hard to find, or (ii) severely violate decay of loops; these issues arise even in the tensor networks corresponding to partition functions of classical models, as we explore in upcoming work~\cite{tindallgbp}. More broadly, studying the complexity of tensor network contraction through this lens of BP and cluster expansions may lead to new insights in complexity theory. In another upcoming work~\cite{midha26upcoming}, we provide rigorous results for the fixed-point problem for a sub-class of tensor network states.
\newline 

\begin{acknowledgments}
We thank Yifan Frank Zhang, Dries Sels, Bert Kappen, Subhayan Sahu, and Juraj Hasik for helpful discussions and collaborations on related projects.  
S.M. acknowledges helpful discussions at the International Centre for Theoretical Sciences (ICTS) Bangalore during the program ``Generalised Symmetries and Anomalies in Quantum Phases of Matter 2026" (code: ICTS/GSYQM2026/01). We acknowledge the use of open source libraries \texttt{peps-torch}, \texttt{pepskit}, and \texttt{TensorNetworkQuantumSimulator} for the numerical simulations. The simulations presented in this article were performed in part on computational resources managed and supported by Princeton Research Computing, a consortium of groups including the Princeton Institute for Computational Science and Engineering (PICSciE) and the Office of Information Technology's High Performance Computing Center and Visualization Laboratory at Princeton University. J.T. and G.M.S. are grateful for ongoing support through the Flatiron Institute, a division of the Simons Foundation. This work was partially supported by a Brown Investigator Award (S.M. and D.A.). 
\end{acknowledgments}
\bibliography{refs}


\clearpage
\onecolumngrid
\makeatletter
\let\@outputdblcol\@outputpage
\makeatother
\setcounter{section}{0}
\setcounter{subsection}{0}
\setcounter{subsubsection}{0}
\setcounter{equation}{0}
\setcounter{figure}{0}
\setcounter{table}{0}
\setcounter{footnote}{0}
\renewcommand{\thesection}{S\arabic{section}}
\renewcommand{\theHsection}{S\arabic{section}}
\renewcommand{\thesubsection}{\thesection.\arabic{subsection}}
\renewcommand{\theequation}{S\arabic{equation}}
\renewcommand{\thefigure}{S\arabic{figure}}
\renewcommand{\thetable}{S\arabic{table}}
\renewcommand{\theHequation}{S\arabic{equation}}

\section*{Supplementary Material}
\listofsmentries

\smsection{Recap: Loop and Cluster expansion}
\smsubsection{Preliminaries}
Given a graph $G = (V,E)$, we associate to each edge $e \in E$ a \textit{bond space} $\BB_e$ of dimension $\dim(\BB_e) = D$, where $D$ is called the bond dimension, chosen to be uniform for convenience. For each vertex $v \in V$, We denote the neighbors of $v$ in $G$ as $\NN(v)$. The degree of a vertex in the graph is denoted $d(v) := |\NN(v)|$. The degree of a graph is denoted $\Delta(G) := \max_{v\in V}d(v)$. With this, we define the notion of a tensor network.
\begin{defn}[Tensor Network]
    A graph $G=(V,E)$ equipped with a set of tensors at each vertex, $\TT = \{T_v\}_v$ with  $T_v \in \otimes_{e \ni v}\BB_e$ is called a \emph{tensor network}.
\end{defn}
Now, the BP algorithm proceeds by associating two messages at each edge, i.e., for $e = (v,w)$ we have $\mu_{v\to w} , \mu_{w\to v} \in \BB_e$. The messages are said to be self-consistent if they satisfy the following conditions.
\begin{defn}[Self-consistent fixed point]
    The self consistent set of messages $\MM = \{\mu_{v\to w}\}_{v,w}$ satisfies for each $v\in V$ and each $n_j \in \NN(v)$,
\begin{equation}
        \left(\underset{n_i \in \NN(v)/\{n_j\}}{\otimes}\mu_{n_i \to v} \right)\star T_v = \mu_{v\to n_j}
\end{equation}
\end{defn}
For each edge $e = (v,w)\in E$, we expand the identity on the bond space $\mathbbm{1} \in \LL(\BB_e)$ as: 
\begin{eqnarray} 
    \mathbbm{1}_{vw} = \underbrace{\frac{\mu_{v\to w} \otimes \mu_{w\to v}}{I_{vw}}}_{\text{BP subspace}} + \underbrace{\mathcal{P}_{vw}^\perp}_{\text{excitation subspace}},
\end{eqnarray}
where  $I_{vw} = \mu_{v\to w} \star \mu_{w\to v}$. The normalization ensures that these subspaces are orthogonal: $\mathcal{P}_{vw}^\perp\star \mu_{v\to w} = 0$ and $\mu_{w\to v}\star\mathcal{P}_{vw}^\perp = 0$, hence $\mathcal{P}_{vw}^\perp$ carries contributions orthogonal to the BP vacuum. 
\begin{prop}[BP Approximation]
    For each $v\in V$, the local contribution to the partition function is then given as,
\begin{equation}\label{eq:tensor_z}
        Z^{(v)} := \left[\underset{n \in \NN(v)}{\bigotimes}\frac{\mu_{n\to v}}{\sqrt{I_{vw}}}\right] \star T_v,
\end{equation}
where  $I_{vw} = \mu_{v\to w} \star \mu_{w\to v}$, that is the inner product between $\mu_{v\to w}$ and $\mu_{w\to v}$. The BP approximation to  
\begin{enumerate}
    \item[(i)] the partition function is, 
\begin{equation}
    Z_{\emph{BP}}= \prod_{v\in V} Z^{(v)} 
\end{equation}
\item[(ii)] the free energy is, 
\begin{equation}
    F_{\emph{BP}} = -\sum_{v\in V}\log{Z^{(v)} }
\end{equation}
\end{enumerate}\end{prop}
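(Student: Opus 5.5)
The plan is to prove items (i) and (ii) directly from the definitions of the messages, the fixed-point condition, and the resolution of the identity in \autoref{eq:identityresolution}. The BP approximation keeps only the BP-subspace term on every edge. So we set $\tilde Z_\emptyset$ to be the contraction of $\star_v T_v$ in which every bond identity $\mathbbm{1}_{vw}$ is replaced by $\mu_{v\to w}\otimes\mu_{w\to v}/I_{vw}$. We will show $\tilde Z_\emptyset=\prod_v Z^{(v)}$ with $Z^{(v)}$ as in \eqref{eq:tensor_z}.

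\textbf{Step 1: the network factorizes.} After the replacement, each edge $(v,w)$ carries the rank-one tensor $\mu_{v\to w}\otimes\mu_{w\to v}$. Its leg contracted into $T_w$ is $\mu_{v\to w}$, and its leg contracted into $T_v$ is $\mu_{w\to v}$. No bond is shared between vertices any more, so the network breaks into a product over vertices. Vertex $v$ contributes $\big(\bigotimes_{n\in\NN(v)}\mu_{n\to v}\big)\star T_v$, carrying incoming messages only.

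\textbf{Step 2: distribute the normalization.} The scalar $1/I_{vw}$ on each edge is split as $1/\sqrt{I_{vw}}$ to each endpoint. The edge factor then reads $\frac{\mu_{v\to w}}{\sqrt{I_{vw}}}\otimes\frac{\mu_{w\to v}}{\sqrt{I_{vw}}}$, where $I_{vw}=I_{wv}$ is symmetric. This turns the per-vertex contribution into exactly $Z^{(v)}$, which proves (i). The notation $I_{vw}$ in \eqref{eq:tensor_z} is read with $w=n$.

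\textbf{Step 3: free energy.} Item (ii) follows by taking $-\log$ of the product, giving $F_{\rm BP}=-\sum_v\log Z^{(v)}$. Here we use the standing convention that $Z$ is positive with phases stored separately, so that a branch of the logarithm can be fixed consistently.

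\textbf{Main obstacle.} The only delicate point is the square-root splitting. It requires $I_{vw}\neq 0$, which is already needed for \autoref{eq:identityresolution} to make sense. It also requires a consistent branch choice for $\sqrt{I_{vw}}$ when $I_{vw}$ is complex. Any choice works, since the two factors of $\sqrt{I_{vw}}$ on an edge recombine to give $I_{vw}$.

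Finally, the fixed-point property is not needed for the identity itself. It enters only in justifying that $\tilde Z_\emptyset$ is the zeroth-order term of \autoref{prop:loopseries}, i.e.\ that configurations with dangling excited edges vanish. So if time permits, I would remark that the proposition holds for arbitrary messages with $I_{vw}\neq 0$, and that self-consistency is what makes it the leading term of the expansion.
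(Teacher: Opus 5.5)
Your argument is correct and supplies exactly what the paper leaves implicit: the supplement states this proposition in its recap without a separate proof, taking $Z_{\rm BP}$ to be the all-vacuum term $\tilde Z_\emptyset$ built from the normalized rank-one projectors, which factorizes over vertices once each $1/I_{vw}$ is split as $1/\sqrt{I_{vw}}$ between the two endpoints, just as you do. One small correction to Step 3: the paper's positivity convention is for $\ZZ$, not for $Z_{\rm BP}$ or the individual $Z^{(v)}$, so (ii) should be read modulo $2\pi i$ (for any fixed branches) and it needs $Z^{(v)}\neq 0$; at a fixed point this follows by contracting the self-consistency equation with the remaining incoming message, which gives $Z^{(v)}\prod_{n\in\NN(v)}\sqrt{I_{vn}} = I_{vn_j}$ (up to the nonzero proportionality constant).
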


We now recall the loop expansion~\cite{evenbly2024loopseriesexpansionstensor}. The set of allowed loop excitations is denoted $\LL$. For the na\"{\i}ve loop expansion, $\LL$ is the set of \emph{connected} subgraphs with each vertex having degree at least two. 

\begin{defn}[Generalized loops]\label{def:loop}
    Consider a graph $G=(V,E)$. A generalized loop is \emph{connected} subgraph $l=(W,F)$ with $W \subseteq V$, $F \subseteq E$, with the property that the degree of any $w \in W$ in $C$ is at least two. The weight of a generalized loop is defined as the number of edges $|F|$, also denoted $|l|$. The set of connected generalized loops is denoted $\LL$.
\end{defn}

Given a loop $l$, we define its support to be $\supp(l)$ the set of vertices in the associated subgraph. We can now define if two loops are compatible or incompatible.

\begin{defn}[Compatible loops]\label{def:incompatibility}
Two distinct loops \(l, l' \in \LL \) are said to be \emph{compatible}, written \(l \sim l'\), if they do not overlap; that is, they share no vertex or edge in the underlying graph. A family \(\Gamma \subset \LL\) of loops is called \emph{compatible} if every pair of distinct loops in \(\Gamma\) is compatible.
\end{defn}

The loop expansion is expressed in a sum–product form over connected, compatible loop excitations in $\LL$; as a consequence, it implicitly generates contributions from all disconnected loop configurations.

\begin{prop}[Loop expansion] \label{prop:loopseriesformal}
The tensor network contraction admits the expansion
\begin{equation}\label{eq:loop_expansion_reorganized}
  \ZZ = Z_{\emph{BP}}\left(1 + \sum_{\substack{\Gamma\subseteq\mathcal{L}\\\Gamma\emph{ finite, compatible}}}
  \prod_{l\in\Gamma} Z_l\right)
\end{equation}
where the sum runs over all finite sets \(\Gamma\) of mutually compatible loops.
\end{prop}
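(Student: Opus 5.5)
The plan is to start from the exact contraction $\ZZ = \star_{v\in V} T_v$ and insert the resolution of the identity \autoref{eq:identityresolution} on every edge $e=(v,w)\in E$. Expanding the product $\prod_{e\in E}\bigl(\frac{\mu_{v\to w}\otimes\mu_{w\to v}}{I_{vw}} + \mathcal{P}^\perp_{vw}\bigr)$ gives a sum over subsets $F\subseteq E$ of excited edges:
\begin{equation*}
\ZZ=\sum_{F\subseteq E}\tilde Z_F .
\end{equation*}
Each term $\tilde Z_F$ has $\mathcal{P}^\perp$ on the edges of $F$ and normalized BP messages on the edges of $E\setminus F$.

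The next step is to show that $\tilde Z_F=0$ unless every vertex has degree $0$ or at least $2$ in the edge-induced subgraph $(V,F)$. Suppose some vertex $v$ has exactly one excited edge $(v,n_j)$. Then every other leg of $T_v$ is contracted with an incoming message $\mu_{n_i\to v}$. By the self-consistency condition, this partial contraction equals $\mu_{v\to n_j}$ (up to the normalization fixed in the definition). Contracting it with $\mathcal{P}^\perp_{vn_j}$ gives zero by the orthogonality relation $\mathcal{P}^\perp_{vw}\star\mu_{v\to w}=0$. So only leafless $F$ survive. These decompose uniquely into connected components $l_1,\dots,l_k\in\LL$ that pairwise share no vertex or edge, i.e.\ a compatible family $\Gamma$ in the sense of Definition~\ref{def:incompatibility}. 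The empty set $F=\emptyset$ gives $Z_{\rm BP}$.

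The main step is factorization. For a fixed compatible $\Gamma$, I need $\tilde Z_F = Z_{\rm BP}\prod_{l\in\Gamma}Z_l$. The network $\tilde Z_F$ splits along vertex sets: vertices not in any $l$ have all their legs contracted with messages, and each such vertex contributes its factor $Z^{(v)}$ from \eqref{eq:tensor_z}. Each component $l$ interacts with the rest only through message edges, so it contributes an independent scalar. I will use the symmetric $\sqrt{I_{vw}}$ splitting of the normalization on each edge to distribute the factors. Defining $Z_l$ as the contraction of $l$ with $\mathcal{P}^\perp$ on its edges and normalized messages on its boundary, divided by $\prod_{v\in\supp(l)}Z^{(v)}$, the product over components together with $\prod_{v\notin\cup l}Z^{(v)}$ reassembles $Z_{\rm BP}\prod_l Z_l$. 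The delicate point is ensuring the normalization factors $I_{vw}$ on edges inside $F$ versus outside $F$ are accounted for exactly once. I will handle this by giving each vertex half of each incident edge's normalization.

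Finally, summing over all leafless $F$ is the same as summing over finite compatible $\Gamma\subseteq\LL$, which yields \eqref{eq:loop_expansion_reorganized}. This assumes $Z^{(v)}\neq 0$ and $I_{vw}\neq0$. The graph is finite, so all sums are finite and no convergence issue arises.
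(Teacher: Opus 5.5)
Your proof is correct and follows essentially the same route the paper sketches in the main text: insert the BP resolution of the identity on every edge, expand over excited edge sets $F\subseteq E$, use self-consistency together with $\mathcal{P}^\perp_{vw}\star\mu_{v\to w}=0$ to eliminate every $F$ with a dangling edge, then normalize by $Z_{\rm BP}$ so each surviving leafless $F$ factorizes over its connected components, which form the compatible family $\Gamma$. The paper states this proposition without a detailed proof (citing earlier work), so your $\sqrt{I_{vw}}$ bookkeeping in the factorization and your nondegeneracy assumptions $I_{vw}\neq 0$, $Z^{(v)}\neq 0$ fill in details the paper leaves implicit.
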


\smsubsection{Cluster expansion}

We now define the notion of cluster. A cluster is a multiset of loops.

\begin{defn}[Clusters]
  Given the set of allowed excitations $\LL$, a cluster is a collection of tuples of the form $$\mathbf{W} = \{(l_1, \eta_1), (l_2, \eta_2), \ldots\}$$ where each \(l_i\in \LL\) is a loop and \(\eta_i\) is the multiplicity of the loop \(l_i\) in the cluster. The total number of loops in the cluster is denoted as $n_{\mathbf{W}} = \sum_i \eta_i$.
\end{defn}

We define the following properties of a cluster.

\begin{defn}[Properties of a cluster]
    For a cluster $\mathbf{W}= \{(l_1, \eta_1), (l_2, \eta_2), \ldots\}$, we define the following:
    \begin{enumerate}
        \item[(i)] The \emph{cluster weight} (or ``\emph{order}") is defined as 
        \begin{equation}
            |\mathbf{W}| = \sum_{i} \eta_i |l_i|,
        \end{equation}
        where the number of edges in excitation $l$ be denoted as $|l|$.
        \item[(ii)] The \emph{support} of a cluster is,
        \begin{equation}
        \emph{supp}(Z_\mathbf{W}) = \cup_{l\in\mathbf{W}}\emph{supp}(l)
        \end{equation}
        \item[(iii)] The \emph{correction} of the cluster $Z_{\mathbf{W}}$ as the product of the loop corrections raised to their respective multiplicities:
        \begin{equation}
  Z_{\mathbf{W}} = \prod_iZ_{l_i}^{\eta_i}.
\end{equation}
\item[(iv)] The \emph{loop-set} of a cluster is the set of loops which appear in the cluster with non-zero multiplicity. That is, for a cluster $\mathbf{W} = \{(l_i,\eta_i)\}_i$ we have 
\begin{equation}
    \emph{loop-set}(\mathbf W) := \{l_i \, | \, \eta_i > 0\}
\end{equation}
    \end{enumerate}
\end{defn}


Given a cluster \(\mathbf{W}\), we define the \emph{interaction graph} as follows.

\begin{defn}[Interaction Graph]
  Given a cluster \(\mathbf{W} = \{(l_1, \eta_1), (l_2, \eta_2), \ldots \}\), we define the interaction graph \(G_\mathbf{W} = (V_\mathbf{W}, E_\mathbf{W})\) with $|V_\mathbf{W}| = \sum_i \eta_i$ vertices with each loop \(l_i\) corresponds to \(\eta_i\) vertices. There is an edge $(l,l') \in  E_\mathbf{W}$ either if the loops \(l\) and \(l'\) are incompatible ($l \not\sim l'$), or they are identical ($l=l'$). 
\end{defn}

We call a cluster \(\mathbf{W}\) \emph{connected} if the interaction graph \(G_\mathbf{W}\) is connected, meaning there is a path between any two vertices in the interaction graph. The following two results are from Ref.~\cite{midha2025beyond}.

\begin{prop}[Connected clusters only]\label{prop:clusterexpformal}
The negative free energy can be expressed as
\begin{equation}
  \log \ZZ = \log Z_{BP} +  \sum_{\emph{connected} \, \mathbf{W}} \phi(\mathbf{W}) Z_{\mathbf{W}},
\end{equation}
where the sum runs over all connected clusters \(\mathbf{W}\). The coefficient \(\phi(\mathbf{W})\) is given by
\begin{equation}
  \phi(\mathbf{W}) = \frac{1}{\mathbf{W}!} \sum_{\substack{C \in G_\mathbf{W} \\ C \emph{ connected}}} \sum_{(i,j) \in C} (-1)^{|E(C)|}
\end{equation}
\end{prop}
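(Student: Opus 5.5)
The plan is to treat the loop expansion of Proposition~\ref{prop:loopseriesformal} as a polymer-gas partition function and apply the standard Mayer/exponential-formula argument. Write $\Xi := \ZZ/Z_{\rm BP} = 1 + \sum_{\Gamma \text{ compatible}} \prod_{l\in\Gamma} Z_l$.

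First I encode compatibility by a pair interaction. For $l,l'\in\LL$ set $f(l,l') = -1$ if $l\not\sim l'$ or $l=l'$, and $f(l,l')=0$ otherwise. Since a loop always overlaps itself, $\prod_{i<j}(1+f(l_i,l_j))$ is exactly the indicator that an ordered tuple $(l_1,\dots,l_n)$ consists of distinct, mutually compatible loops. Hence
\begin{equation}
\Xi = \sum_{n\ge 0}\frac{1}{n!}\sum_{(l_1,\dots,l_n)\in\LL^n}\prod_{i=1}^n Z_{l_i}\prod_{1\le i<j\le n}\bigl(1+f(l_i,l_j)\bigr).
\end{equation}
The factor $1/n!$ compensates for the ordering of the $n$ distinct loops.

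Second, I expand $\prod_{i<j}(1+f_{ij}) = \sum_{G\subseteq K_n}\prod_{(i,j)\in E(G)} f_{ij}$ over all graphs $G$ on $[n]$. The weight $\prod_i Z_{l_i}\prod_{E(G)} f_{ij}$ is multiplicative over the connected components of $G$. The exponential formula for labelled structures then gives
\begin{equation}
\log\Xi = \sum_{n\ge1}\frac{1}{n!}\sum_{(l_1,\dots,l_n)}\prod_i Z_{l_i}\sum_{\substack{C\subseteq K_n\\ C\text{ connected, spanning}}}\prod_{(i,j)\in E(C)} f(l_i,l_j).
\end{equation}
Third, I group ordered tuples by their underlying multiset $\mathbf W=\{(l_i,\eta_i)\}$. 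There are $n!/\mathbf W!$ tuples per multiset, with $\mathbf W!=\prod_i\eta_i!$. For a fixed tuple, $f_{ij}=-1$ exactly on the edges of the interaction graph $G_{\mathbf W}$ and $0$ elsewhere. So the inner sum becomes $\sum_{C}(-1)^{|E(C)|}$ over connected spanning subgraphs $C$ of $G_{\mathbf W}$. This yields the stated coefficient $\phi(\mathbf W)$, where the summand is to be read as a product over edges of $C$, giving the sign $(-1)^{|E(C)|}$. Such a $C$ exists only if $G_{\mathbf W}$ is connected, so only connected clusters survive. Together with $\log\ZZ=\log Z_{\rm BP}+\log\Xi$, this gives the claim.

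The main obstacle is making the manipulation of $\log$ rigorous rather than formal. I would proceed in two stages. First, I introduce an auxiliary parameter $Z_l\to tZ_l$. For a finite graph, $\LL$ is finite and $\Xi(t)$ is a polynomial with $\Xi(0)=1$, so the identity holds as an equality of formal power series in $t$. It is genuinely convergent for $|t|$ small, since all manipulations are absolutely convergent there. Second, under $c$-decay with $c>c_0$, I verify a Kotecký--Preiss criterion. Concretely, I count connected loops through a vertex by $e^{c_0|l|}$ and choose weights $a(l)\propto|l|$. This shows the cluster series converges absolutely up to $|t|=1$ and that $\Xi(t)\ne0$ on that disk. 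Analytic continuation then transports the identity to $t=1$, with $\log$ on the principal branch, consistent with the paper's convention that $\ZZ>0$.
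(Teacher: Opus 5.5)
Your proposal is correct. It follows the standard polymer-gas derivation (Mayer expansion of the hard-core constraint including self-incompatibility, the exponential formula, and regrouping ordered tuples into multisets with multiplicity $n!/\mathbf W!$), which the paper does not reproduce but imports from Ref.~\cite{midha2025beyond}, and you read the garbled coefficient correctly as $\frac{1}{\mathbf W!}\sum_{C}(-1)^{|E(C)|}$ over connected spanning subgraphs $C\subseteq G_{\mathbf W}$.

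Two side remarks on the convergence bookkeeping, neither of which affects the identity. First, small-$|t|$ convergence follows from analyticity of $\log\Xi(t)$ at $t=0$, not from absolute convergence of every step, since the intermediate sum over all graphs on $[n]$ is not absolutely summable in $n$. Second, a Koteck\'{y}--Preiss check with $a(l)=\alpha|l|$ requires $\sum_{l'\ni v}|Z_{l'}|e^{\alpha|l'|}\le\alpha$, which needs some margin above the bare loop-counting exponent; the exact threshold therefore belongs to Theorem~\ref{thm:clusterconvformal} rather than to this identity.
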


\begin{theorem}[Convergence of cluster expansion] \label{thm:clusterconvformal}
    Assume there exists a constant \(c>\log(2(\Delta-1)) + \frac{1}{2} \) such that
\begin{equation}
  |Z_l|\le e^{-c |l|}
\end{equation} 
then, the series for \(\log \ZZ \) converges absolutely. Moreover, the error in truncating the series at order \(m\) is bounded by
\begin{equation}
  \left| \FF -  {F}_m \right| \le N e^{-d(m+1)}
\end{equation}
where $d = c - \log(2(\Delta-1)) - \frac{1}{2}$ and $\Delta$ is the degree of the graph. 
\end{theorem}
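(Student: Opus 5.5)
This is Theorem~\ref{thm:clusterconvformal}, the convergence of the cluster expansion, and I would prove it by the standard Koteck\'y--Preiss / Ursell-bound route. The plan has three steps. First, bound the combinatorial (Ursell) coefficients. Second, count loops by their weight. Third, combine the two into a tail sum over connected clusters anchored at a vertex.

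\textbf{Step 1: the Ursell bound.} Using the tree-graph inequality (Penrose), I would bound $|\phi(\mathbf W)|$ by $\frac{1}{\mathbf W!}$ times the number of spanning trees of the interaction graph $G_{\mathbf W}$. This is the standard route to the Koteck\'y--Preiss criterion. It suffices to find a function $a(l)=a|l|$ such that, for every loop $l$,
\begin{equation}
\sum_{l'\not\sim l} |Z_{l'}|\, e^{a|l'|}\le a|l| .
\end{equation}
Here $l' \not\sim l$ includes $l'=l$. Koteck\'y--Preiss then gives, for every vertex $v$,
\begin{equation}
\sum_{\mathbf W \ni v} |\phi(\mathbf W)|\,|Z_{\mathbf W}|\,e^{a|\mathbf W|} \le a ,
\end{equation}
where the sum runs over connected clusters with $v$ in the support. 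This reduces everything to a single-loop estimate.

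\textbf{Step 2: counting loops.} I would bound the number of connected generalized loops of weight $n$ through a fixed vertex by $(2(\Delta-1))^{n}$, up to polynomial factors. The argument encodes a connected subgraph with $n$ edges via a closed walk that traverses each edge twice, an Euler tour of its doubled spanning structure. Each step of that walk has at most $\Delta-1$ non-backtracking choices plus a binary ``new/old'' marker. Then the interaction sum is controlled by
\begin{equation}
\sum_{l'\not\sim l}|Z_{l'}|e^{a|l'|}\le |l|\sum_{n\ge 2}(2(\Delta-1))^n e^{-(c-a)n},
\end{equation}
since there are at most $|l|+1\le 2|l|$ vertices at which $l'$ can touch $l$. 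With $c=\log(2(\Delta-1))+\tfrac12+d$, I choose $a=\tfrac12$. The geometric sum then becomes $\sum_n e^{-(d+\frac12-\frac12)n}$ times a small factor. The extra $\tfrac12$ in the threshold is exactly what absorbs the geometric prefactor and the factor-of-two vertex count, so the criterion holds.

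\textbf{Step 3: the truncation error.} Absolute convergence follows immediately from Step 1. For the tail, I restrict to connected clusters with $|\mathbf W|\ge m+1$. I can weight the Koteck\'y--Preiss sum with $e^{(a+d')|\mathbf W|}$ instead of $e^{a|\mathbf W|}$, replacing $c$ by $c-d'$, for any $d'<d$ that still satisfies the criterion. This gives
\begin{equation}
\sum_{\mathbf W\ni v,\ |\mathbf W|\ge m+1} |\phi(\mathbf W)||Z_{\mathbf W}|\le e^{-d'(m+1)}\,O(1).
\end{equation}
Summing over the $N$ anchoring vertices gives $|\FF-F_m|\le N e^{-d(m+1)}$, after pushing $d'\to d$ and absorbing constants (or after a slightly sharper choice of $a$).

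\textbf{Main obstacle.} The delicate step is making the constants land exactly at $\log(2(\Delta-1))+\tfrac12$ with prefactor $1$. This means tuning the loop-counting bound, including the handling of non-simple generalized loops, together with the choice of $a$, so that no stray polynomial or constant factors survive. If they do not close, I would settle for an $O(\cdot)$ prefactor, consistent with the informal statement.
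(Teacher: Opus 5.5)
Your skeleton is the one the paper relies on: an anchored tail bound for connected clusters, followed by a union bound over the $N$ vertices. The paper imports the theorem from Ref.~\cite{midha2025beyond} and names Lemma~\ref{lem:cluster_tail} as the key input. In the cumulant radius-of-convergence lemma it sums the anchored weight-$k$ shells over $k\ge m+1$ and then union-bounds.

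\textbf{The gap: Koteck\'y--Preiss cannot reach the stated threshold.} In Step~2 you assert that the criterion holds with $a=\tfrac12$ whenever $c>\log(2(\Delta-1))+\tfrac12$. With your own numbers, $c-a=\log(2(\Delta-1))+d$, so your interaction bound equals $|l|\sum_{n\ge2}e^{-dn}=|l|\,e^{-2d}/(1-e^{-d})$. Requiring this to be at most $\tfrac12|l|$ forces $e^{-d}\le\tfrac12$, i.e.\ $d\ge\log 2$. With your factor-of-two vertex count it forces $d\gtrsim 0.94$.

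There is no leftover ``small factor''. The $\tfrac12$ in the threshold has been spent as the KP parameter $a$, and nothing remains to make $\sum_n e^{-dn}$ small; that sum diverges as $d\to0$. Re-optimizing $a$ does not help: with your count, the smallest additive constant this route achieves is about $1.17$, not $\tfrac12$. So for $0<d<\log 2$ your argument proves nothing.

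\textbf{The rate fails for the same reason.} To get $e^{d'|\mathbf W|}$ in the KP \emph{conclusion}, you must put $e^{(a+d')|l'|}$ in the \emph{hypothesis}. Your Step~1 conflates these two weight functions: with hypothesis weight $e^{a|l'|}$, KP only bounds the unweighted anchored sum. With $a=\tfrac12$ the reweighted hypothesis requires $d-d'\ge\log 2$. So you cannot push $d'\to d$, and $e^{-d'(m+1)}$ differs from $e^{-d(m+1)}$ by a factor exponential in $m$, which no constant absorbs. Your ``main obstacle'' paragraph treats this as a prefactor issue, but it is not: you would be proving a weaker theorem with a larger threshold and a smaller rate. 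The exact prefactor $1$ is secondary; the informal version states $\order{\cdot}$.

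\textbf{What is needed instead.} You need a shell-by-shell bound with no smallness condition beyond $c>c_0$. Show directly that $\sum_{\mathbf W\ni v,\,|\mathbf W|=k}|\phi(\mathbf W)Z_{\mathbf W}|\le C\,e^{-(c-c_0)k}$ for each $k$. Do this via Penrose's tree-graph bound plus a count, growing like $e^{c_0k}$, of tree-structured loop families of total weight $k$.

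Every loop has at least two edges, so a weight-$k$ cluster has at most $k/2$ loops. A tree-combinatorial factor of order $e$ per loop therefore costs at most $e^{k/2}$, which is presumably where the $+\tfrac12$ comes from. Summing over $k\ge m+1$ gives decay at rate exactly $d$, with a $d$-dependent constant. The union bound over the $N$ anchor vertices then finishes the proof.

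\textbf{A separate issue in Step 2.} Your loop-counting sketch does not give the base you claim. A closed walk of length $2n$ with $2(\Delta-1)$ options per step counts $(2(\Delta-1))^{2n}$, not $(2(\Delta-1))^{n}$. The base-$2(\Delta-1)$ count needs a sharper encoding, or should be taken from Ref.~\cite{midha2025beyond}.
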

The main technical argument in the proof of Theorem~\ref{thm:clusterconvformal} is the following tail-bound on the decay of large clusters.
\begin{lemma}[Tail bound for large clusters]
\label{lem:cluster_tail}
Consider a PEPS satisfying the loop decay condition $|Z_\ell| \le e^{-c|\ell|}$ for $c > c_0 = \log(2(\Delta-1)) + 1/2$. For any region $A \subset V$ and cutoff $m \in \N$, the contribution from all connected clusters with weight higher than $m$ supported on $A$ satisfies:
\begin{equation}
\sum_{\substack{\emph{connected} \, \mathbf{W} \\ \emph{supp}(\mathbf{W}) \cap A \neq\emptyset \\ |\mathbf{W}| > m}} |\phi(\mathbf{W}) Z_{\mathbf{W}}| \leq \order{|A|e^{-(c-c_0)(m+1)}}.
\end{equation}
\end{lemma}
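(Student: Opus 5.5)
We will prove Lemma~\ref{lem:cluster_tail} with the standard Koteck\'y--Preiss-type bound on Ursell functions, combined with a count of connected loop families on a graph of maximal degree $\Delta$.

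\emph{Step 1: reduce to anchored clusters.} Since every cluster meeting $A$ has at least one loop through some vertex $v\in A$, we use a union bound:
\[
\sum_{\substack{\mathbf W:\,\mathrm{supp}\cap A\neq\emptyset\\ |\mathbf W|>m}} |\phi(\mathbf W)Z_{\mathbf W}|\le \sum_{v\in A}\ \sum_{\substack{\mathbf W\ni v\\|\mathbf W|>m}} |\phi(\mathbf W)Z_{\mathbf W}|.
\]
This accounts for the prefactor $|A|$. It then suffices to show that the sum over connected clusters containing a fixed vertex $v$, with weight exactly $n$, is at most $C e^{-(c-c_0)n}$. Summing this geometric series over $n\ge m+1$ gives the claimed $\order{e^{-(c-c_0)(m+1)}}$.

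\emph{Step 2: the Koteck\'y--Preiss criterion.} We will verify the criterion with weight function $a(l)=\alpha|l|$. The requirement is that, for every loop $l$,
\[
\sum_{l'\not\sim l}|Z_{l'}|e^{a(l')}\le a(l).
\]
Since $l'\not\sim l$ forces $l'$ to touch one of the at most $|l|+1$ vertices of $l$, it is enough to bound $\sum_{l'\ni w}|Z_{l'}|e^{\alpha|l'|}$ for a fixed vertex $w$. Connected subgraphs with $k$ edges containing $w$ number at most $(e(\Delta-1))^k$, or about $(2(\Delta-1))^k$ by the standard lattice-animal/tree-encoding count. Using $|Z_{l'}|\le e^{-c|l'|}$, the sum is then $\sum_k e^{-(c-\log(2(\Delta-1))-\alpha)k}$. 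This is $\le\alpha$ once $c$ exceeds $\log(2(\Delta-1))$ plus a margin, and choosing $\alpha=1/2$ should reproduce $c_0=\log(2(\Delta-1))+1/2$. The KP theorem then yields
\[
\sum_{\mathbf W\ \mathrm{conn},\ \mathbf W\not\sim l}|\phi(\mathbf W)Z_{\mathbf W}|\le a(l).
\]

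\emph{Step 3: extracting the tail.} To get exponential decay in the weight rather than mere convergence, we apply Step~2 to rescaled activities $\tilde Z_l = Z_l e^{\delta|l|}$ with $\delta=c-c_0-\epsilon$. These still satisfy the criterion at decay rate $c_0+\epsilon$, so
\[
\sum_{\mathbf W\ni v}|\phi(\mathbf W)Z_{\mathbf W}|e^{\delta|\mathbf W|}\le C.
\]
Restricting to $|\mathbf W|>m$ then gives the bound $Ce^{-\delta(m+1)}$. Handling $\epsilon$ (or absorbing it into constants by tuning $\alpha$) yields $e^{-(c-c_0)(m+1)}$ up to $\order{1}$ factors. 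We anchor at $v$ via a "fictitious loop" at $v$ of weight $a=\alpha$.

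The main obstacle is Step~2's combinatorics: getting the entropy constant to exactly $\log(2(\Delta-1))$. This requires a careful count of connected edge sets rooted at a vertex, through a depth-first traversal encoding with at most $2(\Delta-1)$ choices per step. It also requires bookkeeping of the $|l|+1$ vertices of $l$, which is absorbed by $a(l)=\alpha|l|$.
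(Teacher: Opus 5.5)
The paper does not prove this lemma itself. It quotes it from Ref.~\cite{midha2025beyond} as the technical core of Theorem~\ref{thm:clusterconvformal}, so there is no in-paper argument to compare against. Your Koteck\'y--Preiss outline is the natural route, and its skeleton is sound: the union bound over $v\in A$, the criterion with $a(l)=\alpha|l|$, anchoring by a fictitious polymer at $v$, and exponential tilting. The genuine gap is quantitative. The steps meant to produce the specific threshold $c_0=\log(2(\Delta-1))+\tfrac12$ and the specific rate $c-c_0$ do not work as written.

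\emph{The entropy count.} You claim that connected subgraphs with $k$ edges through $w$ number about $(2(\Delta-1))^k$. This is false in that generality once $\Delta\ge 4$. On a $\Delta$-regular graph of large girth, rooted subtrees with $k$ edges grow like $\bigl[(\Delta-1)^{\Delta-1}/(\Delta-2)^{\Delta-2}\bigr]^k$, which is $6.75^k>6^k$ for $\Delta=4$. Any bound with base $2(\Delta-1)$ therefore has to exploit the minimum-degree-two structure of generalized loops. A ``depth-first encoding with at most $2(\Delta-1)$ choices per step'' does not supply this, because the backtracking information is exactly what costs extra entropy. Your fallback $(e(\Delta-1))^k$ does not give the stated $c_0$ either.

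\emph{The choice $\alpha=\tfrac12$.} Even granting $N_k\le(2(\Delta-1))^k$, the KP condition with $a(l)=\tfrac12|l|$ becomes $\sum_{k\ge k_{\min}}e^{-(c-c_0)k}\le\tfrac12$, using $|V(l)|\le|l|$. This fails as $c\downarrow c_0$. Optimizing over $\alpha$ with $k_{\min}=3$ still leaves a threshold of $\log(2(\Delta-1))+\tfrac14+\log 2\approx\log(2(\Delta-1))+0.94$.

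\emph{The $\epsilon$-loss in Step~3.} The $\epsilon$ cannot be absorbed into an $m$-independent constant. Indeed $e^{-(c-c_0-\epsilon)(m+1)}=e^{\epsilon(m+1)}e^{-(c-c_0)(m+1)}$, and $e^{\epsilon(m+1)}$ is unbounded in $m$. The loss is avoidable in the following way:
\begin{itemize}
\item If KP holds for the activities $e^{-c_*|l|}$ themselves, the anchored conclusion reads $\sum_{\mathbf W\ni v}|\phi(\mathbf W)|\,e^{-c_*|\mathbf W|}\le\alpha$.
\item Hence $\sum_{\mathbf W\ni v,\,|\mathbf W|=n}|\phi(\mathbf W)Z_{\mathbf W}|\le\alpha e^{-(c-c_*)n}$ exactly.
\item Summing over $n>m$ gives the tail with rate exactly $c-c_*$ and constant $\alpha/(1-e^{-(c-c_*)})$.
\end{itemize}

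So what your argument actually establishes is the lemma with $c_0$ replaced by whatever $c_*$ your count and criterion support. That is strictly larger than $\log(2(\Delta-1))+\tfrac12$ even if your unproven count is granted. Reaching the stated threshold would need a sharper count of generalized loops through a vertex, or a sharper convergence criterion than KP with linear $a(l)$. The proposal supplies neither.
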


\smsection{Cluster-cumulant expansion}

We briefly introduce the setting of the M\"{o}bius transform~\cite{rota1964theory}, used to convert the cluster expansion into the cluster-cumulant expansion in the main text.

\smsubsection{Preliminaries: M\"{o}bius transforms}

Let $(P, \le)$ be a poset. The associated \emph{incidence algebra} is
\[
\mathcal{A}(P)
= \bigl\{ \alpha \in \mathbb{C}^{P \times P} : \alpha(x,y) = 0 \text{ unless } x \le y \bigr\}.
\]
For $\alpha, \beta \in \mathcal{A}(P)$, the product is defined by
\[
(\alpha \beta)(x,y) = \sum_{z \in P} \alpha(x,z)\,\beta(z,y).
\]
It is immediate that $\mathcal{A}(P)$ is closed under addition, scalar multiplication, and multiplication.  
Let $I \in \mathcal{A}(P)$ denote the identity element,
\[
I(x,y) =
\begin{cases}
1, & x = y, \\
0, & x \ne y,
\end{cases}
\]
so that $I$ acts as the multiplicative identity in $\mathcal{A}(P)$.

The \emph{zeta function} of the poset is
\[
\zeta(x,y) =
\begin{cases}
1, & x \le y, \\
0, & \text{otherwise,}
\end{cases}
\]
and clearly $\zeta \in \mathcal{A}(P)$.

The \emph{Möbius function} $\mu \in \mathcal{A}(P)$ is defined recursively by
\[
\mu(x,y) =
\begin{cases}
0, & x \not\le y, \\
1, & x = y, \\
- \displaystyle\sum_{x \le z < y} \mu(x,z), & x < y.
\end{cases}
\]
It satisfies
\begin{equation}
\sum_{x \le z \le y} \mu(x,z) =
\begin{cases}
1, & x = y, \\
0, & \text{otherwise,}
\end{cases}
\label{eq:mobius-left-inverse}
\end{equation}
which expresses the left-inverse relation $\mu * \zeta = I$.  
Similarly, the right-inverse identity
\begin{equation}
\sum_{x \le z \le y} \mu(z,y) =
\begin{cases}
1, & x = y, \\
0, & \text{otherwise,}
\end{cases}
\label{eq:mobius-right-inverse}
\end{equation}
implies $\zeta * \mu = I$.  
Thus $\mu$ is the inverse of $\zeta$ in $\mathcal{A}(P)$ and is integer-valued by construction.

\begin{theorem}[Möbius Inversion]\label{thm:mobius-inversion}
Let $(P,\le)$ be a poset with Möbius function $\mu$.  
If two functions $f,g : P \to \mathbb{C}$ are related by
\[
g(x) = \sum_{y \le x} f(y),
\]
then
\begin{equation}
f(x) = \sum_{y \le x} \mu(y,x)\, g(y).
\label{eq:mobius-inversion}
\end{equation}
\end{theorem}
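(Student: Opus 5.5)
The plan is the standard two-step argument: show that $\zeta$ has a two-sided inverse in the incidence algebra, namely $\mu$, and then read the hypothesis $g=\sum_{y\le x}f(y)$ as a statement in that algebra. Throughout I will assume $P$ is locally finite, or at least that every principal down-set $\{y : y\le x\}$ is finite. This holds for our application, the subsets of a finite loop set ordered by inclusion, and it makes all sums finite.

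First I will verify the inverse relations. The left-inverse identity \eqref{eq:mobius-left-inverse} follows directly from the recursive definition of $\mu$. For $x<y$ we have $\sum_{x\le z\le y}\mu(x,z)=\mu(x,y)+\sum_{x\le z<y}\mu(x,z)=0$, and for $x=y$ the sum is $1$. Hence $\mu\zeta=I$. For the right inverse I will avoid a separate recursion. Restricted to an interval $[x,y]$, the algebra is one of upper-triangular matrices once the elements are listed along a linear extension. Since $\zeta$ has unit diagonal it is invertible, and a left inverse of an invertible matrix coincides with its right inverse. This gives $\zeta\mu=I$, which is \eqref{eq:mobius-right-inverse}.

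Second, the inversion itself. I will substitute the hypothesis into the right-hand side of \eqref{eq:mobius-inversion}:
\begin{equation*}
\sum_{y\le x}\mu(y,x)\,g(y)=\sum_{y\le x}\mu(y,x)\sum_{z\le y}f(z)=\sum_{z\le x}f(z)\sum_{z\le y\le x}\mu(y,x).
\end{equation*}
The exchange of summation is legitimate because the down-set of $x$ is finite. By \eqref{eq:mobius-right-inverse}, the inner sum equals $1$ if $z=x$ and $0$ otherwise. The whole expression therefore collapses to $f(x)$, as claimed.

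The only real obstacle is justifying $\zeta\mu=I$, since only the left identity is immediate from the definition, and the inversion formula in the form stated uses the right identity. The matrix argument above handles this, and it relies on finiteness of the intervals. As an alternative, one can define $\mu'$ by the mirror-image recursion; then $\mu'\zeta=I$ and $\zeta\mu=I$ together force $\mu'=\mu$ by associativity of the incidence algebra.

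Finally, for the cluster-cumulant use, the Boolean lattice of subsets has $\mu(B,\Gamma)=(-1)^{|\Gamma|-|B|}$. This can be checked via the binomial identity $\sum_{k}\binom{n}{k}(-1)^k=0$ for $n\ge1$, and it reproduces \eqref{eq:inclusionexclusion}.
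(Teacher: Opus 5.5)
Your proposal is correct and follows essentially the paper's route: substitute the hypothesis, interchange the two finite sums, and collapse the inner sum $\sum_{z\le y\le x}\mu(y,x)$ to $\delta_{z,x}$. You are more careful than the paper here. The paper cites the left-inverse identity \eqref{eq:mobius-left-inverse} at that step and only asserts \eqref{eq:mobius-right-inverse}, but a sum over the first argument of $\mu$ genuinely needs $\zeta\mu=I$, and your finite-matrix argument on the interval $[z,x]$ supplies it. One slip in your alternative remark: the roles are transposed. The mirror recursion yields $\zeta\mu'=I$, and it is this together with $\mu\zeta=I$ that forces $\mu'=\mu$ by associativity; as you wrote it, the argument assumes $\zeta\mu=I$, which is what you are trying to prove.
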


\begin{proof}
Substituting the definition of $g$ and using~\autoref{eq:mobius-left-inverse},
\begin{align*}
\sum_{y \le x} \mu(y,x) g(y)
&= \sum_{y \le x} \mu(y,x) \Biggl( \sum_{z \le y} f(z) \Biggr)
= \sum_{z \le y \le x} \mu(y,x) f(z) \\
&= \sum_{z \le x} f(z) \Biggl( \sum_{z \le y \le x} \mu(y,x) \Biggr)
= f(x).
\end{align*}
\end{proof}

\begin{prop}[Möbius Function on the Subset Lattice]\label{prop:mobius-subsets}
For the poset $P = \mathcal{P}([n])$ ordered by inclusion, the Möbius function is given by
\[
\mu(A,B) =
\begin{cases}
(-1)^{|B|-|A|}, & A \subseteq B, \\
0, & \text{otherwise.}
\end{cases}
\]
\end{prop}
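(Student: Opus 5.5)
We prove the claim that on the Boolean lattice $\mathcal{P}([n])$ the Möbius function is $\mu(A,B)=(-1)^{|B|-|A|}$ for $A\subseteq B$. The plan is to show that the candidate function $\tilde\mu(A,B) := (-1)^{|B|-|A|}$ for $A\subseteq B$ (and $0$ otherwise) satisfies the defining recursion for $\mu$. Since that recursion determines $\mu$ uniquely by induction on the length of the interval $[A,B]$, this gives $\tilde\mu=\mu$. Equivalently, we show $\tilde\mu$ is a left inverse of $\zeta$ in the incidence algebra $\mathcal{A}(P)$. Inverses in this algebra are two-sided and unique, because $\zeta$ is unitriangular with respect to any linear extension of $\subseteq$.

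The key steps are as follows.

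First, $\tilde\mu(A,A)=(-1)^0=1$, and $\tilde\mu(A,B)=0$ whenever $A\not\subseteq B$, which matches the first two cases of the definition.

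Second, for $A\subsetneq B$ we must verify
\[
\sum_{A\subseteq C\subseteq B}(-1)^{|C|-|A|}=0,
\]
which is exactly the statement $\sum_{A\le C\le B}\tilde\mu(A,C)=0$. It is equivalent to $\tilde\mu(A,B) = -\sum_{A\subseteq C\subsetneq B}\tilde\mu(A,C)$.

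Third, the interval $[A,B]$ is isomorphic to $\mathcal{P}(B\setminus A)$ via $C\mapsto C\setminus A$. Setting $k=|B\setminus A|\ge 1$, the sum becomes
\[
\sum_{j=0}^{k}\binom{k}{j}(-1)^j=(1-1)^k=0
\]
by the binomial theorem.

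Fourth, we conclude by strong induction on $|B|-|A|$. If $\mu(A,C)=\tilde\mu(A,C)$ for all $C$ with $A\subseteq C\subsetneq B$, then the recursion gives $\mu(A,B)=-\sum_{A\subseteq C\subsetneq B}\tilde\mu(A,C)=\tilde\mu(A,B)$.

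There is no serious obstacle. The only point needing care is the bookkeeping that reduces the interval sum to the binomial identity, in particular the convention $(1-1)^k=0$ for $k\ge1$ and $=1$ for $k=0$, which is what separates the diagonal case from the rest. An alternative route is to note that $\mathcal{P}([n])\cong\{0,1\}^n$ is a product of two-element chains, and to use multiplicativity of Möbius functions over products. Each chain has $\mu(0,1)=-1$, which gives the same sign formula. I would mention this only as a remark.

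This formula is exactly what converts $\log\Xi(B)=\sum_{\Gamma\subseteq B}\mathfrak{K}(\Gamma)$ into \autoref{eq:inclusionexclusion} via \autoref{thm:mobius-inversion}.
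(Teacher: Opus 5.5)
Your proof is correct and follows essentially the same approach as the paper: identify the interval $[A,B]$ with $\mathcal{P}(B\setminus A)$, use the binomial identity $(1-1)^k=0$ for $k\ge 1$, and induct on $|B|-|A|$ via the defining recursion. The only difference is bookkeeping. The paper first uses isomorphism-invariance to reduce to $\mu(\emptyset,S)$ and inducts there, whereas you check directly that the candidate $(-1)^{|C|-|A|}$ satisfies the recursion on $[A,B]$.
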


\begin{proof}
By definition, $\mu(A,B) = 0$ if $A \not\subseteq B$. Now assume $A \subseteq B$. The interval $[A,B] = \{C : A \subseteq C \subseteq B\}$ is isomorphic to the power set $\mathcal{P}(B \setminus A)$ via the map $C \mapsto C \setminus A$. Since the Möbius function is an invariant of poset isomorphism, we have $\mu(A,B) = \mu(\emptyset, B \setminus A)$. 

Let $s = |B \setminus A|$. We prove $\mu(\emptyset, S) = (-1)^s$ by induction on $s$. For the base case: if $s=0$, then $S=\emptyset$ and $\mu(\emptyset,\emptyset) = 1 = (-1)^0$.

Suppose the claim holds for all sets of size $k < s$. By the defining property of the Möbius function:
    \[
    \sum_{\emptyset \subseteq C \subseteq S} \mu(\emptyset, C) = 0 \implies \mu(\emptyset, S) = -\sum_{\emptyset \subseteq C \subset S} \mu(\emptyset, C)
    \]
    Applying the inductive hypothesis $\mu(\emptyset, C) = (-1)^{|C|}$ and counting subsets by their size $i$:
    \[
    \mu(\emptyset, S) = -\sum_{i=0}^{s-1} \binom{s}{i} (-1)^i
    \]
    By the Binomial Theorem, $0 = (1-1)^s = \sum_{i=0}^{s} \binom{s}{i} (-1)^i$. Thus, the sum from $0$ to $s-1$ must equal $- \binom{s}{s}(-1)^s = -(-1)^s$. Substituting this back:
    \[
    \mu(\emptyset, S) = -(-(-1)^s) = (-1)^s
    \]

It follows that $\mu(A,B) = (-1)^{|B \setminus A|} = (-1)^{|B|-|A|}$.
\end{proof}

For us, $|A|$ will denote the number of generalized loops supported on region $A$.

\smsubsection{Cluster-cumulant expansion}

Starting from the ``standard" cluster expansion from \autoref{prop:clusterexpformal},
\begin{equation}
  \log \ZZ
  \;=\;
  \log Z_{BP}
  + \sum_{\text{connected}\,\mathbf W}
    \phi(\mathbf W)\,Z_{\mathbf W},
  \label{eq:cluster-expansion}
\end{equation}
each connected cluster $\mathbf W = \{(l_i,\eta_i)\}$ has weight $Z_{\mathbf W} = \prod_i Z_{l_i}^{\eta_i}$ and some loop-set $\text{loop-set}(\mathbf W) \subseteq \LL$.

Given a subset $\Gamma\subseteq\LL$,  we define the interaction graph $G_\Gamma$ as that of the cluster $\W = \{(l,1)\}_{l\in\Gamma}$. We thus inherit the notion of connectedness of a set $\Gamma\subseteq \LL$ as before. Define, for every finite subset $\Gamma \subset \LL$,
\begin{equation}
  \mathfrak K(\Gamma)
  \;:=\;
  \begin{cases}
    \displaystyle
    \sum_{\substack{\mathbf W\ \mathrm{connected}\\ \text{loop-set}(\mathbf W) =\Gamma}}
    \phi(\mathbf W)\,Z_{\mathbf W},
    & \text{if } G_\Gamma \text{ is connected}, \\[3ex]
    0, & \text{otherwise.}
  \end{cases}
  \label{eq:Kdef}
\end{equation}
That is, $\mathfrak K(\Gamma)$ collects all connected cluster contributions supported on $\Gamma$ and vanishes whenever the induced subgraph $G_\Gamma$ is disconnected.

Regrouping the sum in~\autoref{eq:cluster-expansion} by the distinct supports $\Gamma$ gives
\[
  \sum_{\text{connected }\mathbf W}
  =\sum_{\Gamma\subset\LL}\ \sum_{\substack{\mathbf W\ \text{connected}\\ \text{loop-set}(\mathbf W)=\Gamma}},
  \qquad
  \log\ZZ-\log Z_{BP}
  =\sum_{\substack{\Gamma\subset\LL\\ G_\Gamma\ \mathrm{connected}}}  \mathfrak K(\Gamma),
\]
so that the free energy can be written as a sum purely over finite sets.

For any finite $B \subset \LL$, define the \emph{restricted partition function}
\begin{equation}
\Xi(B)=1+\!\!\sum_{\substack{\Gamma'\subseteq B\\\Gamma'\ \mathrm{compatible}}}\!
   \prod_{l\in\Gamma'} Z_l.
   \label{eq:restricted-partition}
\end{equation}
The cluster expansion implies
\begin{equation}
    \log\Xi(B) = \sum_{\Gamma \subseteq B} \mathfrak{K}(\Gamma),
    \label{eq:logxi-cluster}
\end{equation}
which has the structure of a cumulative function over the subset lattice of $B$.  
Applying the Möbius inversion~\autoref{thm:mobius-inversion} together with \autoref{prop:mobius-subsets} yields the \emph{cumulant-cluster formula}
\begin{equation}
    \mathfrak{K}(A)
    = \sum_{B \subseteq A} (-1)^{|A|-|B|}\, \log\Xi(B).
    \label{eq:cumulant-cluster}
\end{equation}

\begin{theorem}[Cluster-cumulant expansion] \label{thm:clustercumulantexp}
    The free energy can be written as the sum of cumulants of all connected subgraphs $\Gamma$ of $\LL$. 
    \begin{equation}
  \boxed{\;
  \log\ZZ
  =\log Z_{BP}
  +\sum_{\Gamma\subseteq\LL}
   \mathfrak K(\Gamma).
  \;}
  \label{eq:cluster-to-cumulant}
\end{equation}
where the cumulants can be written down as the inclusion-exclusion sum of local partition functions, viz.,
\begin{align}
\Xi(B) &=1+\!\!\sum_{\substack{\Gamma'\subseteq B\\\Gamma'\ \mathrm{compatible}}}\!
   \prod_{l\in\Gamma'} Z_l \\ 
\mathfrak{K}(\Gamma)
    &= \sum_{B \subseteq \Gamma} (-1)^{|\Gamma|-|B|}\, \log\Xi(B).\label{eq:cumulant}
\end{align}

\end{theorem}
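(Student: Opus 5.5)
The statement to prove is \autoref{thm:clustercumulantexp}, and it follows by regrouping the absolutely convergent cluster series of \autoref{prop:clusterexpformal} and then applying M\"obius inversion on the subset lattice.

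\textbf{Step 1: regroup by loop-set.} Each connected cluster $\mathbf W$ has a unique finite loop-set $\Gamma=\text{loop-set}(\mathbf W)$. Its interaction graph $G_\Gamma$ is connected, because $G_{\mathbf W}$ is connected and contracting repeated copies of the same loop (which are mutually adjacent) preserves connectivity. Assuming loop decay with $c>c_0$, \autoref{thm:clusterconvformal} gives absolute convergence of $\sum_{\mathbf W}\phi(\mathbf W)Z_{\mathbf W}$. We may therefore partition the connected clusters according to their loop-set and sum each class first. This gives $\log\ZZ-\log Z_{BP}=\sum_{\Gamma}\mathfrak K(\Gamma)$, with $\mathfrak K$ as in \autoref{eq:Kdef}. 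Each inner series defining $\mathfrak K(\Gamma)$ converges absolutely, being a sub-series of an absolutely convergent one, and disconnected $\Gamma$ contribute zero by definition. This proves \autoref{eq:cluster-to-cumulant}.

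\textbf{Step 2: identify the cumulant formula.} Fix a finite $B\subset\LL$ and apply \autoref{prop:clusterexpformal} to the restricted polymer system with excitation set $B$. The loop-expansion structure of \autoref{prop:loopseriesformal}, namely $\Xi(B)=1+\sum_{\Gamma'\subseteq B\text{ compatible}}\prod_{l\in\Gamma'} Z_l$, is the standard abstract polymer gas. The Mayer/Ursell identity $\log\Xi(B)=\sum_{\mathbf W\text{ conn.},\,\text{loop-set}\subseteq B}\phi(\mathbf W)Z_{\mathbf W}$ is an identity of formal power series in the activities $\{Z_l\}_{l\in B}$. It holds as a convergent equality whenever the series converges absolutely, which again follows from loop decay (the activities on $B$ are a subset of those on $\LL$).

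Grouping by loop-set as in Step~1 then yields $\log\Xi(B)=\sum_{\Gamma\subseteq B}\mathfrak K(\Gamma)$, which is \autoref{eq:logxi-cluster}. It matters that $\mathfrak K(\Gamma)$ depends only on the activities of loops in $\Gamma$ and not on the ambient set $B$. This holds because $\phi(\mathbf W)$ depends only on the interaction graph of $\mathbf W$.

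\textbf{Step 3: invert.} \autoref{eq:logxi-cluster} has the form $g(B)=\sum_{\Gamma\subseteq B}f(\Gamma)$ on the finite poset $\mathcal P(\Gamma_0)$ for any finite $\Gamma_0$. Applying \autoref{thm:mobius-inversion} with the M\"obius function from \autoref{prop:mobius-subsets} gives \autoref{eq:cumulant}.

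\textbf{Main obstacle.} The delicate point is analytic rather than combinatorial: justifying the interchange of summation in Step~1 and the use of the Ursell identity for $\log\Xi(B)$ with finite $B$. This requires that $\Xi(B)\neq 0$ and that the branch of the logarithm is the one produced by the series. I would handle both via absolute convergence. Setting all activities outside $B$ to zero, the restricted system still satisfies the hypothesis of \autoref{thm:clusterconvformal}, and the tail bound of \autoref{lem:cluster_tail} dominates every rearrangement. The convergent series then defines a continuous logarithm along the path $tZ_l$, $t\in[0,1]$, starting from $\log 1=0$. This both guarantees $\Xi(B)\neq0$ and fixes the branch consistently, so the formal identities become genuine equalities.
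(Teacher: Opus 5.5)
Your proposal is correct and follows the paper's own route. Like the paper, you regroup the connected-cluster series by loop-set, apply the cluster expansion to the restricted partition functions to obtain $\log\Xi(B)=\sum_{\Gamma\subseteq B}\mathfrak K(\Gamma)$, and M\"obius-invert on the subset lattice; your extra analytic care (absolute convergence under loop decay, $\Xi(B)\neq 0$, and fixing the branch of the logarithm) supplies justifications the paper leaves implicit.
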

The support of a set of loops $\Gamma$, is defined as the union of support of the individual loops,
\begin{equation}
    \supp(\Gamma) := \cup_{l\in\Gamma} \supp(l)
\end{equation}
\smsubsection{Examples}
We illustrate the computation of the cumulants for a few simple cases.
\begin{itemize}
  \item For $A = \{l\}$,
  \[
  \Xi(A) = 1 + Z_l,
  \qquad
  \mathfrak{K}(A) = \log(1 + Z_l).
  \]
  \item For $A = \{l_1,l_2\}$ with $l_1 \not\sim l_2$ (incompatible),
  \[
  \Xi(A) = 1 + Z_{l_1} + Z_{l_2}, \qquad
  \mathfrak{K}(A)
  = \log(1 + Z_{l_1} + Z_{l_2})
    - \log(1 + Z_{l_1})
    - \log(1 + Z_{l_2}).
  \]
  \item For $A = \{l_1,l_2\}$ with $l_1 \sim l_2$ (compatible),
  \[
  \Xi(A) = (1 + Z_{l_1})(1 + Z_{l_2}),
  \qquad
  \mathfrak{K}(A) = 0.
  \]
\end{itemize}

\smsubsection{Radius of convergence}
We can now attempt truncating the cluster-cumulant series at a finite weight $|\Gamma| \leq m$, where for $\Gamma = \{l_1, l_2,\dots\}$ we have $|\Gamma| = |l_1| + |l_2|+\dots $.
    \begin{equation}\label{eq:cc-finite}
  F_m
  =-\log Z_{BP}
  -\sum_{\text{conn.} |\Gamma| \leq m}
   \mathfrak K(\Gamma).
\end{equation}
\begin{lemma}
    The cluster-cumulant expansion has the same radius of convergence as the cluster expansion.
\end{lemma}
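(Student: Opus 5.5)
The plan is to show that the cluster-cumulant series is just a regrouping of the absolutely convergent cluster series, and that its truncation error is dominated by the cluster-expansion tail. Then the convergence region (the loop-decay regime $c>c_0$ of \autoref{thm:clusterconvformal}) and the error rate carry over directly.

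\textbf{Step 1: regrouping.} By definition \eqref{eq:Kdef}, each $\mathfrak K(\Gamma)$ is the sum of $\phi(\mathbf W)Z_{\mathbf W}$ over connected clusters with loop-set equal to $\Gamma$. The map $\mathbf W\mapsto \text{loop-set}(\mathbf W)$ partitions the set of connected clusters. Whenever the cluster series converges absolutely, the sub-series defining each $\mathfrak K(\Gamma)$ converges absolutely, and
\begin{equation*}
\sum_{\Gamma}|\mathfrak K(\Gamma)|\le \sum_{\text{conn. }\mathbf W}|\phi(\mathbf W)Z_{\mathbf W}|<\infty .
\end{equation*}
So $\sum_\Gamma \mathfrak K(\Gamma)$ converges absolutely to $\log\ZZ-\log Z_{BP}$. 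This gives one inclusion: the cumulant series converges wherever the cluster series does. For the reverse inclusion, I would argue that the radius of convergence in the fugacities $Z_l$ is set by the nearest zero of $\Xi$, i.e.\ of $\ZZ/Z_{BP}$. Both series are expansions of the same analytic function $\log\Xi$ around $Z_l=0$, so divergence of the cluster series reflects the same singularity. This is why I state the lemma as equality of radii rather than just an inclusion.

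\textbf{Step 2: truncation.} Truncating the cumulant series at $|\Gamma|\le m$ keeps every connected cluster whose loop-set $\Gamma$ satisfies $|\Gamma|\le m$. Since $|\mathbf W|=\sum_i\eta_i|l_i|\ge|\Gamma|$, every cluster with $|\mathbf W|\le m$ is included. The discarded clusters therefore form a subset of those with $|\mathbf W|>m$. Applying the triangle inequality and then \autoref{lem:cluster_tail} (summed over all base sites, which gives a factor $N$) yields
\begin{equation*}
|F_m^{\mathrm{cc}}-\FF|\le \sum_{|\mathbf W|>m}|\phi Z_{\mathbf W}|\le N e^{-(c-c_0)(m+1)} .
\end{equation*}
This is the same bound as in \autoref{thm:clusterconvformal}, so the cumulant truncation is never worse than the cluster truncation.

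\textbf{Main obstacle.} The delicate point is the grading. A cumulant of order $\le m$ contains clusters of arbitrarily high order through multiplicities $\eta_i$, so the truncated cumulant sum is not a partial sum of the cluster series in the usual ordering. The sandwich inclusion in Step 2 resolves this, but it relies on absolute convergence. That is why Step 1 must be established first.
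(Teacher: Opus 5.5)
Your Step 2, together with the forward half of Step 1, is the paper's argument. The discarded cumulants are bounded by the cluster tail because $|\W|=\sum_i\eta_i|l_i|\ge|\Gamma|$. Applying \autoref{lem:cluster_tail} at each site and taking a union bound then gives $N e^{-(c-c_0)(m+1)}$. Your explicit justification of the regrouping via absolute convergence is a useful addition that the paper leaves implicit. One small overstatement: what is ``never worse'' is the error \emph{bound}, not necessarily the actual error, since the extra clusters retained by the cumulant truncation carry signs.

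What you should drop is the ``reverse inclusion'' paragraph: the claim is false in general, and the nearest-zero argument does not apply. The cumulant series is not a power series in the $Z_l$. Via \eqref{eq:cumulant}, each $\mathfrak K(\Gamma)$ is a finite combination of $\log\Xi(B)$, analytic away from the zeros of the $\Xi(B)$ with $B\subseteq\Gamma$. Note that this resummed form is exactly how you treat it when you speak of ``the same analytic function $\log\Xi$''. The singularity reasoning that limits the convergence region of the cluster power series therefore says nothing about the cumulant series.

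The paper's own single-loop network is a counterexample. With $\LL=\{l\}$, the cluster series is $\sum_{\eta\ge1}(-1)^{\eta-1}Z_l^{\eta}/\eta$, which diverges at, e.g., $Z_l=2$, even though $\Xi=1+Z_l=3$ is perfectly regular there. The cumulant series, by contrast, is the single exact term $\log(1+Z_l)$. A disjoint union of $N$ such cycles shows the same behaviour with infinitely many terms. The divergence at $Z_l=2$ is caused by the zero at $Z_l=-1$, which the resummed cumulant does not see.

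So the lemma can only be proved, and the paper's proof only proves it, in the one-sided sense the main text states (``convergence is inherited from the cluster expansion''). Wherever loop decay with $c>c_0$ guarantees convergence of the cluster expansion, the cumulant expansion also converges, with the same truncation bound. The cumulant series may converge on a strictly larger set.
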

\begin{proof}
We have that, 
\begin{equation}
    |F_m - \FF| = \sum_{\text{conn.} |\Gamma| \geq m+1}
   \mathfrak K(\Gamma).
\end{equation}
This can be upper bounded straightforwardly from the bound on $Z_\W$. Particularly, fix a site $v$, and we have for all connected sets $\Gamma$ supported on that site, 
\begin{align}
  \sum_{\Gamma \not\sim v:|\Gamma| \geq m+1} |\mathfrak K(\Gamma)| &\leq   \sum_{\Gamma:|\Gamma| \geq m+1} \sum_{\substack{\mathbf W\ \mathrm{connected}\\ \text{loop-set}(\mathbf W)=\Gamma \\ \W \not\sim i} }
    |\phi(\mathbf W)\,Z_{\mathbf W}| \\ 
   &\leq  \sum_{k=m+1}^{\infty} \sum_{\substack{
    \text{connected }\, \mathbf{W} \\
    \mathbf{W} \not\sim i \\
    |\mathbf{W}| = k
  }} |\phi(\mathbf{W}) Z_{\mathbf{W}}| \le \frac{1}{2} e^{-d(m+1)}
\end{align}
where $d = (c-c_0) > 0$ by decay of loops. From a union bound over all the sites in the lattice, we get,
\begin{equation}
    |F_m - \FF| \leq Ne^{-d(m+1)}
\end{equation}
\end{proof}

\smsubsection{``Top-down'' expansion}
As noted in the main text, there are two different ways to view the cluster-cumulant expansion. One is through a M\"obius transform as above. This construction is ``bottom up'': the cumulant $\mathfrak{K}(\Gamma)$ is defined solely through the subsets of $\Gamma$, per~\autoref{eq:cumulant}. 

Alternatively, we can reorganize the truncated series~\autoref{eq:cc-finite} as in~\autoref{eq:cc-top-down}, repeated below for convenience~\cite{welling2012}:
\begin{subequations}
    \begin{align}
    \label{eq:kB}\sum_{\mathrm{conn.} |\Gamma| \leq m} \mathfrak{K}(\Gamma) &= \sum_{\mathrm{conn.} |\Gamma|\leq m} \sum_{B \subseteq \Gamma} (-1)^{|\Gamma| - |B|} \log \Xi(B) \\
    \label{eq:bmB}&:= \sum_{\mathrm{conn.} |B| \leq m} b_m(B) \log \Xi(B)
\end{align}
\end{subequations}
where
\begin{equation}\label{eq:bm}
    b_m(B) := \sum_{\substack{\mathrm{conn}. \Gamma \supseteq B\\ |\Gamma|\leq m}} \mu(B,\Gamma) = 1 - \sum_{\substack{\mathrm{conn.} \Gamma\supset B \\ |\Gamma|\leq m}} b_m(\Gamma)
\end{equation}
The astute reader may question why we need only consider connected subsets $B$ in~\autoref{eq:bmB}, whereas the sum defining the cumulant (\autoref{eq:cumulant}) includes all subsets of $\Gamma$, including disconnected ones. The reason is simply that the restricted partition function (\autoref{eq:restricted-partition}) for a disconnected subgraph $\Gamma$ factorizes, so we can account for the contribution of a disconnected subgraph simply by adjusting the counting numbers of its connected components. To be concrete: suppose we included a disconnected subset $B=B_1 \cup B_2$ in the sum~\autoref{eq:bmB}, with counting number $\tilde{b}_m(B)$. This modifies $$b_m(B_1)\rightarrow \tilde{b}_m(B_1) = b_m(B_1) - b_m(B), \quad b_m(B_2) \rightarrow \tilde{b}_m(B_2) = b_m(B_2) - b_m(B),$$ while leaving all other counting numbers unchanged. This redefinition of the counting numbers leaves the partition function invariant, as $$\tilde{b}_m(B) \log \Xi(B) + \tilde{b}_m(B_1) \log\Xi(B_1) + \tilde{b}_m(B_2) \log \Xi(B_2) = b_m(B_1) \log\Xi(B_1) + b_m(B_2) \log\Xi(B_2).$$

We therefore arrive at a repackaging of the order-$m$ cluster-cumulant expansion for the free energy, 
\begin{equation}
    \FF = \FF_0 - \sum_{\substack{\mathrm{conn}. B \subseteq \mathcal{L} \\ |B| \leq m}} b_m(B) \log \Xi(B).
\end{equation}
Rather than organizing the expansion in terms of generalized loops, we can alternatively organize it in terms of \textit{regions}: vertex-induced subgraphs of $G$ containing up to $k$ vertices. The restricted partition function on a region $R$ is simply the contraction of the tensor network on that region, with BP messages inserted along its boundary. After normalizing the tensor network by the BP partition function, we exploit two facts: (1) the restricted partition function of an acyclic region vanishes, (2) the cumulant of a disconnected region vanishes~\cite{welling2012}. Together, these properties allow us to write the expansion, analogously to~\autoref{eq:bmB}, as the weighted sum over restricted partition functions on connected, leafless, vertex-induced subgraphs of $G$, 
\begin{equation}
    F_k = -\log Z_{BP} - \sum_{\substack{\mathrm{conn.} R \subseteq G \\ w(R) \leq k}} b_k(R) \log \Xi(R) \quad \mathrm{where} \quad b_k(R) = \sum_{\substack{\mathrm{conn.} A \supset R \\ w(A) \leq k}} b_k(A),
\end{equation}
where $w(R)$ denotes the number of vertices in $R$.
The recursive, top-down expression for the counting number -- a manifestation of the inclusion-exclusion principle -- also suggests an efficient way to identify the regions with nonzero counting numbers: first compute all maximal regions up to $k$ vertices (regions that are not the subgraphs of any other $w\leq k$ regions), then iteratively form intersections until no new regions are found. This procedure is outlined in~\autoref{alg:regions}. After the regions are obtained, all maximal regions are assigned $b_k(R)=1$, and the remaining counting numbers are determined recursively, starting from the largest regions.

\begin{algorithm}[hbtp]
\SetAlgoLined
\SetKwInOut{Output}{Output}
\SetKwInOut{Input}{Input}
\Input{Graph $G$, integer $k$}
\Output{Partially ordered set of regions $\mathcal{R}_1,\mathcal{R}_2,..$ such that no region in $\mathcal{R}_i$ is contained within a region in $\mathcal{R}_{j>i}$}
$\mathcal{M} \gets$ connected, leafless, vertex-induced subgraphs of $G$ containing up to $k$ vertices\;
$\mathcal{R}_1 \gets$ $\{R \in \mathcal{M} \, | \, R \not\subset R' \,  \forall \, R'\in\mathcal{M}\}$    (maximal/parent regions)\;
$i \gets$ 1\; 
\While{$\mathcal{R}_i$ is not empty} {
$\mathcal{R}_{i+1} \gets \emptyset$\;
\For{$R \in \mathcal{R}_i$, $R' \in \mathcal{R}_{j\leq i}$} {
	$P$ $\gets$ intersection of $R$ and $R'$\;
    \If{$P\notin \mathcal{R}_i$ {\normalfont{\textbf{and}}} $P$ is connected and leafless {\normalfont{\textbf{and}}} $P$ has at least one edge} {
    Add $P$ to $\mathcal{R}_{i+1}$\;}
}
$i \gets i+1$\; 
}

\Return $\{\mathcal{R}_1,...,\mathcal{R}_i\}$ \;
\caption{Region finding\label{alg:regions}}
\end{algorithm}


\smsection{Local Observables}

In this section, we state~\autoref{prop:localexpexpansion} and~\autoref{prop:localexpexpansion-derivative} in more detail and prove their convergence given loop decay (\autoref{thm:clusterconv}).

The key idea we will use is the following: the cluster expansion works on top of a set of geometrically allowed excitations $\LL$, with each excitation $l \in \LL$ associated with a weight $Z_l \in \C$. There is a hard-core pairwise interaction between the excitations, for $l,l' \in \LL$ 
\begin{equation}
\Delta(l, l') =
\begin{cases}
1, & \text{if $l$ and $l'$ are incompatible}, \\[4pt]
0, & \text{otherwise.}
\end{cases}
\end{equation}
The essential ingredient in what follows is a modification of the set of allowed excitations on the different TNs we consider. 

In both expansions, we
consider a state $\ket{\psi}$ described as a PEPS on a graph $G = (V,E)$. Consider the tensor network for the norm $\braket{\psi}$, denoted $\ZZ$. We find a set of fixed point messages $\MM$ for the norm network, and hence can expand the contraction, 
\begin{equation} \label{eq:loopnormnetwork}
    \braket{\psi}  =  Z_{BP}\left( 1 + \sum_{\substack{\Gamma\subset\mathcal{L}\\\Gamma\text{ finite, compatible}}}
  \prod_{l\in\Gamma} Z_l\right)
\end{equation}


Now, consider an observable $O_A$ with $\supp(O_A) \subseteq A$ for $A \subset V$ with $|A| = O(1)$. We can evaluate its expectation values in two different ways: (1) as the ratio of two partition functions or (2) as the derivative of a perturbed free energy. We consider these in turn.

\smsubsection{Ratio of partition functions}

Let the TN corresponding to $\expect{\psi}{O_A}{\psi}$ be denoted $\ZZ^A$. Consider expanding $\ZZ^A$ with the fixed points $\MM$ of the norm network. Crucially, self-consistency is preserved except for all $v\in A$. It is easy to see if we define $\LL_A$ to be the set of subgraphs of $V$ where each vertex in $V / A$ has degree at least two, then we get, 
\begin{equation}\label{eq:loopobservableA}
    \expect{\psi}{O_A}{\psi} =  Z^A_{BP}\left( 1 + \sum_{\substack{\Gamma\subset\mathcal{L}_A\\\Gamma\text{ finite, compatible}}}
  \prod_{l\in\Gamma} Z_l\right)
\end{equation}
Note the following points: (i) for all $l \in \LL_A$ with $\supp(l) \cap A = \emptyset$, we have $Z_l$ identical to the corresponding loop contribution of Eq.~\ref{eq:loopnormnetwork}, and (ii) $Z_{BP}^A$ differs from $Z_{BP}$ exactly inside $A$.

\begin{prop}\label{prop:localobsexpansionformal}
Let $\psi$ be a state represented as a PEPS on $G=(V,E)$. Denote $\ZZ$ to be the norm network $\inner{\psi}{\psi}$. For $A\subset V$, let $\ZZ^A$ denote the TN for $\expect{\psi}{O_A}{\psi}$ with $\emph{supp}(O_A) \subseteq A$. Given a cluster $\mathbf{W}$, denote the cluster correction $Z^{(A)}_\mathbf{W}$ on $\ZZ (\ZZ^A)$ respectively. Then we have that, 
    \begin{equation} 
    \expval{O_A} = \expval{O_A}_{\emph{BP}} \cdot \exp{\left(\sum_{\substack{\emph{connected} \, \mathbf{W} \leftarrow \LL_A \\ 
    \emph{supp}(\mathbf{W}) \cap A \neq\emptyset }} \phi_{\mathbf{W}} (Z^A_{\mathbf{W}} - Z_{\mathbf{W}})\right)}
\end{equation}
\end{prop}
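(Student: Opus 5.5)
We take the ratio $\expval{O_A} = \ZZ^A/\ZZ$ and expand numerator and denominator around the \emph{same} fixed point $\MM$. This puts both on a common excitation set $\LL_A$, so their cluster expansions can be subtracted term by term.

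\textbf{Step 1: put $\ZZ$ on $\LL_A$.} For the denominator we rewrite \autoref{eq:loopnormnetwork} over $\LL_A$, using the projector decomposition \autoref{eq:identityresolution} on every edge. In $\ZZ$, self-consistency holds at every vertex, including those in $A$. Hence any $l\in\LL_A\setminus\LL$ has a vertex of degree one, necessarily inside $A$, and $Z_l=0$ by the same argument that eliminates dangling edges in \autoref{prop:loopseriesformal}. Therefore
\[
\braket{\psi}{\psi}=Z_{BP}\Bigl(1+\sum_{\Gamma\subset\LL_A\ \mathrm{compatible}}\prod_{l\in\Gamma}Z_l\Bigr).
\]
The numerator \autoref{eq:loopobservableA} is a polymer gas on the same set $\LL_A$, with the same hard-core incompatibility $\Delta(l,l')$. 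The only change is the weights, $Z^A_l$ in place of $Z_l$.

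\textbf{Step 2: apply the cluster expansion to both.} The Ursell coefficients $\phi_{\mathbf W}$ depend only on the interaction graph $G_{\mathbf W}$. Applying \autoref{prop:clusterexpformal} to each gas therefore gives
\[
\log\ZZ^A=\log Z^A_{BP}+\sum_{\mathrm{conn.}\,\mathbf W\leftarrow\LL_A}\phi_{\mathbf W}Z^A_{\mathbf W},
\qquad
\log\ZZ=\log Z_{BP}+\sum_{\mathrm{conn.}\,\mathbf W\leftarrow\LL_A}\phi_{\mathbf W}Z_{\mathbf W},
\]
with identical coefficients. We then subtract the two expansions.

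\textbf{Step 3: cancel clusters away from $A$.} Point (i) after \autoref{eq:loopobservableA} says $Z^A_l=Z_l$ whenever $\supp(l)\cap A=\emptyset$, because these loops see only unmodified tensors and the same messages. Hence $Z^A_{\mathbf W}=Z_{\mathbf W}$ whenever $\supp(\mathbf W)\cap A=\emptyset$, and only clusters touching $A$ survive in the difference. For the prefactor, $Z^A_{BP}/Z_{BP}$ is a ratio of products of the local factors $Z^{(v)}$ from \autoref{eq:tensor_z}. All factors with $v\notin A$ cancel, leaving $\prod_{v\in A}Z^{A,(v)}/Z^{(v)}$, which is by definition $\expval{O_A}_{\mathrm{BP}}$. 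Exponentiating yields the claimed formula.

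\textbf{Main obstacle: justifying the subtraction.} The term-by-term subtraction needs both series to converge absolutely; otherwise the rearrangement is only formal. The plan is to assume string decay for the weights on both $\ZZ$ and $\ZZ^A$, as the footnote in the main text indicates, with $c>c_0$. Then \autoref{lem:cluster_tail} gives absolute summability of the clusters meeting $A$. The clusters not meeting $A$ have identical weights in the two expansions, so they can be paired before summing: taking the difference on a finite graph and then the limit is legitimate.

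A secondary subtlety is that $\log$ is defined only up to branch. This is resolved by the positivity/phase convention of the paper, or by interpreting the identity as holding in the convergent regime, where both logarithms are given by their cluster series.
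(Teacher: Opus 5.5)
Your proposal is correct and is essentially the paper's proof. Both expand $\ZZ^A$ and $\ZZ$ around the same norm-network fixed point on the common excitation set $\LL_A$, using that the extra strings vanish on $\ZZ$ by self-consistency; both then cancel the identical clusters not touching $A$, identify $Z^A_{BP}/Z_{BP}$ with $\expval{O_A}_{\mathrm{BP}}$, and exponentiate. The differences are cosmetic: you enlarge the denominator's excitation set to $\LL_A$ before expanding rather than afterwards, and you state explicitly the two-sided decay assumption and the branch convention that the paper's formal argument leaves implicit.
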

\begin{proof}
We have,
\begin{equation}
    \expval{O_A} := \frac{\expect{\psi}{O_A}{\psi}}{\braket{\psi}}
\end{equation}
Consider now $\log\expval{O_A}$. We can apply the cluster theorem to both the TNs separately with appropriate modification of the excitation set ($\LL\to\LL_A$), to get, 
\begin{equation}
    \log\expval{O_A} = \log\frac{Z^A_{BP}}{Z_{BP}} + \sum_{\text{connected} \, \mathbf{W} \leftarrow \LL_A} \phi_{\mathbf{W}} Z^A_{\mathbf{W}} - \sum_{\text{connected} \, \mathbf{W} \leftarrow \LL} \phi_{\mathbf{W}} Z_{\mathbf{W}}
\end{equation}
where we denote $Z$ for loops/clusters on $\ZZ$ and $Z^A$ for loops/clusters on $\ZZ^A$. Note that, 
\begin{equation}
    \expval{O_A}_{\text{BP}} = \frac{Z^A_{BP}}{Z_{BP}}
\end{equation}
which depends on the messages on the boundary of $A$ and the tensors in $A$. Now, all clusters not supported on $A$ have identical $Z_\mathbf{W}$ in both the summations, since we expanded with the same message tensors. Finally, we can replace $\LL \to \LL_A$ for the denominator tensor for convenience, since all $l \in \LL_A / \LL$ evaluate to zero anyway. Hence we get, 
\begin{equation}
    \log\expval{O_A} = \log\expval{O_A}_{\text{BP}} + \sum_{\substack{\text{connected} \, \mathbf{W} \leftarrow \LL_A \\ 
    \supp(\mathbf{W}) \cap A \neq\emptyset }} \phi_{\mathbf{W}} (Z^A_{\mathbf{W}} - Z_{\mathbf{W}})
\end{equation}
Taking the exponential, 
    \begin{equation} 
    \expval{O_A} = \expval{O_A}_{\text{BP}} \cdot \exp{\left(\sum_{\substack{\text{connected} \, \mathbf{W} \leftarrow \LL_A \\ 
    \supp(\mathbf{W}) \cap A \neq\emptyset }} \phi_{\mathbf{W}} (Z^A_{\mathbf{W}} - Z_{\mathbf{W}})\right)}
\end{equation}
\end{proof}

Re-organizing the cluster expansion with the cumulant method as in \autoref{thm:clustercumulantexp}, we get the following cluster-cumulant expansion for local observables.

\begin{corollary}
    The cluster-cumulant version of the local operator expansion is given as,
    \begin{equation}   
    \expval{O_A} =\expval{O_A}_{\emph{BP}} \cdot \exp{\left(\sum_{\substack{\emph{conn.} \Gamma\subseteq\LL_A \\ 
    \emph{supp}(\Gamma) \cap A \neq\emptyset}} (\mathfrak K^A(\Gamma) - \mathfrak K(\Gamma))\right)}
\end{equation}
where the sum is over all connected subsets $\Gamma$ of $\LL_A$ that intersect $A$.
\end{corollary}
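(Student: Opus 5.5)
The plan is to start from the identity established in the proof of \autoref{prop:localobsexpansionformal},
\begin{equation*}
\log\expval{O_A} = \log\expval{O_A}_{\text{BP}} + \sum_{\text{conn.}\,\mathbf W\leftarrow\LL_A}\phi_{\mathbf W}Z^A_{\mathbf W} - \sum_{\text{conn.}\,\mathbf W\leftarrow\LL_A}\phi_{\mathbf W}Z_{\mathbf W}.
\end{equation*}
Here the second sum runs over $\LL_A$ rather than $\LL$, which is legitimate since $Z_l=0$ on the norm network for every $l\in\LL_A\setminus\LL$. I would then regroup each of the two sums separately by loop-set, exactly as in \autoref{thm:clustercumulantexp}. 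Each connected cluster $\mathbf W$ has a unique loop-set $\Gamma\subseteq\LL_A$, and $G_\Gamma$ is connected whenever $G_{\mathbf W}$ is. The reason is that the interaction graph of $\mathbf W$ only adds copies of vertices joined to each other, so its connectivity projects to that of $G_\Gamma$. Hence
\begin{equation*}
\sum_{\text{conn.}\,\mathbf W\leftarrow\LL_A}\phi_{\mathbf W}Z^{(A)}_{\mathbf W}=\sum_{\text{conn.}\,\Gamma\subseteq\LL_A}\mathfrak K^{(A)}(\Gamma),
\end{equation*}
where $\mathfrak K^A$ and $\mathfrak K$ are defined by \autoref{eq:Kdef} using the weights $Z^A_l$ and $Z_l$ respectively. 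The Möbius inversion of \autoref{thm:mobius-inversion} with \autoref{prop:mobius-subsets} then gives both cumulants the inclusion–exclusion form \autoref{eq:cumulant} in terms of the restricted partition functions $\Xi^A(B)$ and $\Xi(B)$. The only input this needs is $\log\Xi^{(A)}(B)=\sum_{\Gamma\subseteq B}\mathfrak K^{(A)}(\Gamma)$, which is the cluster expansion applied to the finite polymer system $B$.

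Next I localize. If $\supp(\Gamma)\cap A=\emptyset$, then every loop $l\in\Gamma$ avoids $A$, so $Z^A_l=Z_l$; both networks were expanded with the same messages and differ only on $A$. Consequently $\Xi^A(B)=\Xi(B)$ for every $B\subseteq\Gamma$, so $\mathfrak K^A(\Gamma)=\mathfrak K(\Gamma)$ and these terms cancel. Subtracting the two regrouped sums therefore leaves only connected $\Gamma$ with $\supp(\Gamma)\cap A\neq\emptyset$, and exponentiating gives the claimed formula. The same cancellation also shows that disconnected $\Gamma$ contribute nothing, consistent with $\mathfrak K=0$ on them.

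The main obstacle is justifying the regrouping and the term-by-term subtraction of two infinite series. Both rearrangements require absolute convergence of the cluster series on $\ZZ$ and on $\ZZ^A$ over $\LL_A$. I would assume decay of strings on $\LL_A$ for both networks, as in the footnote to \autoref{thm:localobsalgorithm}, and invoke \autoref{lem:cluster_tail}. That lemma gives absolute summability of $\sum|\phi_{\mathbf W}Z_{\mathbf W}|$ over clusters touching any fixed finite vertex set. Absolute summability licenses arbitrary regrouping by loop-set, and it also bounds $\sum_\Gamma|\mathfrak K(\Gamma)|$ by the corresponding cluster sum. On the full infinite (or size-$N$) graph the individual sums $\log\ZZ$ are extensive, so I would do the subtraction at the level of finite $N$ before exponentiating, where everything is absolutely convergent. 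Alternatively, the subtraction can be carried out directly on the clusters touching $A$, where convergence is uniform in $N$.
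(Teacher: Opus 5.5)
Your proposal is correct and follows essentially the paper's route: the paper obtains the corollary in one line by regrouping the cluster expansion of \autoref{prop:localobsexpansionformal} by loop-set, exactly as in \autoref{thm:clustercumulantexp}. Your variant regroups both global series and then cancels the cumulants with $\supp(\Gamma)\cap A=\emptyset$; this is equivalent to regrouping the already-localized sum, since $\supp(\mathbf W)=\supp(\text{loop-set}(\mathbf W))$, and the absolute-convergence justification you add via \autoref{lem:cluster_tail} is left implicit in the paper but is the right way to license the rearrangement.
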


Truncating the cluster expansion for a local observable at order $m$ produces the estimate, 
\begin{equation}\label{eq:OA-cc} 
    \expval{O_A}_m = \expval{O_A}_{\text{BP}} \cdot \exp{\sum_{\substack{\text{conn} \, \mathbf{W} \leftarrow \LL_A \\ 
    \supp(\mathbf{W}) \cap A \neq\emptyset \\ |\mathbf W| \leq m}} \phi_{\mathbf{W}} (Z^A_{\mathbf{W}} - Z_{\mathbf{W}})}
\end{equation}

We show now that given decay of excitations on $\LL_A$, this approximates the exact value up to relative error exponentially small with cluster size. 
\begin{theorem}\label{thm:localobsalgorithmformal}
    Given a local region $A \subset V$, assume decay of loops in $\LL_A$ on both $\ZZ=\inner{\psi}{\psi}$ and $\ZZ^A = \expect{\psi}{O_A}{\psi}$. Then, truncating the cluster expansion for local expectation values (in the setting of \autoref{prop:localobsexpansionformal}) leads to a relative error $\delta_m = |\expval{O_A} - \expval{O_A}_m|/|\expval{O_A}|$ bounded by 
    \begin{equation}
        \delta_m \leq \order{|A| e^{-(c-c_0)(m+1)} }
    \end{equation}
    where $d=c-c_0 = \order{1}$.
\end{theorem}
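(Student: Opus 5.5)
The plan is to write the exact value and the truncated estimate in the same exponential form and bound the difference of the exponents. By \autoref{prop:localobsexpansionformal} and \autoref{eq:OA-cc},
\begin{equation*}
\frac{\expval{O_A}_m}{\expval{O_A}} = \exp\left(-R_m\right), \qquad R_m := \sum_{\substack{\text{conn} \, \mathbf{W} \leftarrow \LL_A \\ \supp(\mathbf{W}) \cap A \neq\emptyset \\ |\mathbf W| > m}} \phi_{\mathbf{W}} \left(Z^A_{\mathbf{W}} - Z_{\mathbf{W}}\right).
\end{equation*}
The common prefactor $\expval{O_A}_{\text{BP}}$ cancels. Hence $\delta_m = |1 - e^{-R_m}|$. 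The proof therefore reduces to showing $|R_m| \le \order{|A| e^{-(c-c_0)(m+1)}}$. The elementary inequality $|1-e^{-z}| \le |z| e^{|z|}$ then gives $\delta_m \le \order{|A|e^{-(c-c_0)(m+1)}}$, since $|R_m| = \order{1}$ uniformly in $m$ for $|A|=\order{1}$.

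To bound $R_m$, I split it with the triangle inequality:
\begin{equation*}
|R_m| \le \sum_{|\mathbf W|>m,\ \mathbf W \cap A\neq\emptyset} |\phi_{\mathbf W} Z^A_{\mathbf W}| + \sum_{|\mathbf W|>m,\ \mathbf W\cap A\neq\emptyset} |\phi_{\mathbf W} Z_{\mathbf W}|.
\end{equation*}
Each term is a tail of connected clusters touching $A$, evaluated on a single network ($\ZZ^A$ or $\ZZ$), with excitation set $\LL_A$. By hypothesis, both networks satisfy $|Z_l|\le e^{-c|l|}$ with $c>c_0$ on $\LL_A$. I would then invoke \autoref{lem:cluster_tail} for each network separately. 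Each sum is bounded by $\order{|A| e^{-(c-c_0)(m+1)}}$, and adding the two gives the claim.

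The main obstacle is justifying that \autoref{lem:cluster_tail} applies verbatim to the enlarged excitation set $\LL_A$, which contains open strings terminating in $A$ and not only leafless loops. The lemma's proof rests on two ingredients. The first is a Koteck\'y--Preiss/tree-graph bound on $\sum |\phi_{\mathbf W}|$ over connected clusters containing a fixed site. The second is a counting bound: the number of connected excitations of weight $k$ through a fixed vertex is at most $\sim (2(\Delta-1))^k$ up to the $e^{1/2}$ factor absorbed into $c_0$. I would check that the counting bound still holds for strings in $\LL_A$. Such a string is still a connected edge-subgraph through a vertex of $A$, so it is enumerated by the same non-backtracking walk/spanning-tree argument. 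This gives at most $(2(\Delta-1))^k$ excitations of weight $k$ per starting vertex, possibly with a $\Delta$-dependent constant prefactor. The hard-core incompatibility structure is unchanged, so the Koteck\'y--Preiss criterion with $a(l)=|l|$ goes through as before. Finally, a union bound over starting vertices in $A$ produces the factor $|A|$, exactly as in the proof of the cluster-cumulant radius lemma.
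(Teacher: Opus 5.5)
Your proposal is correct and follows essentially the paper's proof. Both write $\delta_m=|e^{\pm R_m}-1|$, with $R_m$ the tail over connected clusters of weight $>m$ touching $A$, split that tail into its $\ZZ$ and $\ZZ^A$ evaluations, bound each with \autoref{lem:cluster_tail}, and finish with $e^{x}-1\le xe^{x}$. The paper writes $e^{+R_m}$ where your $e^{-R_m}$ is the correct sign, which is immaterial once one passes to $e^{|R_m|}-1$, and your check that the tail lemma's counting and Koteck\'y--Preiss ingredients carry over from $\LL$ to the string set $\LL_A$ covers a step the paper applies without comment.
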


\begin{proof}
    The relative error is by \autoref{prop:localobsexpansionformal}, 
\begin{equation}
    \delta_m := \left|\frac{\expval{O_A}_m - \expval{O_A}}{\expval{O_A}}\right| = \left|\exp{\sum_{\substack{\text{conn} \, \mathbf{W} \leftarrow \LL_A \\ 
    \supp(\mathbf{W}) \cap A \neq\emptyset \\ |\mathbf W| > m}} \phi_{\mathbf{W}} (Z^A_{\mathbf{W}} - Z_{\mathbf{W}})} - 1\right|
\end{equation}
Note that for any $x\in\C$,
\begin{align}
    |e^x - 1| &\leq \Big|\sum_{n \geq 1} \frac{x^n}{n!}\Big| \\
    &\leq \sum_{n\geq 1}\frac{|x|^n}{n!} \\ 
    &\leq e^{|x|} -1.
\end{align}
Now, we bound the relative error as follows,
\begin{align}
  \delta_m   &\leq \exp{\sum_{\substack{\text{conn} \, \mathbf{W} \leftarrow \LL_A \\ 
    \supp(\mathbf{W}) \cap A \neq\emptyset \\ |\mathbf W| > m}} |\phi_{\mathbf{W}} (Z^A_{\mathbf{W}} - Z_{\mathbf{W}})|} - 1 \\ 
    &\leq \prod_{X\in \{\emptyset, A\}} \exp{\sum_{\substack{\text{conn} \, \mathbf{W} \leftarrow \LL_A \\ 
    \supp(\mathbf{W}) \cap A \neq\emptyset \\ |\mathbf W| > m}} |\phi_{\mathbf{W}} Z^X_{\mathbf{W}}|} - 1 \\ 
    &\leq  \exp{r|A| e^{-(c-c_0)(m+1)}} - 1 \\ 
    &\leq r|A| e^{-(c-c_0)(m+1)} \exp{r|A| e^{-(c-c_0)(m+1)}} 
\end{align}
where $r=\order{1}$ is a constant and we used, 
\begin{equation}
    \sum_{\substack{\text{conn} \, \mathbf{W} \leftarrow \LL_A \\ 
    \supp(\mathbf{W})\ni A \\ |\mathbf W| > m}} |\phi_{\mathbf{W}} Z^X_{\mathbf{W}}| \leq \order{|A|e^{-(c-c_0)(m+1)}}
\end{equation}
by \autoref{lem:cluster_tail} and the fact that $e^x - 1 \leq xe^x$.

For large $m$ and $|A| =\order{1}$, the exponential prefactor is uniformly bounded by a constant $C=\order{1}$, and we get, 
\begin{equation}
    \delta_m \leq   \order{|A| e^{-(c-c_0)(m+1)} }
\end{equation}

\end{proof}


\smsubsection{Derivative of free energy}\label{sect:local-free}


In the derivative method, we instead consider the TN corresponding to $\expect{\psi}{\exp[\lambda O_A]}{\psi}$, 
and let $\ZZ_{\lambda}^A$ denote its contraction. We have the following additive version of the local observable expansion.
\begin{prop}\label{prop:localobsfreeformal}
Let $\psi$ be a state represented as a PEPS on $G=(V,E)$. For $A\subset V$, let $\ZZ_{\lambda}^A$ denote the TN for $\expect{\psi}{\lambda O_A}{\psi}$ with $\emph{supp}(O_A) \subseteq A$, while $\ZZ^A$, $\ZZ$ denote the TNs for $\expect{\psi}{O_A}{\psi}$ and $\inner{\psi}{\psi}$, respectively. Given a cluster $\mathbf{W}=\{(l_1,\eta_1),(l_2,\eta_2),...\}$, denote the cluster correction on $\ZZ_\lambda^A$, $\ZZ^A$, and $\ZZ$, by $\loopZ{\mathbf{W}}, Z_{\mathbf{W}}^A, Z_{\mathbf{W}}$, respectively. Also denote $\mathbf{W}_A$ the subset of loops within $\mathbf{W}$ which intersect $A$. Then
\begin{equation}
    \expval{O}_A = \partial_\lambda \log\ZZ_{\lambda}^A\mid_{\lambda = 0} \, = \expval{O_A}_{BP} + \sum_{\substack{\mathrm{conn} \, \mathbf{W} \leftarrow \LL_A \\ 
    \mathrm{supp}(\mathbf{W}) \cap A \neq\emptyset}} \phi_{\mathbf{W}} Z_{\mathbf{W}} \sum_{l \in \mathbf{W}_A} \eta_l \left[\frac{Z_l^A}{Z_l} - \expval{O_A}_{BP} \right].
\end{equation}
\end{prop}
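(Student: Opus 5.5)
The plan is to differentiate the cluster expansion of $\log \ZZ_\lambda^A$ term by term at $\lambda=0$. Throughout, we read the statement's $\ZZ_\lambda^A$ as the network for $\expect{\psi}{e^{\lambda O_A}}{\psi}$, matching \autoref{eq:derivative}. Since $e^{\lambda O_A}=\mathbbm{1}+\lambda O_A+\order{\lambda^2}$, we have $\ZZ^A_{\lambda=0}=\ZZ$ and $\partial_\lambda \ZZ^A_\lambda|_{0}=\ZZ^A$, so $\partial_\lambda\log\ZZ^A_\lambda|_0=\ZZ^A/\ZZ=\expval{O_A}$. This is the first equality.

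For the second equality, we expand $\ZZ_\lambda^A$ around the $\lambda=0$ fixed point $\MM$ of the norm network, exactly as in the proof of \autoref{prop:localobsexpansionformal}. Self-consistency fails only on $A$, so the excitation set is $\LL_A$. \autoref{prop:clusterexpformal} then gives
\[
\log\ZZ^A_\lambda=\log Z^A_{BP,\lambda}+\sum_{\text{conn. }\mathbf W\leftarrow\LL_A}\phi_{\mathbf W}\,\loopZ{\mathbf W}.
\]
Here each normalized loop weight is $\loopZ{l}=\tilde Z^A_{l,\lambda}/Z^A_{BP,\lambda}$, restricted in practice to the vertex factors in $\supp(l)$. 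We then take $\partial_\lambda$ in three steps.

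\textbf{Step 1: the BP term.} $Z^A_{BP,\lambda}=\prod_v Z^{(v)}_\lambda$, and only the factors with $v\in A$ depend on $\lambda$. Differentiating at $0$ gives $Z^A_{BP}/Z_{BP}=\expval{O_A}_{BP}$.

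\textbf{Step 2: clusters away from $A$.} If $\supp(\mathbf W)\cap A=\emptyset$, then $\loopZ{\mathbf W}$ does not depend on $\lambda$ and drops out. This leaves only clusters touching $A$.

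\textbf{Step 3: clusters touching $A$.} By the product rule,
\[
\partial_\lambda\loopZ{\mathbf W}=\sum_{l\in\mathbf W_A}\eta_l\,\Bigl(\prod_{l'}Z_{l'}^{\eta_{l'}-\delta_{ll'}}\Bigr)\,\partial_\lambda\loopZ{l}.
\]
For a single string $l$ meeting $A$, we differentiate numerator and denominator of $\tilde Z^A_{l,\lambda}/Z^A_{BP,\lambda}$. The numerator gives the string evaluated with $O_A$ inserted, normalized by the $\lambda=0$ BP factors; this is $Z^A_l$ in the convention of the statement. The denominator gives $-Z_l\,\expval{O_A}_{BP}$ by the same computation as Step 1. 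Hence
\[
\partial_\lambda\loopZ{l}\big|_0=Z^A_l-Z_l\expval{O_A}_{BP}=Z_l\Bigl[\tfrac{Z^A_l}{Z_l}-\expval{O_A}_{BP}\Bigr].
\]
Substituting this gives the stated formula.

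There is a bookkeeping point here. Strings in $\LL_A\setminus\LL$ have $Z_l=0$ at $\lambda=0$ but nonzero derivative. So the expression $Z_{\mathbf W}\,Z_l^A/Z_l$ must be read as $Z_l^A\prod_{l'}Z_{l'}^{\eta_{l'}-\delta_{ll'}}$, and we will state it in that form.

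The main obstacle is justifying the interchange of $\partial_\lambda$ with the infinite cluster sum. For this we would assume string decay on $\LL_A$ uniformly for $|\lambda|\le\lambda_0$, which holds by continuity since only the finitely many tensors on $A$ change. \autoref{lem:cluster_tail} then gives absolute convergence of the sum uniformly in a complex disc around $0$. Each $\loopZ{\mathbf W}$ is a polynomial in $\lambda$ up to the analytic normalization, so the series is a locally uniform limit of analytic functions. By Weierstrass's theorem, derivatives may be taken term by term, which gives the result.
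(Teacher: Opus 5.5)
Your proposal is correct and follows the paper's proof essentially step for step. You cluster-expand $\log\ZZ^A_\lambda$ on $\LL_A$ around the $\lambda=0$ fixed point, observe that only the BP factor at $A$ and clusters touching $A$ depend on $\lambda$, and apply the product rule over $\mathbf{W}_A$ together with the quotient rule for each string's BP normalization. You also carry the same caveat as the paper about open strings with $Z_l=0$. The one difference is that the paper explicitly merges a multi-site $A$ into a single supervertex so that the BP factorization at $A$ is well defined; your Step 1 implicitly needs this too.

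Your justification of termwise differentiation goes beyond the paper, which omits it. However, uniform string decay in $\lambda$ does not follow ``by continuity'' from decay on $\ZZ$ and $\ZZ^A$ if you insert $e^{\lambda O_A}$, since you would also need decay for insertions of the powers $O_A^n$. It does follow if you insert $\mathbbm{1}+\lambda O_A$, which the paper effectively does through $z_{A,\lambda}=z_A+\lambda z_A(O)$: then each string weight is $(Z_l+\lambda Z^A_l)/(1+\lambda\expval{O_A}_{BP})$, and decay of $Z_l$ and $Z^A_l$ gives uniform decay for small $|\lambda|$.
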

\begin{proof}
The first equality follows immediately from the definition of $\ZZ_\lambda^A$:
\begin{equation}
\partial_\lambda \log
\expect{\psi}{\exp{\lambda O_A}}{\psi}\mid_{\lambda = 0} \, = \frac{\expect{\psi}{O_A}{\psi}}{\bra{\psi}\psi\rangle} = \expval{O_A}.
\end{equation}
Next, we write the cluster expansion for $\log\ZZ_{\lambda}^A$, using the message background 
$\MM$ which is self-consistent for $\ZZ_{\lambda=0}$, and the appropriately modified excitation set ($\LL\rightarrow \LL_A$):
\begin{equation}
\log\ZZ_{\lambda}^A = \log \zbpl + \sum_{\mathrm{conn.} \, \mathbf{W} \leftarrow \LL_A} \phi_{\mathbf{W}} \loopZ{\mathbf{W}}.
\end{equation}

Here, $\loopZ{\mathbf{W}}$ is the partition function of the cluster (with anti-projectors defined on each edge according to the $\lambda=0$ BP background), but with each tensor locally rescaled according to the perturbed BP partition function $\zbpl$.

Without loss of generality, we can take $A$ to be a single site; if $A$ contains more than one vertex, we can merge them into a single ``supervertex.'' Define
\[
\begin{tikzpicture}[
  scale=0.8,
  baseline={([yshift=-0.65ex] current bounding box.center)},
  tensor/.style={
    draw,
    line width=0.9pt,
    fill=tensorcolor!80!black,
    rounded corners=2pt,
    minimum size=8mm
  },
  optensor/.style={
    draw,
    line width=0.9pt,
    fill=opmaroon,
    rounded corners=2pt,
    minimum size=8mm
  },
  bond/.style={line width=2.0pt}
]
  \def\L{0.95}   
  \def\a{0.45}   
  \def\stub{0.45}
  \def\r{0.14}

\def\dx{1.3}
\def\dy{1.3}
\def\shear{0.0}
  \node at (-2.0,0){ $z_v=$};
  \node[tensor] (T-2-2) at (0,0) {};

  \EndVecCirc{-\L,0}{\stub}{\r}{2}{}
    \EndVecCirc{\L,0}{\stub}{\r}{0}{}

  \EndVecCirc{0,\L}{\stub}{\r}{1}{}
  \EndVecCirc{0,-\L}{\stub}{\r}{3}{}

;

\node at (2.5,0){$, \quad z_v(O)=$};
  \node[optensor] (T-2-2) at (5,0) {};

  \EndVecCirc{-\L+5,0}{\stub}{\r}{2}{}
    \EndVecCirc{\L+5,0}{\stub}{\r}{0}{}

  \EndVecCirc{5,\L}{\stub}{\r}{1}{}
  \EndVecCirc{5,-\L}{\stub}{\r}{3}{}
;
\end{tikzpicture},
\]
the local BP factor at vertex $v$ with and without the insertion of an operator $O$, respectively. For notational convenience, let
\begin{equation}
    z_{v,\lambda} = \begin{cases}
        z_v + \lambda z_v(O) & v=A \\
        z_v & \mathrm{otherwise}.
    \end{cases}
\end{equation}
Then, the perturbed BP partition function is
\begin{equation}
    \zbpl = \prod_{v \in V} z_{v,\lambda} = (z_A + \lambda z_A(O))\prod_{v \in V / A} z_v, 
\end{equation}
and the associated cluster expansion is
\begin{align}
    \log\ZZ_{\lambda}^A &= \sum_{v \in V} \log z_{v,\lambda}   + \sum_{\text{conn} \, \mathbf{W} \leftarrow \mathcal{L}_A} \frac{\phi_{\mathbf{W}} \nloopZ{\mathbf{W}}}{\prod_{l \in \mathbf{W}} \prod_{v \in l} z_{v,\lambda}^{\eta_l}} \notag \\
    &= \sum_{v \in V} \log z_{v,\lambda}   + \sum_{\text{conn} \, \mathbf{W} \leftarrow \mathcal{L}_A} \frac{\phi_{\mathbf{W}} {\prod_{l \in \mathbf{W}} {\left(\nloopZ{l}\right)^{\eta_l}}}}{\prod_{l \in \mathbf{W}} \prod_{v \in l} z_{v,\lambda}^{\eta_l}}
\end{align}

where $\nloopZ{\mathbf{W}}$ denotes the unnormalized cluster weight, and $\nloopZ{l}$ the unnormalized loop correction for loop $l$.

We are now ready to take the derivative. The derivative acts trivially on all BP factors $z_{v,\lambda}$ except at $A$. It also acts trivially on any loop that does not touch $A$. Thus, the derivative is nonvanishing only for clusters with support on $A$. For these clusters, we will use $\mathbf{W}_{\bar{A}}$ to denote the subset of strings that do not intersect $A$; for $l\in \mathbf{W}_{\bar{A}}$, the normalized loop correction simplifies to
$$
\frac{\left(\nloopZ{l}\right)^{\eta_l}}{\prod_{v\in l} z_{v,\lambda}^{\eta_l}} = Z_l^{\eta_l}.$$
Thus, taking the derivative, we obtain,
\begin{subequations}
\begin{align}
    \partial_\lambda \ZZ^A_\lambda\mid_{\lambda=0} &= \frac{z_A(O)}{z_A} + \sum_{\substack{\text{conn} \, \mathbf{W} \leftarrow \LL_A \\ 
    \supp(\mathbf{W}) \cap A \neq\emptyset}} \phi_{\mathbf{W}} \frac{ \prod_{l \in \mathbf{W}_{\bar{A}}} Z_l^{\eta_l}} {\prod_{l \in \mathbf{W}_A} z_v^{\eta_l}} \left[\partial_\lambda \eval{\left(\prod_{l \in \mathbf{W}_A} \left(\nloopZ{l}\right)^{\eta_l}\right)}_{\lambda=0} - \prod_{l \in \mathbf{W}_A} Z_l^{\eta_l} \partial_\lambda \eval{\left(\prod_{l \in \mathbf{W}_A} z_{A,\lambda}^{\eta_l}\right)}_{\lambda=0}\right] \notag\\
    &\label{eq:derivative-evaluate}= \expval{O_A}_{BP} + \sum_{\substack{\text{conn} \, \mathbf{W} \leftarrow \LL_A \\ 
    \supp(\mathbf{W}) \cap A \neq\emptyset}} \phi_{\mathbf{W}} \sum_{l \in \mathbf{W}_A} \eta_l \left[Z_l^A (Z_l)^{\eta_l-1} \prod_{\substack{l'\in \mathbf{W} \\ l' \neq l}} (Z_{l'})^{\eta_{l'}} - Z_{\mathbf{W}} \expval{O_A}_{BP} \right] \\
    &\label{eq:derivative-evaluate2}= \expval{O_A}_{BP} + \sum_{\substack{\text{conn} \, \mathbf{W} \leftarrow \LL_A \\ 
    \supp(\mathbf{W}) \cap A \neq\emptyset}} \phi_{\mathbf{W}} Z_{\mathbf{W}} \sum_{l \in \mathbf{W}_A} \eta_l \left[\frac{Z_l^A}{Z_l} - \expval{O_A}_{BP} \right].
    \end{align}
    \end{subequations}
\end{proof}
Since $Z_l=0$ for any open string, the term
$Z_l^A (Z_l)^{\eta_l-1} \prod_{\substack{l'\in \mathbf{W} \\ l' \neq l}} (Z_{l'})^{\eta_{l'}}$ in~\autoref{eq:derivative-evaluate} can be non-vanishing only if all $l'$ are closed loops and either $l$ is a closed loop, or it has multiplicity 1. The term $Z_{\mathbf{W}} \expval{O_A}_{BP}$ can be non-vanishing only if $\mathbf{W} \leftarrow \LL$, i.e. consists of only closed loops. \autoref{eq:derivative-evaluate2} is more concise, but care must be taken in the case where $Z_l^A$ is nonvanishing but $Z_l$ is zero.

The convergence of this expansion follows from~\autoref{eq:derivative-evaluate} and the application of~\autoref{lem:cluster_tail}, as we now prove.
\begin{theorem}\label{thm:localobsalgorithmformal-derivative}
    Given a local region $A \subset V$, assume decay of loops in $\LL_A$ on both $\ZZ=\inner{\psi}{\psi}$ and $\ZZ^A = \expect{\psi}{O_A}{\psi}$. Then, truncating the cluster expansion (derivative version) for local expectation values (\autoref{prop:localobsfreeformal}) of an observable with with $\|O_A\|=1$ in a region $A\subset V$ leads to an additive error bounded by 
    \begin{equation}
        |\expval{O_A} - \expval{O_A}_m| \leq \order{m |A|e^{-d(m+1)}},
    \end{equation}
    where $d=c-c_0$.
\end{theorem}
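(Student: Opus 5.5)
The plan is to write the truncation error as the tail of the additive series in \autoref{prop:localobsfreeformal}, using the unsimplified form \autoref{eq:derivative-evaluate}, and then bound that tail term by term with \autoref{lem:cluster_tail}. Concretely, $\expval{O_A}_m$ is the BP value plus all connected clusters $\mathbf W\leftarrow\LL_A$ with $\supp(\mathbf W)\cap A\neq\emptyset$ and $|\mathbf W|\le m$. The error is therefore
\begin{equation}
|\expval{O_A}-\expval{O_A}_m| \le \sum_{\substack{\mathrm{conn}\,\mathbf W\leftarrow\LL_A\\ \supp(\mathbf W)\cap A\neq\emptyset\\ |\mathbf W|>m}} |\phi_{\mathbf W}|\sum_{l\in\mathbf W_A}\eta_l\Bigl( |Z^A_{\mathbf W^{(l)}}| + |\expval{O_A}_{BP}|\,|Z_{\mathbf W}|\Bigr),
\end{equation}
where $\mathbf W^{(l)}$ is the cluster obtained from $\mathbf W$ by evaluating exactly one copy of $l$ on $\ZZ^A$ and every other loop on $\ZZ$. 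The first step is to record $|\expval{O_A}_{BP}|=\order{1}$. This holds because $\|O_A\|=1$ and $\expval{O_A}_{BP}$ is a ratio of local BP contractions on $A$.

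Next, each mixed weight $Z^A_{\mathbf W^{(l)}}$ has to be controlled. By hypothesis both $\ZZ$ and $\ZZ^A$ satisfy decay on $\LL_A$, so each factor satisfies $|Z^A_l|\le e^{-c|l|}$ and $|Z_{l'}|\le e^{-c|l'|}$. Hence $|Z^A_{\mathbf W^{(l)}}|\le e^{-c|\mathbf W|}$, which is the same bound the cluster-tail argument uses for $Z_{\mathbf W}$. Since $\phi_{\mathbf W}$ depends only on the geometry of $\mathbf W$, the combinatorial part of the proof of \autoref{lem:cluster_tail} carries over unchanged. In other words, the lemma really only uses the pointwise bound $e^{-c|\mathbf W|}$ on the cluster weight, so it applies to these mixed weights.

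The remaining issue is the multiplicity factor $\sum_{l\in\mathbf W_A}\eta_l\le n_{\mathbf W}$. Every loop has at least one edge, so $n_{\mathbf W}\le|\mathbf W|$. I would therefore organize the tail by weight $k=|\mathbf W|>m$ and bound the order-$k$ shell as follows. The proof of \autoref{lem:cluster_tail} gives a per-shell bound $\sum_{|\mathbf W|=k}|\phi_{\mathbf W}|e^{-c|\mathbf W|}\le \order{|A|e^{-dk}}$. Multiplying by $k$ and summing gives
\begin{equation}
\sum_{k>m} k\,|A|e^{-dk}=\order{m|A|e^{-d(m+1)}},
\end{equation}
where the constant depends only on $d$.

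I expect the main obstacle to be that \autoref{lem:cluster_tail} is stated as a cumulative tail bound rather than a per-shell bound. I would first try to extract the shell bound from its proof, which is a Kotecký–Preiss-type estimate, by summing a geometric tail. If that fails, I would absorb the factor $k$ into the exponent by spending a small part of the decay rate, for example $k\le e^{\epsilon k}/(e\epsilon)$, and then apply the lemma with $c$ replaced by $c-\epsilon$. This costs an arbitrarily small loss in $d$ in exchange for avoiding the factor $m$; the stated $m$ prefactor suggests the authors take the shell route.

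One more point needs checking: clusters containing open strings with multiplicity greater than $1$, or with $Z_l=0$, must be handled through \autoref{eq:derivative-evaluate} and not \autoref{eq:derivative-evaluate2}. In that form each term is a product of weights each bounded by decay, so no division by a vanishing $Z_l$ ever occurs.
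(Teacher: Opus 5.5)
Your proposal is correct and follows essentially the paper's route. You write the error as the $|\mathbf W|>m$ tail of the additive series, bound the bracket by $\order{1}$ times a cluster weight using $|\expval{O_A}_{BP}|=\order{1}$, bound the multiplicity sum by $|\mathbf W|$, and sum a per-shell version of \autoref{lem:cluster_tail} against $k$. The paper simply restates the lemma per shell, and this is immediate because the order-$k$ shell is contained in the tail beyond order $k-1$, so the obstacle you anticipated does not arise.

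Working from \autoref{eq:derivative-evaluate} with explicit mixed weights is, if anything, slightly cleaner than the paper's use of \autoref{eq:derivative-evaluate2} with the bound $\order{\max(|Z_{\mathbf W}|,|Z^A_{\mathbf W}|)}$, since it never divides by a vanishing $Z_l$.
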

\begin{proof}
By~\autoref{eq:derivative-evaluate2}, the additive error is
\begin{equation}
 |\expval{O_A}_m - \expval{O_A}_m| = \left|\sum_{\substack{\text{conn} \, \mathbf{W} \leftarrow \LL_A \\ 
    \supp(\mathbf{W}) \cap A \neq\emptyset \\ |\mathbf{W}| > m}} \phi_{\mathbf{W}} Z_{\mathbf{W}} \sum_{l \in \mathbf{W}_A} \eta_l \left[\frac{Z_l^A}{Z_l} - \expval{O_A}_{BP} \right]\right|.
\end{equation}
We bound the sum by 
\begin{align}
&\sum_{\substack{\text{conn} \, \mathbf{W} \leftarrow \LL_A \\ 
    \supp(\mathbf{W}) \cap A \neq\emptyset \\ |\mathbf{W}| > m}} |\phi_{\mathbf{W}} Z_{\mathbf{W}}| \sum_{l \in \mathbf{W}_A} \eta_l \left|\frac{Z_l^A}{Z_l} - \expval{O_A}_{BP} \right| \notag \\
    &\leq \order{\sum_{\substack{\text{conn} \, \mathbf{W} \leftarrow \LL_A \\ 
    \supp(\mathbf{W}) \cap A \neq\emptyset \\ |\mathbf{W}| > m}} |\phi_{\mathbf{W}}|  \max(|Z_{\mathbf{W}}|, |Z^A_{\mathbf{W}}|) \sum_{l \in \mathbf{W}_A} \eta_l} \notag \\
    &\leq \order{\sum_{\substack{\text{conn} \, \mathbf{W} \leftarrow \LL_A \\ 
    \supp(\mathbf{W}) \cap A \neq\emptyset \\ |\mathbf{W}| > m}} |\phi_{\mathbf{W}}| \max(|Z_{\mathbf{W}}|,|Z^A_{\mathbf{W}}|) |\mathbf{W}|} 
\end{align}
In going from the first to the second line, we upper bounded $|Z_{\mathbf{W}}||Z_l^A/Z_l - \expval{O_A}_{BP}| \leq |Z_{\mathbf{W}} \cdot Z_l^A/Z_l| + |Z_{\mathbf{W}}| \cdot |\expval{O_A}_{BP}| \leq \order{\max(|Z_\mathbf{W}|, |Z^A_{\mathbf{W}}|)}$ since $|\expval{O_A}_{BP}| = |z_v(O_A) / z_v| \leq \order{1}$ (as $\|O_A\| =1$). In going from the second to the third line, we used the loose upper bound $\sum_{l\in \mathbf{W}_A} \eta_l \leq \sum_{l\in \mathbf{W}} \eta_l \leq \sum_{l\in \mathbf{W}} \eta_l |l| \leq |\mathbf{W}|$ since $|l| > 1$ by definition.

Now we invoke~\autoref{lem:cluster_tail}, which can be restated as
\begin{equation}
\sum_{\substack{\text{connected} \, \mathbf{W} \\ \supp(\mathbf{W}) \cap A \neq\emptyset \\ |\mathbf{W}| = m}} |\phi_{\mathbf{W}} Z_{\mathbf{W}}| \leq \order{|A|e^{-(c-c_0)m}}.
\end{equation}
 Therefore 
\begin{equation}
\sum_{\substack{\text{conn} \, \mathbf{W} \leftarrow \LL_A \\ 
    \supp(\mathbf{W}) \cap A \neq\emptyset \\ |\mathbf{W}| > m}} |\phi_{\mathbf{W}} Z_{\mathbf{W}}| |\mathbf{W}| 
        \leq \order{|A| \sum_{k=m+1}^\infty k e^{-(c-c_0) k}} \leq \order{m |A| e^{-d(m+1)}}.
\end{equation}
\end{proof}

\smsubsection{Region-based expansions}
Now we turn to the regions-based version of the cluster cumulant expansion, used to obtain ``order $k$'' approximations to local observables in~\autoref{sect:finiteT} of the main text.

The region-finding step is modified from~\autoref{alg:regions} to~\autoref{alg:regions2}. We assume that $A$ is a single vertex; otherwise, $A$ can be merged into a ``supervertex.'' The two main differences from~\autoref{alg:regions} are: (1) Only regions that contain $A$ are included, and $A$ can be a leaf of $R$ (line 1). (2) When taking intersections, $P$ is not required to be leafless, but all branches terminating on a vertex other than $A$ can be pruned (line 9)~\cite{welling2012,gray2025}.

\begin{algorithm}[hbtp]
\SetAlgoLined
\SetKwInOut{Output}{Output}
\SetKwInOut{Input}{Input}
\Input{Graph $G$, integer $k$, vertex $A$}
\Output{Partially ordered set of regions $\mathcal{R}_1,\mathcal{R}_2,..$ such that no region in $\mathcal{R}_i$ is contained within a region in $\mathcal{R}_{j>i}$}
$\mathcal{M} \gets$ connected, vertex-induced subgraphs of $G$ that contain up to $k$ vertices including $A$, and no other leaves\;
$\mathcal{R}_1 \gets$ $\{R \in \mathcal{M} \, | \, R \not\subset R' \,  \forall \, R'\in\mathcal{M}\}$    (maximal/parent regions)\;
$i \gets$ 1\; 
\While{$\mathcal{R}_i$ is not empty} {
$\mathcal{R}_{i+1} \gets \emptyset$\;
\For{$R \in \mathcal{R}_i$, $R' \in \mathcal{R}_{j\leq i}$} {
	$P$ $\gets$ intersection of $R$ and $R'$\;
    \If{$P$ is connected {\normalfont{\textbf{and}}} $P$ contains $A$} {
    $P \gets$ prune branches not ending on $A$\;
    \lIf{$P \notin \mathcal{R}_i$}{
    add $P$ to $\mathcal{R}_{i+1}$}}
}
$i \gets i+1$\; 
}

\Return $\{\mathcal{R}_1,...,\mathcal{R}_i\}$ \;
\caption{Region finding, local observables \label{alg:regions2}}
\end{algorithm}

Having obtained the pertinent regions, we turn to evaluating the derivative of the perturbed partition function. Here, we note that it is more convenient \textit{not} to normalize the tensor network by the BP partition function. We will use $\widetilde{\Xi}_\lambda^A(R)$ to denote the unnormalized partition function on region $R$, with the operator insertion $\exp(\lambda O_A)$. For maximal region size $k$, and an associated set of regions $\mathcal{R}$, we arrive at the approximation
\begin{equation}
    \partial_\lambda \ZZ_\lambda^A\mid_{\lambda = 0} \, \approx \sum_{R \in \mathcal{R}} b_k(R) \partial_\lambda \left(\log \widetilde{\Xi}_\lambda^A(R)\right)|_{\lambda = 0} \, = \sum_{R\in \mathcal{R}} b_k(R) \expval{O_A}_R.
\end{equation}
This is the form of the expansion used for the finite-temperature numerics in the main text. It is the ``weighted arithmetic mean'' used in Ref.~\cite{park2025}, dubbed the ``loop cluster sum formula'' in Ref.~\cite{gray2025}.

Ref.~\cite{gray2025} also considers the ``loop cluster product formula,'' interpreted as a ``weighted geometric mean'' in Ref.~\cite{park2025}. That formula can be seen as the top-down, region-based version of our ratio expansion (\autoref{eq:OA-cc}):
\begin{equation}
\expval{O_A} = \frac{\mathcal{Z}_{A}}{\mathcal{Z}} \approx \frac{\exp\left[\sum_R b_k(R) \log \widetilde{\Xi}_{A}(R)\right]}{\exp\left[\sum_R b_k(R) \log \widetilde\Xi(R)\right]} = \prod_R \left(\frac{\widetilde{\Xi}_{A}(R)}{\widetilde{\Xi}(R)}\right)^{b_k(R)} = \prod_R (\langle O_A \rangle_R)^{b_k(R)}.
\end{equation}

\smsubsection{A note on loop vs. string decay}
We conclude this section by commenting upon the relationship between loop decay in the norm network $\ZZ$ and the decay of excitations on $\ZZ^A$. 

Suppose the tensor network $\ZZ$ satisfies $c$-decay with $c>c_0$. Does the cluster expansion for an arbitrary local observable $O_A$ converge? To answer this question, consider a string $l$ in the excitation set $\LL_A$. To guarantee a convergent expansion, a sufficient condition is (1) $|Z_l| \leq 
\order{e^{-c|l|}}$ and (2) $|Z_l^{A}| \leq \order{e^{-c_A |l|}}$ with $c_A>c_0$.

If $|Z_l|$ is nonzero, the first inequality implies the second, with $c_A=c$: changing a local region does not modify the rate of exponential decay. However, if $|Z_l| = 0$, then (1) is trivially satisfied, but $|\loopZ{l}|$ need not decay for $\lambda \neq 0$. Open strings in $\LL_A / \LL$ present one example of this caveat: the fact that $|Z_l| = 0$ for such strings follows directly from the fixed point condition, but says nothing about the decay when the tensor network is perturbed.

Despite these caveats, in states with clustering of correlations, we \textit{do} expect loop decay on $\ZZ$ and string decay on $\ZZ^A$ to be parametrically similar. Indeed, we can think of $c-c_0$ as an effective ``string tension'' which is stable under local perturbations. For the same reason, the performance of the expansion will generally be insensitive to the choice of bounded observable. The exception is again when certain sets of loops/strings are identically zero: for example, in the PEPS of the 2D classical Ising model, $\expval{X}_{BP} = 0$ and all loop corrections vanish for any choice of fixed point, including ``poor'' fixed points. It would be interesting if one could observe different decay patterns of strings on $\LL_A$ for different choices of local observables. This may lead to certain observables being ``harder" to estimate than others, e.g., order parameters near a phase transition.
\smsection{Correlation Functions}

We can now apply a similar reasoning to connected correlation functions $\expval{O_AO_B}_c := \expval{O_AO_B} - \expval{O_A}\expval{O_B}$, followed by a generalization to $p$-pointed connected correlations.
\smsubsection{Ratio of partition functions}
\begin{prop}\label{prop:correlatorexpansionformal}
Let $\psi$ be a state represented as a PEPS on $G=(V,E)$. The ratio form of the cluster expansion for connected correlators is given as,
\begin{align}
\expval{O_AO_B}_c &= 
 \expval{O_A} \expval{O_B}
\left[
\exp\!\Biggl\{
\sum_{\substack{\emph{connected}\, \mathbf{W} \leftarrow \LL_{AB} \\ 
\emph{supp}(\mathbf{W}) \cap A \neq \emptyset \\ 
\emph{supp}(\mathbf{W}) \cap B \neq \emptyset}}
\phi_{\mathbf{W}}\left(Z^{AB}_\mathbf{W} + Z_\mathbf{W} - Z^{A}_\mathbf{W} - Z^{B}_\mathbf{W}\right)
\Biggr\}
- 1 \right]
\end{align}
where $Z_\mathbf{W}^X$ refers to evaluation of the cluster $\mathbf{W}$ on the network $\ZZ_{X} = \expect{\psi}{O_X}{\psi}$, where $O_X = O_{A(B)}$ for $X = A(B)$ and $O_X = O_AO_B$ for $X= AB$.
\end{prop}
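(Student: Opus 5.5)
The plan is to mimic the proof of \autoref{prop:localobsexpansionformal}, now applied to four tensor networks at once: $\ZZ$, $\ZZ_A$, $\ZZ_B$ and $\ZZ_{AB}$. All four are expanded around the same BP fixed point $\MM$ of the norm network, using the common excitation set $\LL_{AB}$. Strings in $\LL_{AB}\setminus\LL_X$ evaluate to zero on $\ZZ_X$, because the fixed-point condition still holds at every vertex outside the insertion region. Hence enlarging each excitation set to $\LL_{AB}$ is harmless. The starting identity is
\begin{equation}
\frac{\expval{O_AO_B}}{\expval{O_A}\expval{O_B}} = \frac{\ZZ_{AB}\,\ZZ}{\ZZ_A\,\ZZ_B},
\end{equation}
which follows from \autoref{eq:ratio} applied three times. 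It therefore suffices to show that the logarithm of the right-hand side equals the cluster sum in the exponent of the statement. Subtracting $1$ and multiplying by $\expval{O_A}\expval{O_B}$ then gives the claim.

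First, take logarithms and apply \autoref{prop:clusterexpformal} to each of the four networks. This gives $\log(\ZZ_{AB}\ZZ/\ZZ_A\ZZ_B)$ as a BP term plus $\sum_{\mathbf W}\phi_{\mathbf W}\bigl(Z^{AB}_{\mathbf W}+Z_{\mathbf W}-Z^A_{\mathbf W}-Z^B_{\mathbf W}\bigr)$, summed over all connected clusters on $\LL_{AB}$. For the BP term, $Z_{\rm BP}^X$ factorizes into local vertex factors $z_v$. Since $A$ and $B$ are disjoint, $Z_{\rm BP}^{AB}Z_{\rm BP}=Z_{\rm BP}^AZ_{\rm BP}^B$ exactly, so the BP term vanishes. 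This factorization is the multiplicative analogue of the fact that BP predicts zero connected correlation.

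Second, the cluster sum is reduced by a case analysis on $\supp(\mathbf W)$. If $\supp(\mathbf W)\cap B=\emptyset$, then every loop in $\mathbf W$ avoids $B$. Each normalized loop weight depends only on the tensors and messages along that loop, so $Z^{AB}_{\mathbf W}=Z^A_{\mathbf W}$ and $Z^B_{\mathbf W}=Z_{\mathbf W}$, and the combination cancels. The case $\supp(\mathbf W)\cap A=\emptyset$ is symmetric. Only clusters meeting both $A$ and $B$ survive, which is the stated restriction.

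The main obstacle is justifying locality of the cluster weights at the level of individual loops in the normalized network. I would handle it as in \autoref{prop:localobsfreeformal}: write each normalized loop weight $Z^X_l$ as an unnormalized weight divided by $\prod_{v\in l}z^X_v$. A loop disjoint from $B$ involves neither the tensors nor the local BP factors at $B$, so its weight is identical in $\ZZ_X$ and $\ZZ_{X\cup B}$. Since $Z_{\mathbf W}=\prod_i Z_{l_i}^{\eta_i}$ and $\phi_{\mathbf W}$ depends only on the geometry of $\mathbf W$, this loop-level equality lifts to clusters. The remaining requirement is absolute convergence of the four series, so that they can be rearranged and combined. I would take this as given in the formal setting, under the string-decay assumption on $\LL_{AB}$, via \autoref{lem:cluster_tail}. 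I would also note that $\expval{O_A}\expval{O_B}$ must be nonzero for the ratio to make sense.
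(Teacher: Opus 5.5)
Your proposal is correct and follows essentially the same route as the paper. Both arguments expand $\ZZ$, $\ZZ_A$, $\ZZ_B$ and $\ZZ_{AB}$ around the common norm-network fixed point with excitations in $\LL_{AB}$, use the BP factorization $Z^{AB}_{\mathrm{BP}}Z_{\mathrm{BP}}=Z^{A}_{\mathrm{BP}}Z^{B}_{\mathrm{BP}}$, and cancel clusters that miss $A$ or $B$ by locality of the loop weights. The difference is only bookkeeping: the paper applies the local-observable ratio expansion three times and factors out the one-sided pieces $f_A,f_B$, whereas you take the logarithm of the cross-ratio $\ZZ_{AB}\ZZ/(\ZZ_A\ZZ_B)$ directly, which is slightly cleaner.
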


\begin{proof}
    Now, note that $\expval{O_AO_B}_c = \expval{O_AO_B} - \expval{O_A}\expval{O_B}$. The individual parts are, by repeated application of \autoref{prop:localobsexpansionformal},
    \begin{align}
    \expval{O_A} &= \expval{O_A}_{\text{BP}} \cdot \exp{\left(\sum_{\substack{\text{connected} \, \mathbf{W} \leftarrow \LL_A \\ 
    \supp(\mathbf{W}) \cap A \neq\emptyset }} \phi_{\mathbf{W}} (Z^A_{\mathbf{W}} - Z_{\mathbf{W}})\right)} \\ 
    \expval{O_B} &= \expval{O_B}_{\text{BP}}\cdot \exp{\left(\sum_{\substack{\text{connected} \, \mathbf{W} \leftarrow \LL_B \\ 
    \supp(\mathbf{W}) \cap B \neq\emptyset }} \phi_{\mathbf{W}} (Z^B_{\mathbf{W}} - Z_{\mathbf{W}})\right)} \\ 
    \expval{O_AO_B} &= \expval{O_A}_{\text{BP}} \expval{O_B}_{\text{BP}} \cdot \exp{\left(\sum_{\substack{\text{connected} \, \mathbf{W} \leftarrow \LL_{AB} \\ 
   \supp(\mathbf{W}) \cap\{A,B\} \neq  \emptyset}} \phi_{\mathbf{W}} (Z^{AB}_{\mathbf{W}} - Z_{\mathbf{W}})\right)}
    \end{align}

We note the following,

\begin{align}
    \sum_{\substack{\text{connected} \, \mathbf{W} \leftarrow \LL_{AB} \\ \supp(\mathbf{W}) \cap\{A,B\} \neq  \emptyset}} &=  \textcolor{red}{\sum_{\substack{\text{connected} \, \mathbf{W} \leftarrow \LL_{AB} \\ \supp(\mathbf{W}) \cap\{A\} = \emptyset \\ \supp(\mathbf{W}) \cap\{B\} \neq \emptyset}}} + \textcolor{blue}{\sum_{\substack{\text{connected} \, \mathbf{W} \leftarrow \LL_{AB} \\ \supp(\mathbf{W}) \cap\{A\} \neq \emptyset \\ \supp(\mathbf{W}) \cap\{B\} = \emptyset}}}  + \sum_{\substack{\text{connected} \, \mathbf{W} \leftarrow \LL_{AB} \\ \supp(\mathbf{W}) \cap\{A\} \neq \emptyset \\ \supp(\mathbf{W}) \cap\{B\} \neq \emptyset}} \\  
    \sum_{\substack{\text{connected} \, \mathbf{W} \leftarrow \LL_{A} \\ \supp(\mathbf{W}) \cap\{A\} \neq  \emptyset}} &=  \textcolor{blue}{\sum_{\substack{\text{connected} \, \mathbf{W} \leftarrow \LL_{A} \\ \supp(\mathbf{W}) \cap\{A\} \neq \emptyset \\ \supp(\mathbf{W}) \cap\{B\} = \emptyset}}}  + \sum_{\substack{\text{connected} \, \mathbf{W} \leftarrow \LL_{A} \\ \supp(\mathbf{W}) \cap\{A\} \neq \emptyset \\ \supp(\mathbf{W}) \cap\{B\} \neq \emptyset}} \\  
    \sum_{\substack{\text{connected} \, \mathbf{W} \leftarrow \LL_{B} \\ \supp(\mathbf{W}) \cap\{B\} \neq  \emptyset }} &=  \textcolor{red}{\sum_{\substack{\text{connected} \, \mathbf{W} \leftarrow \LL_{B} \\ \supp(\mathbf{W}) \cap\{A\} = \emptyset \\ \supp(\mathbf{W}) \cap\{B\} \neq \emptyset}}}    + \sum_{\substack{\text{connected} \, \mathbf{W} \leftarrow \LL_{B} \\ \supp(\mathbf{W}) \cap\{A\} \neq \emptyset \\ \supp(\mathbf{W}) \cap\{B\} \neq \emptyset}} 
\end{align}
Since $\TT^{AB}$ differs from $\TT^{A(B)}$ only at $B(A)$, we get, 
\begin{align}
\expval{O_AO_B}_c &= 
\expval{O_A}_{\text{BP}} \expval{O_B}_{\text{BP}} 
\,\textcolor{blue}{f_A}\,\textcolor{red}{f_B}
\times \nonumber \\[2mm]
&\biggl[
\exp\!\Biggl\{
\sum_{\substack{\text{connected}\, \mathbf{W} \leftarrow \LL_{AB} \\ 
\supp(\mathbf{W}) \cap A \neq \emptyset \\ 
\supp(\mathbf{W}) \cap B \neq \emptyset}}
\phi_{\mathbf{W}}(Z^{AB}_\mathbf{W} - Z_\mathbf{W})
\Biggr\}
-  \\ &
\exp\!\Biggl\{
\sum_{\substack{\text{connected}\, \mathbf{W} \leftarrow \LL_A \\ 
\supp(\mathbf{W}) \cap A \neq \emptyset \\ 
\supp(\mathbf{W}) \cap B \neq \emptyset}}
\phi_{\mathbf{W}}(Z^{A}_\mathbf{W} - Z_\mathbf{W})
\Biggr\}
\exp\!\Biggl\{
\sum_{\substack{\text{connected}\, \mathbf{W} \leftarrow \LL_B \\ 
\supp(\mathbf{W}) \cap A \neq \emptyset \\ 
\supp(\mathbf{W}) \cap B \neq \emptyset}}
\phi_{\mathbf{W}}(Z^{B}_\mathbf{W} - Z_\mathbf{W})
\Biggr\}
\biggr]
\end{align}
with (this factors out since, e.g, for $\mathbf{W}$ supported exclusively on $A$ but not $B$, we have $Z_\mathbf{W}^{AB} = Z_\mathbf{W}^{A}$)
\begin{align}
   \textcolor{blue}{f_A} &= \exp{\textcolor{blue}{\sum_{\substack{\text{connected} \, \mathbf{W} \leftarrow \LL_{AB} \\ \supp(\mathbf{W}) \cap\{A\} \neq \emptyset \\ \supp(\mathbf{W}) \cap\{B\} = \emptyset}}}   \phi_{\mathbf{W}}(Z^{A}_\mathbf{W} - Z_\mathbf{W})} \\ 
   \textcolor{red}{f_B} &=\exp{  \textcolor{red}{\sum_{\substack{\text{connected} \, \mathbf{W} \leftarrow \LL_{AB} \\ \supp(\mathbf{W}) \cap\{A\} = \emptyset \\ \supp(\mathbf{W}) \cap\{B\} \neq \emptyset}}} \phi_{\mathbf{W}}(Z^{B}_\mathbf{W} - Z_\mathbf{W})}
\end{align}
Factoring out further the local terms, we thus arrive at,

\begin{align}
\expval{O_AO_B}_c &= 
 \expval{O_A} \expval{O_B}
\left[
\exp\!\Biggl\{
\sum_{\substack{\text{connected}\, \mathbf{W} \leftarrow \LL_{AB} \\ 
\supp(\mathbf{W}) \cap A \neq \emptyset \\ 
\supp(\mathbf{W}) \cap B \neq \emptyset}}
\phi_{\mathbf{W}}\left(Z^{AB}_\mathbf{W} + Z_\mathbf{W} - Z^{A}_\mathbf{W} - Z^{B}_\mathbf{W}\right)
\Biggr\}
- 1 \right]
\end{align}
where we used, 
\begin{align}
   \expval{O_A} &= \expval{O_A}_{\text{BP}} \cdot \exp{\left(\sum_{\substack{\text{connected} \, \mathbf{W} \leftarrow \LL_A \\ 
    \supp(\mathbf{W}) \cap A \neq\emptyset }} \phi_{\mathbf{W}} (Z^A_{\mathbf{W}} - Z_{\mathbf{W}})\right)} \\ 
    &\equiv \expval{O_A}_{\text{BP}} \cdot \textcolor{blue}{f_A} \cdot \exp\!\Biggl\{
\sum_{\substack{\text{connected}\, \mathbf{W} \leftarrow \LL_A \\ 
\supp(\mathbf{W}) \cap A \neq \emptyset \\ 
\supp(\mathbf{W}) \cap B \neq \emptyset}}
\phi_{\mathbf{W}}(Z^{A}_\mathbf{W} - Z_\mathbf{W})
\Biggr\}
\end{align}
and similarly for $O_B$.
\end{proof}

This can again be re-organized into a cluster-cumulant expansion as follows.

\begin{corollary}
Let $\psi$ be a state represented as a PEPS on $\Lambda$. The cumulant-cluster expansion for connected correlators is given as,
\begin{align}
\expval{O_AO_B}_c &= 
\expval{O_A} \expval{O_B}
\left[
\exp\!\Biggl\{
\sum_{\substack{\emph{conn.}\, \Gamma  \subseteq \LL_{AB} \\ 
\emph{supp}(\Gamma) \cap A \neq \emptyset \\ 
\emph{supp}(\Gamma) \cap B \neq \emptyset}}
\left(\mathfrak{K}^{AB}_\mathbf{W} + \mathfrak{K}_\mathbf{W} - \mathfrak{K}^{A}_\mathbf{W} - \mathfrak{K}^{B}_\mathbf{W}\right)
\Biggr\}
- 1 \right]
\end{align}
\end{corollary}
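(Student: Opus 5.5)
The plan is to derive this corollary directly from \autoref{prop:correlatorexpansionformal}, in the same way \autoref{thm:clustercumulantexp} was obtained from \autoref{prop:clusterexpformal}. The corollary changes only the exponent of \autoref{prop:correlatorexpansionformal}. The prefactor $\expval{O_A}\expval{O_B}$ and the bracket $[\exp\{\cdot\}-1]$ are untouched. It therefore suffices to show that
\begin{equation*}
\sum_{\substack{\text{conn.}\,\mathbf W\leftarrow\LL_{AB}\\ \supp(\mathbf W)\cap A\neq\emptyset\\ \supp(\mathbf W)\cap B\neq\emptyset}}\phi_{\mathbf W}\bigl(Z^{AB}_{\mathbf W}+Z_{\mathbf W}-Z^{A}_{\mathbf W}-Z^{B}_{\mathbf W}\bigr)
=\sum_{\substack{\text{conn.}\,\Gamma\subseteq\LL_{AB}\\ \supp(\Gamma)\cap A\neq\emptyset\\ \supp(\Gamma)\cap B\neq\emptyset}}\bigl(\mathfrak K^{AB}(\Gamma)+\mathfrak K(\Gamma)-\mathfrak K^{A}(\Gamma)-\mathfrak K^{B}(\Gamma)\bigr),
\end{equation*}
where $\mathfrak K^X$ is the cumulant of \autoref{eq:Kdef} built from the weights $Z^X_l$ on the network $\ZZ_X$.

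First, I will treat each of the four networks $X\in\{\emptyset,A,B,AB\}$ separately, all on the common excitation set $\LL_{AB}$. This is legitimate because strings not allowed in a given network (for instance, open ends on $B$ when $X=A$) simply have weight zero there, exactly as in the proof of \autoref{prop:localobsexpansionformal}. The cluster weights $Z^X_{\mathbf W}=\prod_i (Z^X_{l_i})^{\eta_i}$ then depend on the loop-set of $\mathbf W$ and its multiplicities, and $\phi_{\mathbf W}$ depends only on the cluster geometry, which is the same for all four networks.

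Next, I will regroup each connected-cluster sum by loop-set, $\sum_{\mathbf W}=\sum_{\Gamma}\sum_{\text{loop-set}(\mathbf W)=\Gamma}$, as in the derivation of \autoref{eq:cluster-to-cumulant}. This is valid because the support conditions are conditions on $\supp(\mathbf W)=\supp(\text{loop-set}(\mathbf W))$. They therefore pass to $\Gamma$ unchanged, and connectedness of $\mathbf W$ forces connectedness of $G_\Gamma$. Summing over the clusters with a fixed loop-set then turns each $X$-term into $\mathfrak K^X(\Gamma)$ by definition. I will also verify that the Möbius-inverted form \autoref{eq:cumulant}, applied with $\Xi^X$, reproduces this $\mathfrak K^X$, which follows from \autoref{eq:logxi-cluster} holding on each network.

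The main obstacle is justifying the rearrangement, since the regrouping exchanges infinite sums. I will invoke absolute convergence from \autoref{lem:cluster_tail}, applied on each of the four networks under the assumed string decay on $\LL_{AB}$. Because the four sums converge absolutely individually, they can be split, regrouped by loop-set, and recombined. A secondary check is that disconnected $\Gamma$ contribute $\mathfrak K^X(\Gamma)=0$ for every $X$, so restricting to connected $\Gamma$ loses nothing.
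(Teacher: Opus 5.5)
Your proposal is correct and follows the route the paper intends. The paper gives no separate proof beyond remarking that the exponent of \autoref{prop:correlatorexpansionformal} can be regrouped by loop-set on each of the four networks, exactly as in \autoref{thm:clustercumulantexp}, which is what you do. Your explicit justifications fill in details the paper leaves implicit: strings with open ends on the unperturbed region have zero weight, the support and connectedness conditions pass from $\mathbf W$ to $\Gamma$, and absolute convergence follows from \autoref{lem:cluster_tail} under string decay on $\LL_{AB}$.
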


The cluster expansion for connected correlators can then be used to show that decay of loops ensures clustering of bipartite correlations as follows.    

\begin{theorem}\label{thm:correlatorboundformal}
Decay of loops in $\LL_{AB}$ on the networks $\ZZ^{X}$ for $X\in \{\emptyset,A,B,AB\}$ ensures that bipartite connected correlators $\expval{O_AO_B}_c := \expval{O_AO_B} - \expval{O_A} \expval{O_B}$ for bounded local observables $O_{A(B)}$ with $\emph{supp}(O_{A(B)})\subseteq A(B)$ and $\|O_{A(B)}\|=1$ cluster, 
\[
\Bigl|\expval{O_AO_B}_c\Bigr| \leq \order{e^{-d(A,B) / \xi} }
\]
for a finite correlation length $\xi \le \order{(1/(c-c_0))}$ and $d(A,B)$ being the graph distance.
\end{theorem}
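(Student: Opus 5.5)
The plan is to start from the ratio form of the correlator expansion, Proposition~\ref{prop:correlatorexpansionformal}:
\[
\expval{O_AO_B}_c = \expval{O_A}\expval{O_B}\,\bigl[e^{S}-1\bigr],
\]
where $S$ is the sum over connected clusters $\mathbf{W}\leftarrow\LL_{AB}$ whose support meets both $A$ and $B$, each weighted by $\phi_{\mathbf{W}}\bigl(Z^{AB}_\mathbf{W}+Z_\mathbf{W}-Z^{A}_\mathbf{W}-Z^{B}_\mathbf{W}\bigr)$. Two simple inputs handle the prefactor. Since $\|O_{A(B)}\|=1$, we have $|\expval{O_A}\expval{O_B}|\le 1$. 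For the bracket, we use $|e^x-1|\le e^{|x|}-1\le |x|e^{|x|}$, exactly as in the proof of Theorem~\ref{thm:localobsalgorithmformal}. It therefore suffices to show
\[
\sum_{\mathbf{W}} |\phi_\mathbf{W}|\bigl(|Z^{AB}_\mathbf{W}|+|Z_\mathbf{W}|+|Z^{A}_\mathbf{W}|+|Z^{B}_\mathbf{W}|\bigr)\le \order{\min(|A|,|B|)\,e^{-(c-c_0)d(A,B)}} .
\]
Once this holds, $|S|=\order{1}$ uniformly, so the factor $e^{|S|}$ is a harmless constant.

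The key geometric observation is that a connected cluster touching both $A$ and $B$ must be large. Its support is a connected subgraph of $G$: the loops in $\mathbf{W}$ are connected subgraphs, and the interaction graph $G_\mathbf{W}$ is connected, so overlapping loops chain together. A connected subgraph containing a vertex of $A$ and a vertex of $B$ contains a path between them, so it has at least $d(A,B)$ edges. Hence $|\mathbf{W}|\ge \sum_i \eta_i|l_i|\ge d(A,B)$. The whole sum is therefore dominated by the tail of clusters with weight larger than $m=d(A,B)-1$ that are anchored at $A$. Lemma~\ref{lem:cluster_tail}, applied with region $A$ and cutoff $m$ separately to each of the four networks $X\in\{\emptyset,A,B,AB\}$ (each has loop decay on $\LL_{AB}$ by hypothesis), bounds each of the four pieces by $\order{|A|e^{-(c-c_0)d(A,B)}}$. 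By symmetry we can anchor at whichever of $A,B$ is smaller. Since $|A|,|B|=\order{1}$, this gives $|\expval{O_AO_B}_c|\le\order{e^{-d(A,B)/\xi}}$ with $\xi=1/(c-c_0)$.

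The main obstacle is justifying that Lemma~\ref{lem:cluster_tail} applies verbatim to the enlarged excitation set $\LL_{AB}$. Its strings may terminate (have degree-one vertices) in $A\cup B$, so the counting of connected clusters rooted at a vertex has to be checked for them. I would redo the Kotecký–Preiss-type counting on which the lemma rests. The number of connected subgraphs with $k$ edges containing a fixed vertex is still at most $(e(\Delta-1))^{k}$-type, i.e.\ $e^{c_0 k}$ up to the same constants, because the bound counts connected edge sets and never uses the minimum-degree-two condition. The Kotecký–Preiss criterion with weight $a(l)=|l|/2$, or with $|\mathrm{supp}(l)|$, then goes through unchanged.

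A secondary subtlety is that $Z_\mathbf{W}=0$ for clusters containing open strings, which only helps. We also need the decay hypothesis on all four networks, including strings ending on $B$ in $\ZZ^A$; this is exactly the stated assumption, so I would invoke it directly rather than trying to derive it from decay in $\LL$.
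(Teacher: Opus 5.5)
Your proposal is correct and follows essentially the same route as the paper's proof. Both start from the ratio form of Proposition~\ref{prop:correlatorexpansionformal}, use $|\expval{O_A}\expval{O_B}|\le 1$ and $|e^x-1|\le e^{|x|}-1\le |x|e^{|x|}$, note that any connected cluster meeting both $A$ and $B$ has weight at least $d(A,B)$, and apply Lemma~\ref{lem:cluster_tail} separately to each of the four networks $\ZZ^X$. Your cutoff $m=d(A,B)-1$ is the correct one; the paper's exponent $(c-c_0)(d(A,B)+1)$ misses clusters of weight exactly $d(A,B)$, which is harmless. Your aside on extending the tail lemma to strings ending in $A\cup B$ covers a point the paper simply assumes, with one caveat: a bare count of connected edge sets, growing like $(e(\Delta-1))^k$, need not reproduce $c_0=\log(2(\Delta-1))+\frac{1}{2}$ exactly and may shift the threshold by an $\order{1}$ amount, which does not affect the statement.
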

\begin{proof}
    We recall, by \autoref{prop:correlatorexpansionformal}
    \begin{align}
\expval{O_AO_B}_c &= 
\expval{O_A} \expval{O_B}
\left[
\exp\!\Biggl\{
\sum_{\substack{\text{connected}\, \mathbf{W} \leftarrow \LL_{AB} \\ 
\supp(\mathbf{W}) \cap A \neq \emptyset \\ 
\supp(\mathbf{W}) \cap B \neq \emptyset}}
\phi_{\mathbf{W}}\left(Z^{AB}_\mathbf{W} + Z_\mathbf{W} - Z^{A}_\mathbf{W} - Z^{B}_\mathbf{W}\right)
\Biggr\}
- 1 \right]
\end{align}

    Now,
\begin{align}
\Bigl|\expval{O_AO_B}_c\Bigr|
&\le \Big|\expval{O_A} \expval{O_B}\left[
\exp\!\Biggl\{
\sum_{\substack{\text{connected}\, \mathbf{W} \leftarrow \LL_{AB} \\ 
\supp(\mathbf{W}) \cap A \neq \emptyset \\ 
\supp(\mathbf{W}) \cap B \neq \emptyset}}
\left|\phi_{\mathbf{W}}\left(Z^{AB}_\mathbf{W} + Z_\mathbf{W} - Z^{A}_\mathbf{W} - Z^{B}_\mathbf{W}\right)\right|
\Biggr\}
- 1 \right] \Big|\\[1em]
&\le |\expval{O_A}| |\expval{O_B}|\left[
\exp\!\Biggl\{
\sum_{\substack{\text{connected}\, \mathbf{W} \leftarrow \LL_{AB} \\ 
\supp(\mathbf{W}) \cap A \neq \emptyset \\ 
\supp(\mathbf{W}) \cap B \neq \emptyset}}
\Bigl|\phi_{\mathbf{W}}\Bigr|\,
\Bigl(\,|Z^{AB}_\mathbf{W}| + |Z_\mathbf{W}| + |Z^{A}_\mathbf{W}| + |Z^{B}_\mathbf{W}|\,\Bigr)
\Biggr\}
- 1 \right] \\[1em]
&= \|O_A\| \|O_B\| \left[
\exp\!\Biggl\{
\sum_{X\in\{\emptyset,A,B,AB\}}
\sum_{\substack{\text{connected}\, \mathbf{W} \leftarrow \LL_{AB} \\ 
\supp(\mathbf{W}) \cap A \neq \emptyset \\ 
\supp(\mathbf{W}) \cap B \neq \emptyset}}
\Bigl|\phi_{\mathbf{W}} Z^X_\mathbf{W}\Bigr|
\Biggr\}
- 1 \right] \\[1em]
&= \left[
\prod_{X\in\{\emptyset,A,B,AB\}}
\exp\!\Biggl\{
\sum_{\substack{\text{connected}\, \mathbf{W} \leftarrow \LL_{AB} \\ 
\supp(\mathbf{W}) \cap A \neq \emptyset \\ 
\supp(\mathbf{W}) \cap B \neq \emptyset}}
\Bigl|\phi_{\mathbf{W}} Z^X_\mathbf{W}\Bigr|
\Biggr\}
- 1 \right]. \\[0.5em]
\end{align}
where we used $|e^x -1 | \leq e^{|x|} -1$, and the fact that $\|O_{A(B)}\| = 1$. We have that for all $X\in\{\emptyset, A,B,AB\}$ and for all excitations $l \in \LL_{AB}$, 
\begin{equation}
    |Z^X_l| \leq \order{ e^{-c|l|}}
\end{equation}
with $c>c_0$. Since all clusters with support on both $A$ and $B$ have weight at least $d(A,B)$, we get, 
\begin{align}
    \Bigl|\expval{O_AO_B}_c\Bigr| &\leq    \left[
\prod_{X\in \{\emptyset, A,B,AB\}}\exp\!\Biggl\{
r |AB| e^{-(c-c_0)(d(A,B)+1)}
\Biggr\}
- 1 \right] \\[1em]
&\leq   \left[
\exp\!\Biggl\{
4r|AB| e^{-(c-c_0)(d(A,B)+1)}
\Biggr\}
- 1 \right] \\[1em]
&\leq   4r |AB|e^{-(c-c_0)(d(A,B)+1)} \exp\!\bigl\{4r|AB| e^{-(c-c_0)(d(A,B)+1)}\bigr\}.
\end{align}

where $r=\order{1}$ is a constant and we used,
\begin{equation}
    \sum_{\substack{\text{conn} \, \mathbf{W} \leftarrow \LL_A \\ 
    \supp(\mathbf{W})\ni A \\ |\mathbf W| > m}} |\phi_{\mathbf{W}} Z^X_{\mathbf{W}}| \leq \order{|A|e^{-(c-c_0)(m+1)}}
\end{equation}
by \autoref{lem:cluster_tail} and the fact that $e^x - 1 \leq xe^x$.

\noindent
Since $e^{-(c-c_0)(d(A,B)+1)} \to 0$ exponentially as $d(A,B)\to\infty$, for all $d(A,B)$ large enough we have $4r e^{-(c-c_0)(d(A,B)+1)} \le 1$, and hence there is a $C = \order{|AB|}$ such that 
\[
\Bigl|\expval{O_AO_B}_c\Bigr| \leq    C e^{-(c-c_0) d(A,B)}.
\]
Thus, we have clustering of correlations,
\[
\Bigl|\expval{O_AO_B}_c\Bigr| \;\le\; \order{e^{-d(A,B)/\xi}}, 
\qquad \text{with } \xi \leq \order{\frac{1}{c-c_0}}.
\]

\end{proof}


\smsubsection{Derivative of free energy}

Here, we provide the additive version of cluster expansion for $p-$point connected correlation functions.

\begin{prop}\label{prop:correlatorexpansionformalderivative}
Let $\psi$ be a state represented as a PEPS on $G=(V,E)$. Given local disjoint regions $A_i \subset V$, and observables $O_{A_i}$ satisfying $\emph{supp}(O_{A_i}) \subseteq A_i$, we have the following cluster expansion for the $p-$point connected correlation functions,

\begin{equation}\label{eq:conn-corr-formal}
   \expval{O_{A_1}\dots O_{A_p}}_c = \sum_{\substack{\emph{conn.}\, \mathbf{W} \leftarrow \LL_{\mathbf{A}} \\
   \emph{supp}(\mathbf{W}) \cap A_i \neq \emptyset}} \phi_{\mathbf{W}} \partial_{\bm \lambda} Z^{\mathbf{A}}_{\mathbf{W},\bm{\lambda}}\Big|_{\bm{\lambda}=0}
\end{equation}
where we denote $\bm{\lambda} := (\lambda_1,\dots,\lambda_p)$, $\partial_{\bm{\lambda}} := \prod_i \partial_{\lambda_i}$, $\mathbf{A} = (A_1,\dots,A_p)$ and the cluster evaluation $Z^{\mathbf{A}}_{\mathbf{W},\bm{\lambda}}$ is done over the network $\ZZ_{\bm{\lambda}}^\mathbf{A} := \expect{\psi}{\exp(\sum_i O_{A_i}\lambda_i)}{\psi}$. The clusters $\mathbf{W}$ are constructed over $\LL_\mathbf{A}$, the set of connected sub-graphs required to be degree $\geq 2$ everywhere except in all regions $A_i$.

\end{prop}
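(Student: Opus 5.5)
The plan is to mirror the proof of \autoref{prop:localobsfreeformal}, replacing the single parameter $\lambda$ by $\bm\lambda=(\lambda_1,\dots,\lambda_p)$. First I will use the standard fact that the joint cumulant is a mixed derivative of the log-generating function:
\[
\expval{O_{A_1}\cdots O_{A_p}}_c=\partial_{\bm\lambda}\log\ZZ^{\mathbf A}_{\bm\lambda}\Big|_{\bm\lambda=0}.
\]
For commuting (disjointly supported) $O_{A_i}$ this is exactly the definition of the connected $p$-point function. The normalization $\braket{\psi}$ is $\bm\lambda$-independent, so it drops out under any derivative with $p\ge1$.

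Next, I will expand $\log\ZZ^{\mathbf A}_{\bm\lambda}$ using the BP fixed point $\MM$ of the unperturbed norm network. The insertion $\exp(\sum_i\lambda_iO_{A_i})=\prod_i\exp(\lambda_iO_{A_i})$ factorizes over the disjoint regions, and it breaks self-consistency only at vertices of $\bigcup_iA_i$. The surviving excitations are therefore exactly the subgraphs in $\LL_{\mathbf A}$, which have degree $\ge2$ outside $\bigcup_iA_i$. As in \autoref{prop:localobsexpansionformal}, the loop expansion holds with this excitation set, and \autoref{prop:clusterexpformal} applies to the polymer gas $\LL_{\mathbf A}$ with hard-core incompatibility. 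This gives
\[
\log\ZZ^{\mathbf A}_{\bm\lambda}=\sum_{v}\log z_{v,\bm\lambda}+\sum_{\text{conn. }\mathbf W\leftarrow\LL_{\mathbf A}}\phi_{\mathbf W}Z^{\mathbf A}_{\mathbf W,\bm\lambda},
\]
where $z_{v,\bm\lambda}$ depends only on $\lambda_i$ when $v\in A_i$ (each $A_i$ is merged into a supervertex) and is $\lambda$-independent otherwise.

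Now I apply $\partial_{\bm\lambda}$ term by term. The BP part $\sum_v\log z_{v,\bm\lambda}$ is a sum of functions each depending on at most one $\lambda_i$. Hence its mixed derivative vanishes when $p\ge2$; for $p=1$ it reproduces the $\expval{O_A}_{BP}$ term, and I would treat that case separately or absorb it by convention. For a cluster $\mathbf W$, the normalized weight $Z^{\mathbf A}_{\mathbf W,\bm\lambda}$ is a product over its loops. Each loop $l$ contributes a factor depending only on those $\lambda_i$ with $\supp(l)\cap A_i\neq\emptyset$, namely the unnormalized loop value divided by the local BP factors $z_{v,\bm\lambda}$ on its vertices. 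So if $\supp(\mathbf W)\cap A_i=\emptyset$ for some $i$, then $Z^{\mathbf A}_{\mathbf W,\bm\lambda}$ is independent of $\lambda_i$ and $\partial_{\lambda_i}$ kills it. Only clusters meeting every $A_i$ survive, which is exactly (\ref{eq:conn-corr-formal}).

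The main obstacle is justifying the interchange of $\partial_{\bm\lambda}$ with the infinite cluster sum. I would argue that, under string decay on $\LL_{\mathbf A}$ uniformly for $\bm\lambda$ in a small complex polydisc around $0$, \autoref{lem:cluster_tail} gives uniform absolute convergence. The series is then a uniform limit of analytic functions, so Cauchy estimates allow termwise differentiation. A secondary subtlety is normalization: the weights are rescaled by $\bm\lambda$-dependent $z_{A_i,\bm\lambda}$, which stay nonzero near $\bm\lambda=0$ provided $z_{A_i}\neq0$. This is the same point handled in \autoref{prop:localobsfreeformal}.
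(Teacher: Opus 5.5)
Your proposal is correct and follows the paper's proof: treat $\log \ZZ^{\mathbf A}_{\bm\lambda}$ as the cumulant generating function, cluster-expand it over $\LL_{\mathbf A}$ using the unperturbed BP background, and observe that $\partial_{\bm\lambda}$ annihilates every cluster that misses some $A_i$. You go slightly beyond the paper in two places. You note that the BP term $\sum_v \log z_{v,\bm\lambda}$ survives only for $p=1$, so the displayed formula needs the $\expval{O_A}_{BP}$ term added in that case. You also justify termwise differentiation via uniform convergence on a complex polydisc, which the paper leaves implicit.
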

\begin{proof}
    A standard result in statistical mechanics (see, e.g., Ref.~\cite{Friedli_Velenik_2017}) treats the free energy as a cumulant generating function, whose derivatives are the connected correlation functions:
    \begin{equation}
         \expval{O_{A_1}\dots O_{A_p}}_c = -\partial_{\bm{\lambda}} \FF_{\bm{\lambda}}^\mathbf{A}\Big|_{\bm{\lambda} = 0}
    \end{equation}
    where $\FF_{\bm{\lambda}}^\mathbf{A}:=-\log \expect{\psi}{\exp (\sum_i O_{A_i}\lambda_i)}{\psi}$. When $p=1$, this reduces to the expectation value $\expval{O_A}$, while for $p=2$, we recover $\expval{O_{A_1}O_{A_2}}_c = \expval{O_{A_1} O_{A_2}} - \expval{O_{A_1}} \expval{O_{A_2}}$.  Now, we apply the cluster expansion to $\FF_{\bm{\lambda}}^\mathbf{A}$. Owing to the $\bm{\lambda}$ derivatives, the only clusters with a non-zero contribution are those which intersect all of the regions $A_i$, leading to the condition $\supp(\mathbf W) \cap A_i \neq \emptyset $ for all $i\in[p]$.
\end{proof}

Following a similar approach to~\hyperref[sect:local-free]{Sec. S3.2}, we can explicitly evaluate~\autoref{eq:conn-corr-formal} for the most physically relevant case of $p=2$. A (tedious) calculation yields
\begin{align}
\expval{O_A O_B}_c &=\partial_{\lambda_a} \partial_{\lambda_b} \ZZ_{(\lambda_a,\lambda_b)}^{AB} \mid_{\lambda_a=\lambda_b=0} \, = \partial_{\lambda_a}\partial_{\lambda_b} \eval{\left(\sum_{\substack{\mathrm{conn.} \, \mathbf{W} \leftarrow \LL_{AB} \\ \mathrm{supp}(\mathbf{W}) \cap A \neq \emptyset \\ \mathrm{supp} (\mathbf{W}) \cap B \neq \emptyset}} \phi_{\mathbf{W}} \tloopZ{\mathbf{W}}\right)}_{\lambda_a=\lambda_b=0} \notag \\     &= \sum_{\substack{\mathrm{conn.} \, \mathbf{W} \leftarrow \LL_{AB} \\ \mathrm{supp}(\mathbf{W}) \cap A \neq \emptyset \\ \mathrm{supp} (\mathbf{W}) \cap B \neq \emptyset}} \phi_{\mathbf{W}} Z_{\mathbf{W}} \left[\sum_{l \in \mathbf{W}_{AB}} \eta_l \left(\frac{Z_l^{AB}}{Z_l} + (\eta_l-1) \frac{Z_l^A Z_l^B}{Z_l^2}\right) + \sum_{\substack{l \in \mathbf{W}_A \\ l' \in \mathbf{W}_B \\ l' \neq l}} \eta_l \eta_{l'} \frac{Z_l^A Z_{l'}^B}{Z_l Z_{l'}} - \sum_{\substack{l \in \mathbf{W}_A\\ l' \in \mathbf{W}_B}} \eta_l \eta_{l'} \expval{O_A}_{BP} \expval{O_B}_{BP}\right].
\end{align}
Here, $\mathbf{W}_A$ and $\mathbf{W}_B$ denote the subsets of loops in $\mathbf{W}$ that intersect $A$ and $B$, respectively, while $\mathbf{W}_{AB}$ denotes the subset of strings that intersect both $A$ and $B$. We emphasize that $\mathbf{W}_{AB}$ may be non-empty, as in the case of a cluster $\mathbf{W}=\{(l_A,1),(l_B,1)\}$ where $l_A(B)$ intersects $A(B)$ and the two strings intersect in the bulk. $Z_l^{AB}, Z_l^A,$ and $Z_l^{B}$ denote the normalized weights of loop $l$ on the tensor networks $\ZZ^{AB}, \ZZ^A, $ and $\ZZ^B$, respectively.

\smsection{Additional numerics}

Here we present additional numerical results on ground states and at finite temperature.

\smsubsection{Ground state iPEPS}
The iPEPS ground state is obtained through CTMRG-based variational optimization on a $C_4-$symmetric representation with bond dimension $D=4$ using \texttt{peps-torch}. The optimization is approximate, and the apparent critical point is obtained at $h_x=3.06$, vs. the high-precision estimate of $h_{x}^{c} = 3.044330(6)$ obtained via quantum Monte Carlo~\cite{Huang2020}. The message-passing fixed points are obtained through iterating the equation $\mu \to T\star \mu^{\otimes 3} / \| T\star \mu^{\otimes 3}\|_2$ until convergence, starting from the all-ones vector. To locate the unstable fixed point for $h_x \in (3.06, 3.2)$ [in the regime where the CTMRG magnetization vanishes but the BP magnetization does not], we employ a damped Newton search initialized using the fixed point at $h_x = 3.2$. The initialization consists of 500 random seeds drawn from a Gaussian cloud with standard deviation $0.5$ centered around this reference fixed point. We verify that the newly discovered (approximate) fixed point is unstable by perturbing it slightly and observing convergence towards the stable (symmetry-broken) fixed point under message-passing.

\smsubsection{Decay of loops}
In the main text, we claimed that odd-$|l|$ loops are strongly suppressed compared to even $|l|$ loops. This suppression comes from degree 3 vertices, which are necessarily present in odd-$|l|$ generalized loops. \autoref{fig:loops-supp}a-b shows the scaling of maximum-weight odd and even loops in the 3D and 2D TFIM, respectively, at fixed high temperature. For a given $|l|$, the loops of highest weight are those with the minimal number of degree-3 vertices, which is 0(1) for even(odd) loops. Thus, both parities scale approximately as $\tanh(\beta)^{|l|}$, with a constant prefactor that differs based on parity. As $h_x$ increases, the gap between odd and even loops narrows.

\begin{figure}[hbtp]
\includegraphics[width=\linewidth]{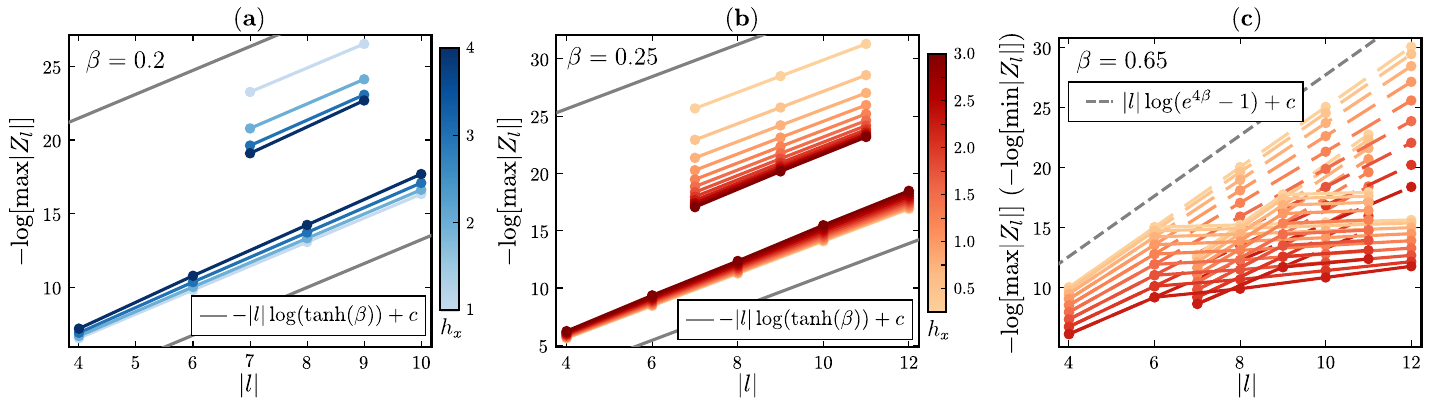}
\caption{\label{fig:loops-supp}Decay of loops around a BP fixed point in the transverse field Ising model. (a) 3D TFIM at inverse temperature $\beta=0.2$, transverse field $h_x=1,2,3,4$. (b) 2D TFIM at inverse temperature $\beta=0.25$, with $h_x$ ranging from 0.25 to 3 in increments of 0.25. In (a) and (b), the gray lines have slope $-\log(\tanh(\beta))$, which is the classical scaling of loops with respect to the paramagnetic fixed point. (c) 2D TFIM at $\beta=0.65$, which is in the ordered phase for the transverse fields shown (up to $h_x=2.25$). Solid and dashed lines indicate loops of maximal and minimal weight, respectively. Dashed gray line indicates the classical scaling with slope $\log(e^{4\beta}-1)$.}
\end{figure}

\autoref{fig:loops-supp}c shows the scaling of loops around the ferromagnetic fixed point, at low temperatures in the 2D TFIM. Here, two distinct scalings emerge. Maximal weight loops (solid lines) have the maximal number of degree-4 and degree-3 vertices, owing to the broken symmetry of the fixed point which favors the clustering of excitations. In the classical 2D Ising model, each degree-4 vertex carries a weight of $1-\frac{1}{2\sinh^2(2\beta)}$. Meanwhile, minimal-weight loops (dashed lines) have the maximal number of degree-2 vertices, and scale as $(e^{4\beta}-1)^{-|l|}$.

\smsubsection{Cluster-cumulant expansion}
\begin{figure}[hbtp]
\includegraphics[width=0.6\linewidth]{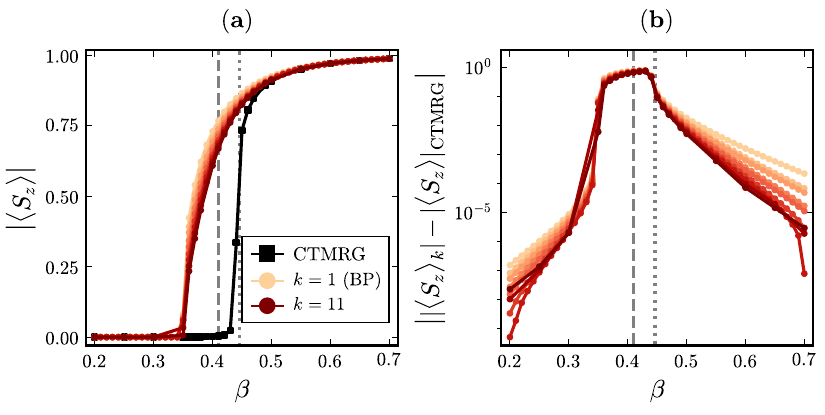}
\caption{Same as~\autoref{fig:finite-temp}b-c in the main text, but for $|\langle S_z\rangle|$ instead of $\expval{S_x}$. (a) shows $|\langle S_z\rangle|$ as a function of $\beta$ at $h_x=0.25$, evaluated using CTMRG (black squares), vs. increasing orders of the region-based cumulant cluster expansion (ranging from $k=1$ in light orange, to $k=11$ in dark red). (b) shows the additive error in the cumulant cluster expansion, using CTMRG as the ``ground truth.'' Dashed gray line marks the inverse temperature at which $\expval{S_x}_{\rm CTMRG}$ has a peak, while the dotted gray line is the inverse critical temperature $\beta_c \approx 0.446$. \label{fig:0.25-Z}}
\end{figure}
In the main text, we presented data on the region-based cluster cumulant expansion for the observable $S_x$ as a function of inverse temperature $\beta$. As a complement to~\autoref{fig:finite-temp}c-d, ~\autoref{fig:0.25-Z} shows the corresponding data for the observable $S_z$. All states were prepared with a small longitudinal field $h_z = 10^{-5}$, so that $\langle S_z\rangle$ is slightly positive in the paramagnetic phase. We chose this field to bias the ferromagnetic phase towards a definite sign in the magnetization, while keeping the true phase transition fairly sharp. At this value of the longitudinal field, both BP and CTMRG still sometimes converge toward a negative magnetization instead, and thus we plot $\left|\langle S_z\rangle\right|$. The error has a broad peak within the ``confusion regime'', reaching a maximum near the true critical temperature (dotteed line). We observe that the peak in $\expval{S_x}$ and its corresponding error occurs at a slightly higher temperature ($\beta = 0.41$, dashed line) than the actual phase transition (estimated as $\beta_c(0.25)\approx 0.446$ in Ref.~\cite{roughening}).

This general behavior -- peak in $\expval{S_x}$ coinciding with a peak in the $\left|\expval{S_x}_k - \expval{S_x}_{\mathrm{CTMRG}}\right|$ followed soon after by symmetry breaking in $\expval{S_z}$ coinciding with a peak in $\Big||\langle{S_z}\rangle_k| - |\langle{S_z}\rangle_\mathrm{CTMRG}|\Big|$ -- persists to larger transverse field. All four quantities are plotted for $g=1.5$ and $g=2.75$ in \autoref{fig:1.5} and~\autoref{fig:2.75}, respectively. 

\begin{figure}[hbtp]
\includegraphics[width=\linewidth]{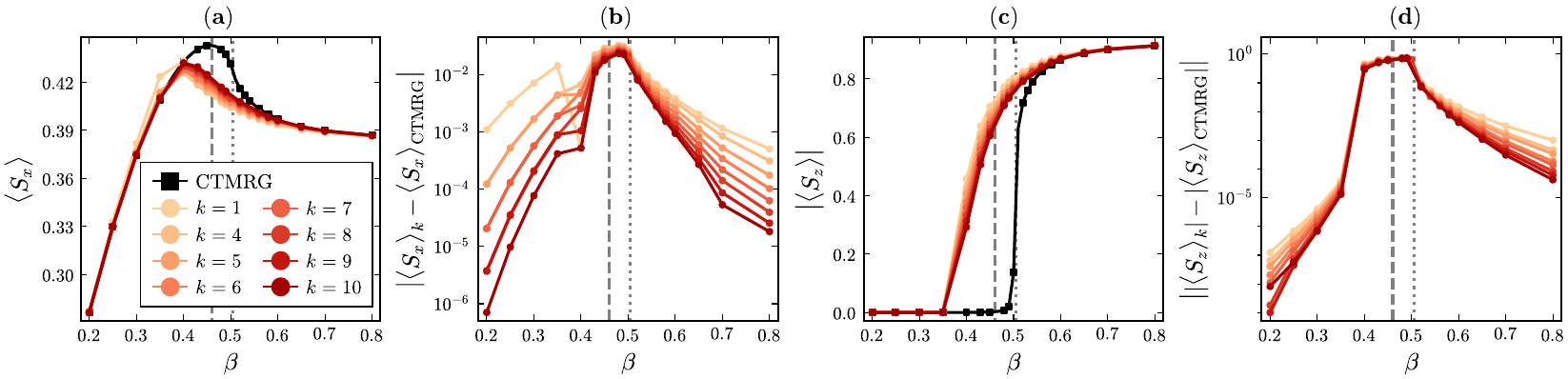}
\caption{Comparison of regions-based cluster cumulant expansion up to $k=10$ vs. CTMRG, on Gibbs states of the 2D TFIM at $h_x=1.5$. (a), (b): expectation values and error in $\expval{S_x}$. (c), (d): expectation values and error in $|\langle S_z\rangle|$. In all panels, the dashed line indicates the inverse temperature at which $\expval{S_x}_\mathrm{CTMRG}$ has a peak, while the dotted line is the inverse critical temperature from Ref.~\cite{roughening}.\label{fig:1.5}}
\end{figure}

\begin{figure}[hbtp]
\includegraphics[width=\linewidth]{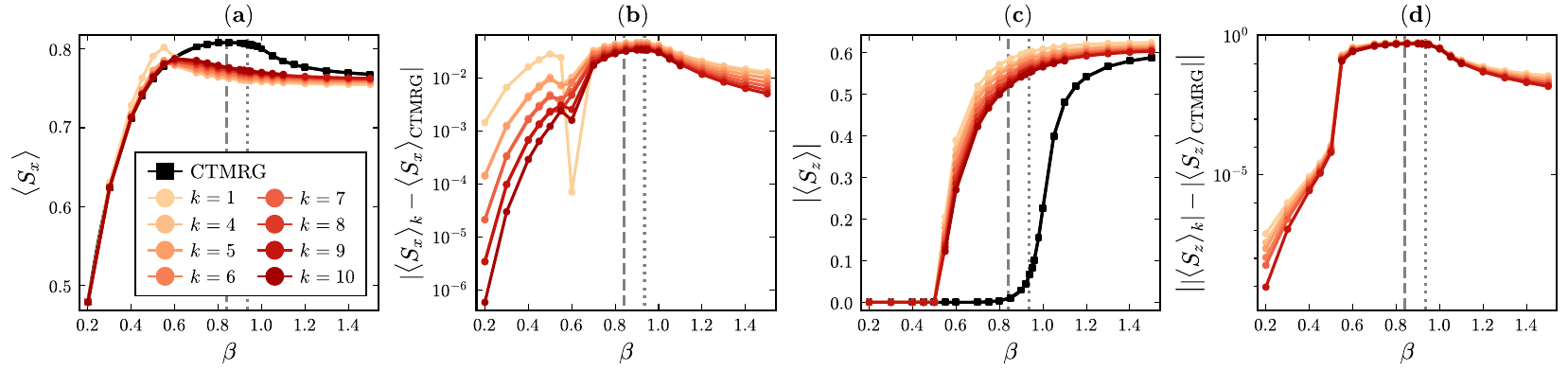}\caption{Same as~\autoref{fig:1.5}, but for $h_x=2.75$.\label{fig:2.75}}
\end{figure}
\end{document}